% !TeX spellcheck = en_US
\documentclass[11pt,letterpaper,reqno]{amsart}
\usepackage[leqno]{amsmath}
\usepackage{amsmath,amsfonts,amsthm,amssymb}
\usepackage{epsfig}
\usepackage{tikz}
\usepackage{graphicx}
\usepackage[labelfont=rm]{subcaption}
\usepackage{cite,url}
\usepackage{hyperref}
\pagestyle{plain} \pagenumbering{arabic}
\usepackage[left=1in,right=1in,top=1in,bottom=1in,foot=0.5in]{geometry}
\usepackage{amsthm}
\usepackage[T1]{fontenc}
\usepackage[utf8]{inputenc}
\usepackage{tikz}
\usepackage{float}
\usepackage{fancyhdr}
\usepackage[linesnumbered,ruled,noend,nofillcomment]{algorithm2e}
\usepackage{enumerate}
\setcounter{tocdepth}{3}
\usepackage{xcolor}
\usepackage{xspace}
\usepackage{todonotes}

%%%%%%%%%%%%%%%%%%%%%%%%%%%%%%% Pseudocode style  %%%%%%%%%%%%%%%%%%%%%%%%%%%%%%
\definecolor{mygreen}{RGB}{0,180,0}

\SetCommentSty{commcolor}
\SetArgSty{textnormal}

\SetKwInput{KwData}{Input}
\SetKwInput{KwResult}{Output}
\let\Input\KwData
\let\Output\KwResult
\SetKw{Stop}{stop}

\newcommand{\distenc}[1][alg_distenc]{\textsc{Distance\_Encoding}}
\newcommand{\distdec}[1][alg_distdec]{\textsc{Distance}}
\newcommand{\encStar}{\texttt{Enc\_Star}}
\newcommand{\distDecStar}{\texttt{Dist\_Star}}
\newcommand{\routDecStar}{\texttt{Rout\_Star}}
\newcommand{\distEncTree}{\texttt{Dist\_Enc\_Tree}}
\newcommand{\distDecTree}{\texttt{Dist\_Tree}}
\newcommand{\routEncTree}{\texttt{Rout\_Enc\_Tree}}
\newcommand{\routDecTree}{\texttt{Rout\_Tree}}
\newcommand{\routenc}[1][alg_routenc]{\textsc{Routing\_Encoding}}
\newcommand{\routdec}[1][alg_routdec]{\textsc{Routing}}
\DeclareMathOperator{\depth}{depth}

\SetKwProg{Fn}{function}{:}{}
\newenvironment{myFStyle}{}{}
\newenvironment{myFunction}{
\smallskip
\begin{myFStyle}
    \RestyleAlgo{plain}
    \LinesNotNumbered
    \SetAlgoNoLine
    \begin{algorithm}[H]
    }{
    \end{algorithm}
\end{myFStyle}
}
\SetKwFunction{DistanceSeparated}{Distance\_Separated}
\SetKwFunction{RoutingSeparated}{Routing\_Separated}
\SetKwFunction{DistanceANeighboring}{Distance\_2-Neighboring}
\SetKwFunction{RoutingANeighboring}{Routing\_2-Neighboring}
\SetKwFunction{DistanceNeighboring}{Distance\_1-Neighboring}
\SetKwFunction{RoutingNeighboringPtoC}{Routing\_Panel\_to\_Cone}
\SetKwFunction{RoutingNeighboringCtoP}{Routing\_Cone\_to\_Panel}
\SetKwFunction{RoutingConetoCone}{Routing\_Cone\_to\_Cone}

%%%%%%%%%%%%%%%%%%%%%%%%%%%%%  Results Environments %%%%%%%%%%%%%%%%%%%%%%%%%%%%
\theoremstyle{plain}
\newtheorem{theorem}{Theorem}[section]

\newtheorem{lemma}{Lemma}
\newtheorem{proposition}{Proposition}
\newtheorem{question}{Question}
\newtheorem{corollary}{Corollary}
\newtheorem{remark}[theorem]{Remark}

\newtheorem{claim}[theorem]{Claim}

\theoremstyle{definition}
\newtheorem{example}[theorem]{Example}

%%%%%%%%%%%%%%%%%%%%%%%%%%%%%%%%%%  vocabulary %%%%%%%%%%%%%%%%%%%%%%%%%%%%%%%%%
\newcommand{\separated}{separated\xspace}
\newcommand{\neighboring}{1-neighboring\xspace}
\newcommand{\aNeighboring}{2-neighboring\xspace}
\newcommand{\roommates}{roommates\xspace}

\newcommand{\cone}{cone\xspace}						% filament
\newcommand{\fiber}{fiber\xspace}	 				
\newcommand{\panel}{panel\xspace}					 % style
\newcommand{\cones}{\cone{}s\xspace}
\newcommand{\panels}{\panel{}s\xspace}
\newcommand{\fibers}{\fiber{}s\xspace}

\newcommand{\Cone}{\expandafter\MakeUppercase\cone}
\newcommand{\Style}{\expandafter\MakeUppercase\panel}
\newcommand{\Fiber}{\expandafter\MakeUppercase\fiber}

%%%%%%%%%%%%%%%%%%%%%%%%%%%%%%%%%%%%%%%%%%%%%%%%%%%%%%%%%%%%%%%%%%%%%%%%%%%%%%%%

\DeclareMathOperator{\fib}{fib}

\DeclareMathOperator{\tbound}{tbd}
\DeclareMathOperator{\imp}{imp}
\DeclareMathOperator{\gate}{gate}
\DeclareMathOperator{\neighbor}{neighbor}

\DeclareMathOperator{\closest}{proj}
\DeclareMathOperator{\pred}{pred}

\newcommand{\ceil}[1]{\lceil #1 \rceil}

\newcommand{\NN}{\mathbb N}

\newcommand{\mcg}{\mathcal G}

\newcommand{\dist}{d}
\DeclareMathOperator{\St}{St}

\newcommand{\eproj}[2]{{\Upsilon}{(#1,#2)}}

\DeclareMathOperator{\twin}{twin}
\DeclareMathOperator{\degree}{deg}
\DeclareMathOperator{\port}{port}

\newcommand{\LD}{\text{LD}} % distance labeling
\newcommand{\LR}{\text{LR}} % routing labeling
\newcommand{\conc}{\circ}

\newcommand{\id}{\text{id}}

\newcommand{\dls}{\text{DLS}\xspace}
\newcommand{\rls}{\text{RLS}\xspace}

\newcommand{\Lb}[2]{\text{L}_{#2}(#1)}
\newcommand{\LDt}[2]{\text{LD}_{#2}(#1)}
\newcommand{\LRt}[2]{\text{LR}_{#2}(#1)}
\newcommand{\lleft}{\text{1st}\xspace}
\newcommand{\rright}{\text{2nd}\xspace}
\newcommand{\st}{\text{St}\xspace}
\newcommand{\med}{\text{Cent}\xspace}
\newcommand{\distance}{\text{Dist}\xspace}
\newcommand{\rootB}{\text{gate}\xspace}
\newcommand{\gateD}{\text{gate\_LDT}\xspace}
\newcommand{\imprintD}{\text{imp\_LDT}\xspace}
\newcommand{\gateR}{\text{gate\_{LRT}}\xspace}
\newcommand{\imprintR}{\text{imp\_{LRT}}\xspace}
\newcommand{\toGate}{\text{toGate}\xspace}
\newcommand{\toImprint}{\text{toImp}\xspace}

    \newcommand{\fromGate}{\text{fromGate}\xspace}
\newcommand{\toMed}{\text{toCent}\xspace}
\newcommand{\fromMed}{\text{fromCent}\xspace}

\title{Distance and routing labeling schemes for cube-free median 
graphs\footnote{The short version of this paper appeared at MFCS 2019}}
\date{}

\begin{document}

\centerline{\Large\bf Distance and routing labeling schemes for cube-free
median graphs\footnote{The short version of this paper appeared at MFCS 2019}}

\vspace{10mm}
\centerline{Victor Chepoi, Arnaud Labourel, and S\'ebastien Ratel}

\medskip
\begin{small}
    \medskip
    \centerline{Aix Marseille Univ, Université de Toulon, CNRS, LIS, Marseille, France}

    \centerline{\texttt{\{victor.chepoi, arnaud.labourel,
    sebastien.ratel\}@lis-lab.fr}}
\end{small}

\bigskip\bigskip\noindent
{\footnotesize {\bf Abstract.}
    Distance labeling schemes are schemes that label the vertices of a graph
    with short labels in such a way that the distance between any two vertices
    $u$ and $v$ can be determined efficiently by merely inspecting the labels
    of $u$ and $v$, without using any other information. Similarly, routing
    labeling schemes label the vertices of a graph in a such a way that given
    the labels of a source node and a destination node, it is possible to
    compute efficiently the port number of the edge from the source that heads
    in the direction of the destination. One of important problems is finding
    natural classes of graphs admitting distance and/or routing labeling
    schemes with labels of polylogarithmic size.  In this paper, we show that
    the class of cube-free median graphs on $n$ nodes enjoys distance and
    routing labeling schemes with labels of $O(\log^3 n)$ bits.
}
\sloppy

\section{Introduction}
\label{sect_intro}

Classical network representations are usually global in nature. In order to 
derive a useful piece of information, one must access to a global data 
structure representing the entire network even if the needed information only
concerns few nodes. Nowadays, with networks getting bigger and bigger, the need
for locality is more important than ever. Indeed, in several cases, global
representations are impractical and network representation must be distributed.
The notion of (distributed) labeling scheme has been introduced
\cite{breuer1967unexpected,kannan1992implicat,Peleg00,Peleg05,GavoillePPR04}
in order to meet this need. A \emph{(distributed) labeling scheme} is a scheme
maintaining global information on a network using local data structures (or
labels) assigned to nodes of the network. Their goal is to locally store some
useful information about the network in order to answer a specific query
concerning a pair of nodes by only inspecting the labels of the two nodes.
Motivation for such localized data structure in distributed computing is
surveyed and widely discussed in \cite{Peleg00}. The predefined queries can
be of various types such as distance, adjacency, or routing.  The quality of a
labeling scheme is measured by the size of the labels of nodes and the time
required to answer queries. Trees with $n$ vertices admit adjacency and routing
labeling schemes with size of labels and query time $O(\log n)$ and distance
labeling schemes with size of labels and query time $O(\log^2n)$, and this is
asymptotically optimal. Finding natural classes of graphs admitting distance
labeling schemes with labels of polylogarithmic size is an
important and challenging problem.

A connected graph $G$ is {\it median} if any triplet of vertices $x,y,z$
contains a unique vertex
simultaneously lying on shortest $(x,y)$-, $(y,z)$-, and $(z,x)$-paths.
Median graphs constitute the most important class in metric graph theory
\cite{BaCh_survey}. This importance is explained by the bijections between
median graphs and discrete structures arising and playing important roles in
completely different areas of research in mathematics and theoretical computer
science:  in fact, median graphs, 1-skeletons of CAT(0) cube complexes from
geometric group theory \cite{Gr,Sa_survey}, domains of event structures from
concurrency \cite{WiNi}, median algebras from universal algebra \cite{BaHe},
and solution sets of 2-SAT formulae from complexity theory
\cite{MulderSch,Scha} are all the same. In this paper, we design a distance
labeling scheme for median graphs
containing no cubes. In our scheme, the labels have $O(\log^3 n)$ bits and
$O(1)$ query time. Our constant query time assumes the standard word-RAM model
with word size $\Omega(\log n)$.

We continue with the idea of the labeling scheme. It generalizes the distance 
labeling scheme for trees proposed by Peleg in \cite{Peleg00} and our work can 
be viewed in a sense as an answer to the question  ``How far can we take the 
scheme of Peleg?''. 
Let $G=(V,E)$ be a cube-free median graph with $n$ vertices. First, the 
algorithm computes a centroid (median) vertex $c$ of $G$ and the star $\St(c)$ 
of $c$ (the union of all edges and squares of $G$ incident to $c$). The star 
$\St(c)$ is gated, i.e., each vertex of $G$ has a unique projection (nearest 
vertex) in $\St(c)$. Therefore, with respect to the projection function, the 
vertex-set of $G$ is partitioned into fibers: the fiber $F(x)$ of $x\in \St(c)$ 
consists of all vertices $v\in V$ having $x$ as the projection in $\St(c)$. 
Since $c$ is a centroid of $G$, each fiber contains at most $\frac{n}{2}$ 
vertices. The fibers are also gated and are classified into panels and cones 
depending on the distance between their projections and $c$ (one for panels and 
two for cones).
Each cone has at most two neighboring panels however a panel may have an
unbounded number of neighboring cones.
Given two arbitrary vertices $u$ and $v$ of $G$, we show that
$d_G(u,v)=d_G(u,c)+d_G(c,v)$ for all locations of $u$ and $v$ in the fibers of
$\St(c)$ except the cases when $u$ and $v$ belong to neighboring cones and
panels, or $u$ and $v$ belong to two cones neighboring the same panel, or $u$
and $v$ belong to the same fiber. If $d_G(u,v)=d_G(u,c)+d_G(c,v)$, then
$d_G(u,v)$ can be retrieved by keeping $d_G(u,c)$ in the label of $u$ and
$d_G(v,c)$ in the label of $v$. If $u$ and $v$ belong to the same fiber $F(x)$,
the computation of $d_G(u,v)$ is done by recursively partitioning the cube-free
median graph $F(x)$ at a later stage of the recursion. In the two other cases, 
we show that $d_G(u,v)$ can be retrieved by keeping in the labels of vertices 
in all cones the distances to their projections on the two neighboring panels. 
It turns out (and this is the main technical contribution of the paper), that 
for each panel $F(x)$, the union of all projections of vertices from 
neighboring cones on $F(x)$ is included in an isometric tree of $G$ and that 
the vertices of the panel $F(x)$ contain one or two projections in this tree. 
All such outward and inward projections are kept in the labels of respective 
vertices. Therefore, one can use distance labeling schemes for trees to deal 
with vertices $u$ and $v$ lying in neighboring fibers or in cones having a
common neighboring panel. Consequently, the size of the label of a vertex $u$ 
on each recursion level is $O(\log^2 n)$. Since the recursion depth is $O(\log 
n)$, the vertices of $G$ have labels of size $O(\log^3 n)$. The distance 
$d_G(u,v)$ can be retrieved by finding the first time in the recursion  when 
vertices $u$ and $v$ belong to different fibers of the partition. Consequently, 
the main result of the paper is the following theorem:

\begin{theorem} 
    \label{thm_dist_labeling}
    There exists a distance labeling scheme that constructs in $O(n\log n)$ 
    time labels of size $O(\log^3 n)$  of the vertices of a cube-free median 
    graph $G=(V,E)$.
    Given the labels of two vertices $u$ and $v$ of $G$, it computes in constant
    time the distance $\dist_G(u,v)$ between $u$ and $v$.
\end{theorem}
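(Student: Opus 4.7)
The plan is to follow the recursive centroid decomposition sketched in the introduction. First I would invoke the known fact that any finite median graph admits a centroid (median) vertex $c$ whose star $\St(c)$ is gated, and such that every fiber $F(x) = \{v : \mproj{v}{\St(c)} = x\}$ has at most $n/2$ vertices; moreover each fiber is itself gated, hence is itself a cube-free median graph. This is what enables an honest recursion of depth $O(\log n)$. I would then classify fibers as panels (centered at vertices of $\St(c)$ adjacent to $c$) or cones (centered at vertices of $\St(c)$ at distance two from $c$), and describe the adjacency structure: each cone has at most two neighboring panels, while a panel may have arbitrarily many neighboring cones.

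Next I would establish the key distance decomposition: for vertices $u \in F(x)$ and $v \in F(y)$ with $x \neq y$, we have $\dist_G(u,v) = \dist_G(u,c) + \dist_G(c,v)$ except in the three listed exceptional configurations (neighboring cone/panel, two cones sharing a neighboring panel, same fiber). In the generic case it then suffices to store $\dist_G(v,c)$ and an identifier of the fiber of $v$ in $v$'s label at the current recursion level. The same-fiber case is deferred to the recursive call on $F(x)$. For the two remaining exceptional cases I would use the main technical ingredient announced in the introduction: for every panel $F(x)$, the union of the projections onto $F(x)$ of the vertices of all neighboring cones is contained in an isometric tree $T_x$ of $G$, and every vertex of $F(x)$ has at most two projections on $T_x$. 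Since $T_x$ is isometric, distances between vertices whose relevant projections lie on $T_x$ can be read off a standard Peleg-type distance labeling scheme for trees, which uses $O(\log^2 n)$-bit labels and answers queries in $O(1)$ time. Each cone vertex $u$ would store its projection onto $T_x$ for each neighboring panel $F(x)$ together with the distance from $u$ to that projection; each panel vertex would store its one or two projections onto $T_x$. Combining the tree label with the stored distances then yields $\dist_G(u,v)$ in $O(1)$ time in all three exceptional configurations.

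Finally I would assemble the label of a vertex $v$ as the concatenation of the $O(\log n)$ per-level sublabels produced along the recursive partition that contains $v$. Each per-level sublabel has size $O(\log^2 n)$ (tree labels plus a constant number of distances and fiber identifiers encoded on $O(\log n)$ bits), so the total label size is $O(\log^3 n)$. To decode $\dist_G(u,v)$ from the labels of $u$ and $v$, one finds the first recursion level where $u$ and $v$ lie in distinct fibers; with a careful encoding of the fiber identifiers (for instance as a bit string whose longest common prefix determines the separating level) this can be done in constant time in the word-RAM model, after which the appropriate case formula is applied. The construction cost is dominated by computing centroids, stars, fibers, and the isometric trees $T_x$ at every level of the recursion, which can be shown to run in $O(n\log n)$ total time. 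The most substantial obstacle is establishing the isometric-tree property of the union of cone-projections onto a panel; this is where the cube-freeness is used crucially, since the absence of 3-cubes is exactly what prevents this union from branching into a higher-dimensional cubical piece and forces it into a tree-like shape embeddable isometrically in $G$. Verifying the $O(1)$ query time—in particular that the separating level and the correct case can be identified by pure word-level bit manipulation—will also need some care but is routine once the labels are properly formatted.
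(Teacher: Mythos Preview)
Your proposal is correct and follows essentially the same approach as the paper: recursive centroid decomposition of $G$ via gated stars, the panel/cone classification of fibers, the fact that the total boundary of each panel is an isometric tree that is quasigated (your ``at most two projections on $T_x$''), and Peleg-style tree labels at each of the $O(\log n)$ levels. The only implementation detail where the paper diverges from your sketch is the $O(1)$ detection of the separating recursion level: your longest-common-prefix idea on fiber-identifier strings spans $\Theta(\log n)$ machine words and is not obviously constant-time, whereas the paper instead builds the tree of recursive calls explicitly and gives each vertex a nearest-common-ancestor-depth label in that tree (via a tree distance labeling), reducing the problem to a single $O(1)$ tree query.
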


With the same ideas, it is possible to adapt our technique to design a routing 
labeling scheme.

\begin{theorem}
    \label{thm_rout_labeling}
    There exists a routing labeling scheme that constructs in $O(n\log n)$ time 
    labels of size $O(\log^3 n)$ of the vertices of a cube-free median graph 
    $G=(V,E)$. Given the labels of two vertices $u$ and $v$, it computes in 
    constant time a port of $u$ to a neighbor of $u$ on a shortest path to $v$.
\end{theorem}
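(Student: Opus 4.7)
\textbf{Proof proposal for Theorem~\ref{thm_rout_labeling}.}
The plan is to piggyback on the recursive centroid decomposition already used for Theorem~\ref{thm_dist_labeling}, enriching each level's label with the port information needed to recover the first edge of a shortest path. At each of the $O(\log n)$ recursion levels, the routing label of a vertex $u$ will store: (a) the fiber-identifying data already used in the distance scheme; (b) the port at $u$ of an edge starting a shortest $(u,\gate(u,\St(c)))$-path, well-defined since $\St(c)$ is gated; (c) $O(\log n)$-bit tree-routing labels (using, e.g., the Thorup--Zwick or Fraigniaud--Gavoille tree routing scheme) for the projection of $u$ onto the isometric tree that sits inside its own panel if $u$ belongs to a panel, or onto the isometric trees of the at most two panels neighboring $u$'s cone if $u$ belongs to a cone; and (d) for each such projection, the port at $u$ of an edge starting a shortest path towards it. Each item costs $O(\log n)$ bits per level, giving a total label size of $O(\log^3 n)$; the construction time of $O(n\log n)$ is inherited because the centroid decomposition, the isometric trees of Theorem~\ref{thm_dist_labeling}, and the tree-routing labelings on them are each computable within this budget.

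For decoding, given $u$ and $v$, we locate, exactly as in the distance scheme, the first recursion level $\ell$ at which $u$ and $v$ lie in distinct fibers of the current subgraph, and carry out the case analysis there. If $\dist_G(u,v)=\dist_G(u,c)+\dist_G(c,v)$, then a shortest $(u,v)$-path leaves $u$ towards $\St(c)$: we return the port stored in (b) if $u\notin\St(c)$, and otherwise we use the explicit description of $\St(c)$ (kept in the labels of its vertices, which is cheap since $\St(c)$ has a very simple structure) to recover the next edge along $\St(c)$ towards $\gate(v,\St(c))$. If $u$ is in a cone $F(x_u)$ and $v$ is either in a neighboring panel or in another cone sharing a neighboring panel with $F(x_u)$, a shortest $(u,v)$-path first routes $u$ to its projection on the common isometric tree $T$ of that panel: the output is the port stored in (d) if $u\notin T$, and otherwise it is recovered by a single tree-routing query in $T$ from $u$'s tree label to that of $v$ (or of $v$'s projection), combined with $u$'s port table.

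The main obstacle is to certify that the edge returned in each case actually starts a shortest $(u,v)$-path in $G$. This reduces to two structural facts, both already central to Theorem~\ref{thm_dist_labeling}: fibers are gated, so whenever the shortest path must exit $u$'s current fiber it necessarily begins with the unique edge at $u$ towards its gate in that fiber; and the projections of cone vertices onto a neighboring panel all lie in a common isometric tree $T$ of $G$, so routing between vertices in neighboring cone-panel or cone-cone configurations is captured by tree-routing in $T$, which the $O(\log n)$-bit tree-routing scheme handles. Once these two invariants are verified level by level, the same argument as for the distance scheme shows that routing terminates after the recursion locates the critical level, delivering in constant time the port of a shortest-path edge at $u$.
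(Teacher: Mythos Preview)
Your high-level plan matches the paper's: reuse the centroid decomposition and the isometric total-boundary trees, and at each level store a few ports in addition to the distance data. However, your case analysis in the decoding step is incomplete, and the missing case is precisely the one that requires a new idea beyond the distance scheme.

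You treat only two non-roommate situations: (i) $\dist_G(u,v)=\dist_G(u,c)+\dist_G(c,v)$, and (ii) $u$ lies in a cone and $v$ lies in a neighboring panel or in a cone sharing a panel with it. You omit the asymmetric case where $u$ lies in a \emph{panel} $F(x)$ and $v$ lies in a neighboring \emph{cone} $F(y)$. Even granting that your stored imprint ports and tree-routing labels let you move $u$ along $\partial^*F(x)$ towards the gate $v^+$ of $v$ in $F(x)$, the scheme breaks at the moment $u=v^+$: the next edge of any shortest $(u,v)$-path leaves the panel and enters the cone $F(y)$, but a panel vertex may be adjacent to \emph{arbitrarily many} cones, so $u$'s label cannot afford to store the port into each of them. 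The paper resolves this by storing $\port(v^+,\twin(v^+))$ in $v$'s label (a cone vertex has only two neighboring panels, so this costs $O(\log n)$ bits). Your items (a)--(d) contain no analogue of this ``twin'' port, and without it the panel-to-cone step cannot be completed. A secondary point: to decide \emph{which} imprint $u_1$ or $u_2$ to head for in this same case, you must compare $\dist_G(u,u_i)+\dist_{\partial^*F(x)}(u_i,v^+)$, which needs tree \emph{distance} labels (of size $O(\log^2 n)$), not just the $O(\log n)$-bit tree \emph{routing} labels you list in (c); the paper's routing scheme therefore carries the full distance-scheme data as well.
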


\section{Preliminaries}
\label{sect_prelim}

\subsection{Basic notions}
\label{sect_basics}

In this subsection, we recall some basic notions from graph theory. All graphs
$G=(V,E)$ occurring in this note are undirected, simple, and connected. In our
algorithmic results we will also suppose that they are finite.
The \emph{closed neighborhood} of a vertex $v$ is denoted by $N[v]$ and 
consists of $v$ and the vertices adjacent to $v$. The \emph{(open) 
neighborhood} $N(v)$ of $v$ is the set $N[v]\setminus \{ v\}$.  The 
\emph{degree} $\degree(v)$ of $v$ is the number of vertices in its open
neighborhood. We will write $u\sim v$ if two vertices $u$ and $v$ are adjacent 
and $u\nsim v$ if $u$ and $v$ are not adjacent. We will denote by $G[S]$ the 
subgraph of $G$ induced by a subset of vertices $S$ of $V$. If it is clear from
the context, we will use the same notation $S$ for the set $S$ and the 
subgraph  $G[S]$ induced by $S$.

The \emph{distance} $d_G(u,v)$ between two vertices $u$ and $v$ is the length of
a shortest $(u,v)$-path, and the \emph{interval} $I(u,v)$ between $u$ and $v$
consists of all the vertices on shortest $(u,v)$--paths, that is, of all 
vertices (metrically) \emph{between} $u$ and $v$:
$I(u,v):=\{ x\in V: d_G(u,x)+d_G(x,v)=d_G(u,v)\}$.
A subgraph $H$ of a graph $G=(V,E)$ is called an \emph{isometric subgraph} of 
$G$ if $d_H(u,v)=d_G(u,v)$ for any two vertices $u,v$ of $H$, i.e., any pair of 
vertices of $H$ can be connected inside $H$ by a shortest path of $G$.
A subgraph $H=(S,E')$ of $G$ (or the corresponding vertex set $S$) is called
\emph{convex} if it includes the interval of $G$ between any pair of $H$'s
vertices, i.e., if for any pair $u,v$ of vertices of $H$ all shortest 
$(u,v)$-paths of $G$ are included in $H$.
A \emph{halfspace} of $G$ is a convex subset $S$ with convex complement $V\setminus S$.
The \emph{distance from a vertex $v$ to a subgraph} $H$ of $G$ is $\dist_G(v,H)=\min \{ \dist_G(v,x): x\in V(H)\}$.
A subgraph $H$ of $G$ is said to be \emph{gated} if for every vertex $v \notin V(H)$, there
exists
a vertex $v' \in V(H)$ such that for all $u \in V(H)$, $d_G(v,u) = d_G(v,v')
+ d_G(v',u)$ ($v'$ is called the \emph{gate} of $v$ in $H$). Therefore, $v'$ is the gate of $v$ in $H$ if for any vertex $u$ of $H$, there exists a shortest $(u,v)$-path passing via $v'$.
For a vertex $x$ of a gated subgraph $H$ of
$G$, the set (or the subgraph induced by this set) $F(x)=\{ v\in V: x \mbox{ is the gate of } v \mbox{ in } H \}$
is called the {\it fiber} of $x$ with respect to $H$. From the definition it follows that the
fibers $\{ F(x): x\in H\}$ define a partition
of the vertex set of $G$. Notice also that gated sets of a graph enjoy
the finite \emph{Helly property}, that is, every finite family of gated sets
that pairwise intersect has a nonempty intersection.

The \emph{$m$-dimensional hypercube $Q_m$} is the graph whose vertex-set
consists of all subsets of an $m$-set $X := \{1,\ldots,m\}$ and in which
two vertices $A$ and $B$ are linked by an edge if and only if $|A \triangle B|= 1$.

A graph $G$ is called \emph{median} if the intersection $I(x,y)\cap
I(y,z)\cap I(z,x)$ is a singleton for each triplet $x,y,z$ of vertices.
The unique vertex $m(x,y,z) \in I(x,y)\cap I(y,z)\cap I(z,x)$ is called the
\emph{median} of $x,y,z$.  Median graphs are bipartite.
Basic examples of median graphs are trees, hypercubes, rectangular grids, and
Hasse diagrams of distributive lattices and  of median semilattices 
\cite{BaCh_survey}. The {\it star} $\St(z)$ of a vertex $z$ of a median graph
$G$ is the union of all hypercubes of $G$ containing $z$. If $G$ is a tree
and $z$ has degree $r$, then $\St(z)$ is the closed neighborhood of $z$ and is 
isomorphic to $K_{1,r}$. The {\it dimension} $\dim(G)$ of a median graph $G$ is 
the largest dimension of a hypercube of $G$.

A {\it cube-free median graph} is a median graph $G$ of dimension $2$, i.e.,
a median graph not containing 3-cubes as isometric subgraphs. Two illustrations
of cube-free median graphs
are given in Figure \ref{fig_cube-free}. The left figure will be used as a 
running example to illustrate the main definitions. Even if cube-free median 
graphs are the skeletons of 2-dimensional CAT(0) cube complexes, their 
combinatorial structure is rather intricate. For example, cube-free median 
graphs are not necessarily planar: for this, take the Cartesian product 
$K_{1,n}\times K_{1,m}$ of the stars $K_{1,n}$ and $K_{1,m}$ for  $n,m\ge 5$. 
Moreover, they may contain any complete graph $K_n$ as a minor.
Cube-free median graphs and their square complexes have been previously studied 
in \cite{BaChEp,BrKlSk,ChHa,ChMa}.

\begin{figure}[ht]
    \begin{minipage}{0.4\linewidth}
        \centering
        \includegraphics[width=0.7\linewidth]{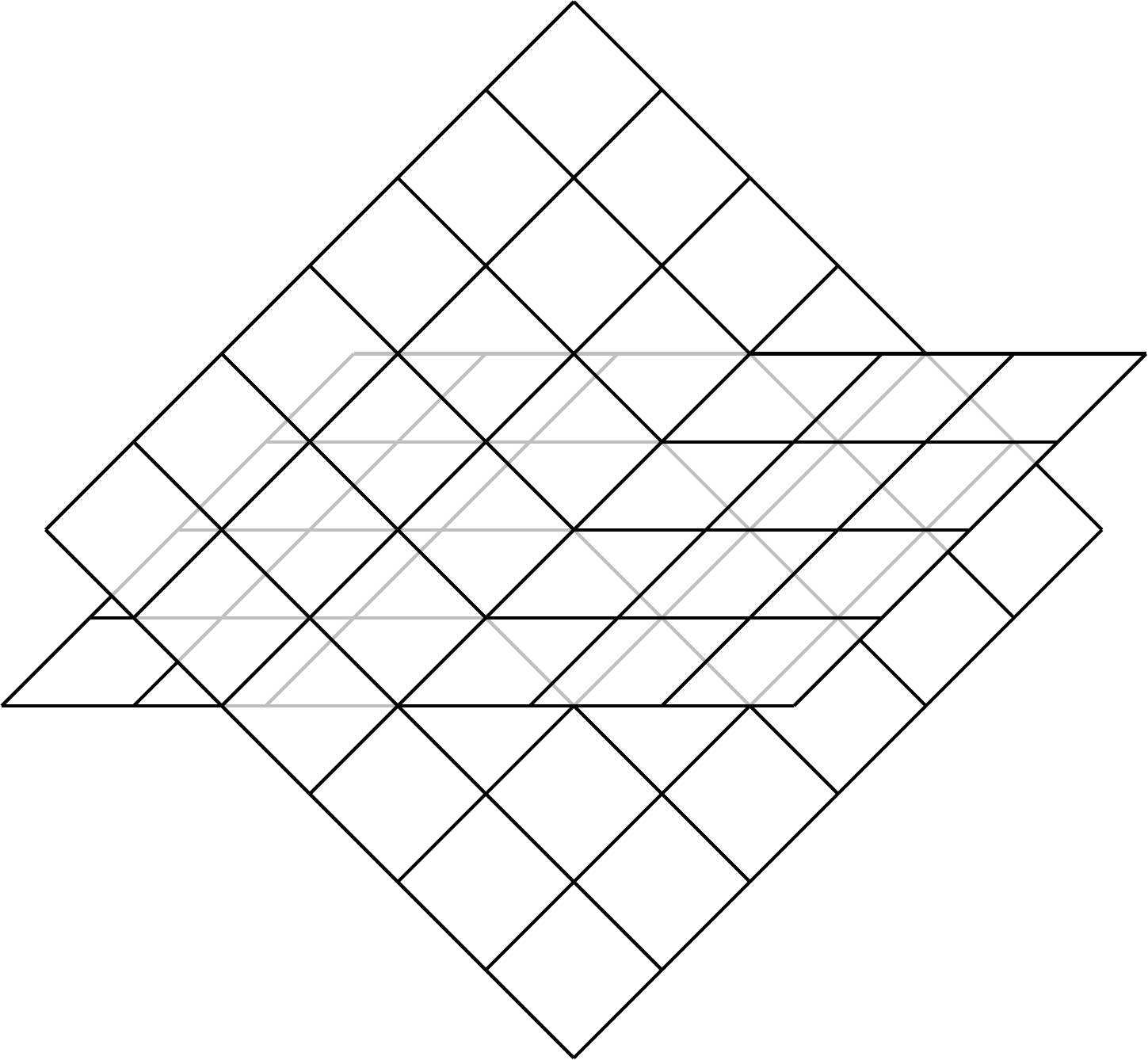}
    \end{minipage}
    \begin{minipage}{0.49\textwidth}
        \centering
        \includegraphics[width=0.8\linewidth]{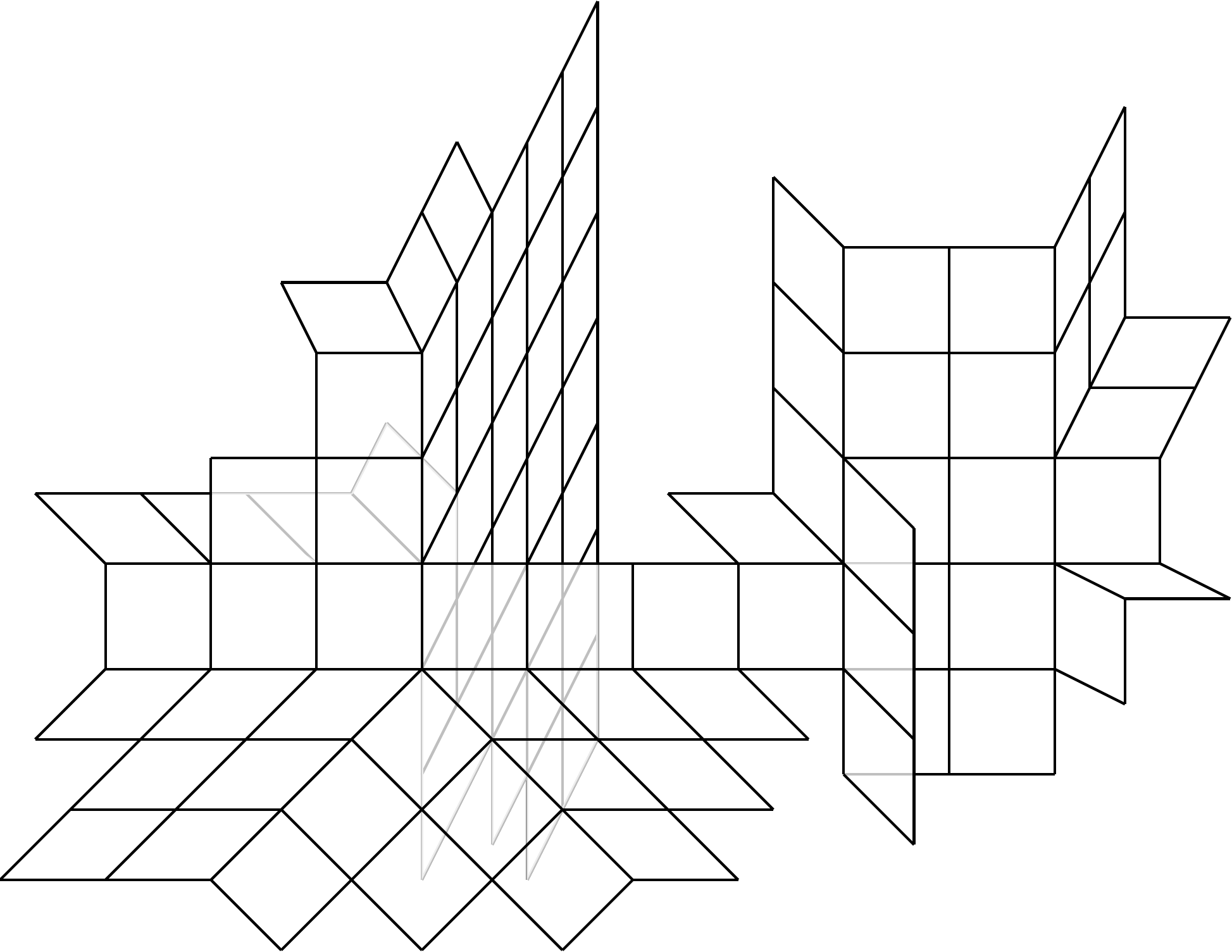}
    \end{minipage}
    \caption{\label{fig_cube-free}
        Two cube-free median graphs. The left graph is used
        as a running example.
    }
\end{figure}

For a vertex $u \in V$ of a graph $G=(V,E)$, let $M(u) := \sum_{v \in V} \dist_G(u,v)$.
A vertex $z \in V$ minimizing the function $M$ is called a
\emph{median} of $G$.  It is well known that any tree $T$ has either a
single median or two adjacent medians. Moreover, a vertex $v$ is a median of $T$ if and only if
any subtree of $T\setminus \{ v\}$ contains at most a half of vertices of $T$. For this reason,
a median vertex  of a tree is often called a centroid. Further, in order to distinguish medians of triplets and the median vertices of a graph $G$,
we will use the name \emph{centroid} also for median vertices  of $G$.

In order to make the proofs easier to follow, most of the main definitions and
notations are recalled in Section \ref{sect_glossary}.

\subsection{Distance and routing labeling schemes}
\label{sect_prelim_dist_rout}

Let $G=(V,E)$ be a finite graph. The \emph{ports} of a vertex $u \in V$ are the
distinct (with respect to $u$) integers, ranging from one to the degree of $u$,
given to the oriented edges around $u$, i.e., the edges $\overrightarrow{uv}$. 
If $uv \in E$, then
the \emph{port from $u$ to $v$}, denoted $\port(u,v)$, is the integer given to
$\overrightarrow{uv}$. More generally, for arbitrary vertices $u,v$ of $G$,
$\port(u,v)$ denotes any value $\port(u,v')$ such that $uv' \in E$ and $v' \in 
I(u,v)$.
\emph{A graph with ports} is a graph to which edges are given ports. All the
graphs in this paper are supposed to be graphs with ports.

A \emph{labeling scheme} on a graph family $\mcg$ consists of an encoding
function and a decoding function. The encoding function is given a total
knowledge of a graph $G \in \mcg$ and gives labels to its vertices in order to
allow the decoding function to answer a predefined question (query) with
knowledge of a restricted number of labels only.
The encoding and decoding functions highly depend on the family $\mcg$ and on
the type of queries: adjacency, distance, or routing queries. More formally,
a \emph{distance labeling scheme} on a graph family $\mcg$
consists of an \emph{encoding function} $C_G : V(G) \to \{0,1\}^*$ that gives
to every vertex of a graph $G$ of $\mcg$ a label, and of a \emph{decoding 
function} $D_G : \{0,1\}^* \times \{0,1\}^* \to \NN$ that, given the labels of
two vertices $u$ and $v$ of $G$, can compute efficiently the distance 
$\dist_G(u,v)$ between them.
In a \emph{routing labeling scheme}, the encoding function $C'_G : V(G) \to
\{0,1\}^*$ gives labels such that the decoding function $D'_G : \{0,1\}^*
\times \{0,1\}^* \to \NN$ is able, given the labels of a \emph{source} $u$ and
a \emph{target} $v$, to decide which port of $u$ to take to get closer to $v$.

We continue by recalling the distance labeling scheme for trees proposed by Peleg in \cite{Peleg00}.
First, as we noticed above, if $T$ is a tree with $n$ vertices and $c$ is a centroid of $T$, then
the removal of $c$  splits $T$ in subtrees with at most $\frac{n}{2}$ vertices 
each. 
The distance between any two vertices $u$ and $v$ from different subtrees  of 
$T\setminus \{ c\}$ is $d_T(u,c)+d_T(c,v)$. Therefore, each vertex of $T$ can 
keep in its label the distance to the centroid $c$. Hence, it remains to  
recover the information necessary to compute the distance between two vertices
in the same subtree of $T\setminus \{ c\}$. This can be done by recursively 
applying to each subtree $T'$ of $T\setminus \{ c\}$ the same procedure as for 
$T$. Consequently, the label of each vertex $v$ of $T$ consists of the 
distances from $v$ to the roots of all subtrees occurring in the recursive 
calls and containing $v$. Since from step to step the size of such subtrees is 
divided by at least 2, $v$ belongs to $\log_2 n$ subtrees, thus the label of 
each vertex $v$ of $T$ has size $\log_2^2n$ bits.

\section{Related work}
\label{related}

In this section we review some known results on distance/routing schemes and  median graphs.

\subsection{Distance and routing labeling schemes}

Distance Labeling Schemes (\dls) have been introduced in a series
of papers by Peleg et al. \cite{Peleg00,Peleg05,GavoillePPR04}. Before these
works, some closely related notions already existed such as embeddings in a
squashed cube \cite{Winkler1983} (equivalent to distance labeling schemes with
labels of size $\log_2 n$ times the dimension of the cube) or labeling schemes
for adjacency requests \cite{kannan1992implicat}. One of the main results for 
\dls is that general graphs support distance labeling schemes with labels of 
size $O(n)$ bits \cite{Winkler1983,GavoillePPR04,AlstrupGHP16a}.
This scheme is asymptotically optimal since $\Omega(n)$
bits labels are needed for general graphs. Another important result is that
there exists a distance labeling scheme for trees with $O(\log^2
n)$ bits labels \cite{Peleg00,AlstrupGHP16b,FrGaNiWe_trees}.
Several classes of graphs containing trees also enjoy a distance labeling
scheme with $O(\log^2 n)$ bit labels such as bounded tree-width graphs
\cite{GavoillePPR04}, distance-hereditary graphs \cite{gavoille2003distance},
bounded clique-width graphs \cite{courcelle2003query}, and non-positively
curved plane graphs \cite{chepoi2006distance}. 
A lower bound of $\Omega(\log^2 n)$ bits on the label length is known for
trees \cite{GavoillePPR04,AlstrupGHP16b}, implying that all the results
mentioned above are optimal as well. Other families of graphs have been
considered such as interval graphs, permutation graphs, and their
generalizations \cite{bazzaro2009localized,gavoille2008optimal} for which an
optimal bound of $\Theta(\log n)$ bits was given, and planar graphs for which
there is a lower bound of $\Omega(n^{\frac{1}{3}})$ bits \cite{GavoillePPR04}
and an upper bound of $O(\sqrt{n})$ bits \cite{GawrychowskiU16}. $(1+\epsilon)$-Approximate 
distance labeling schemes with optimal (polylogarithmic) label length are known for networks
of bounded doubling dimension \cite{AbChGaPe,Talwar}.

Routing is a  basic task that a distributed network must be able to
perform. The design of efficient Routing Labeling Scheme (\rls) is a well
studied subject;  for an overview, we refer to
the book \cite{Peleg00book}. One trivial way to produce a \rls, i.e., a
routing via shortest paths, is to store a complete routing table at each node of
the network. This table specifies, for any destination, the port leading to a
shortest path to that destination. This gives an exact \rls with labels of size
$O(n \log d)$ bits for graphs of maximum degree $d$ that is optimal for general
graphs \cite{gavoille1996memory}. For trees, there exists exact RLS with labels
of size $(1+o(1)) \log_2 n$  \cite{fraigniaud2001routing,thorup2001compact}.
Exact RLS with labels of polylogarithmic size also exist for graphs of bounded
tree-width, clique-width or chordality \cite{dragan2010collective} and for
non-positively curved plane graphs \cite{chepoi2006distance}.
For the families of graph excluding a fixed minor (including planar and bounded
genus graphs), there is an exact \rls with labels of  size $O(\sqrt{n}\log^2
n/\log\log n)$ \cite{dragan2010collective}.  For compact $(1+\epsilon)$-approximate routing 
schemes for networks of bounded doubling dimension, see \cite{AbGaGoMa,KoRiXi}.

\subsection{Median graphs} 
\label{sect_median_graphs}

Median graphs and related structures have an extensive literature; for surveys 
listing their numerous characterizations and properties, see
\cite{BaCh_survey,KlMu,knuth2008}.
These structures have been investigated in different contexts by quite a number
of authors for more than half a century. 
In this subsection we briefly review  the links between median graphs and
CAT(0) cube complexes. We also recall some results, related to the subject of
this paper,  about the distance and shortest path problems in  median graphs
and  CAT(0) cube complexes. For a survey of results on median graphs and their
bijections with median algebras, median semilattices, CAT(0) cube complexes,
and solution spaces of 2-SAT formulae, see \cite{BaCh_survey}. For a 
comprehensive presentation of median graphs and CAT(0) cube complexes as 
domains of event structures, see \cite{CC-JCSS}.

Median graphs are intimately related to hypercubes: median graphs can be 
obtained from hypercubes by amalgams and median graphs are themselves isometric 
subgraphs of hypercubes \cite{BaVdV,Mu}. Moreover, median graphs are exactly 
the retracts of hypercubes \cite{Bandelt_retracts}.
Due to the abundance of hypercubes, to each median graph $G$ one can associate
a cube complex $X(G)$ obtained by replacing every hypercube of $G$ by a solid 
unit cube. Then $G$ can be recovered as the 1-skeleton of $X(G)$.  The cube 
complex $X(G)$ can be endowed with several intrinsic metrics, in particular 
with the $\ell_2$-metric. An important class of cube complexes studied in
geometric group theory  is the class of CAT(0) cube complexes. CAT(0) geodesic 
metric spaces are usually defined via the nonpositive curvature comparison 
axiom of Cartan--Alexandrov--Toponogov \cite{BrHa}. For  cube complexes (and
more generally for cell complexes) the CAT(0) property  can be defined in a
very simple and intuitive way by the property that \emph{$\ell_2$-geodesics 
between any two points are unique}.  Gromov \cite{Gr} gave a nice combinatorial 
characterization of CAT(0) cube complexes as {\it simply connected cube 
complexes with flag links.}
It was also shown in \cite{Ch_CAT,Ro} that \emph{median graphs are exactly the 
1-skeletons  of CAT(0) cube complexes.}

Previous characterizations can be used to show that several cube complexes
arising in applications are CAT(0). Billera et al. \cite{BiHoVo} proved
that the space of trees (encoding all tree topologies with a given set of 
leaves) is a CAT(0) cube complex.
Abrams et al. \cite{AbGh,GhPe} considered the space of all possible positions 
of a reconfigurable system and showed that in many cases this state complex is 
CAT(0).
Billera et al. \cite{BiHoVo}  formulated the problem of  computing the geodesic
between two points in the space of trees. In the robotics literature, geodesics 
in state complexes correspond to the motion planning to get the robot from one 
position to another one with minimal power consumption. A polynomial-time 
algorithm for geodesic problem in the space of trees was provided in 
\cite{OwPr}. A linear-time algorithm for computing distances in CAT(0) square 
complexes (cube complexes of cube-free median graphs) was proposed in 
\cite{ChMa}.  Recently, Hayashi \cite{Hayashi} designed the first 
polynomial-time algorithm for geodesic problem in all CAT(0) cube complexes.

Returning to median graphs, the following is known about the labeling schemes
for them and about some related combinatorial problems.
First, the arboricity of any median graph $G$ on $n$ vertices is at most $\log
n$, leading to adjacency schemes of $O(\log^2n)$ bits per vertex.
As noted in \cite{ChLaRa}, one $\log_2 n$ factor can be replaced by the 
dimension of $G$. Compact distance labeling schemes can be obtained for some 
subclasses of cube-free median graphs. One particular class is that of 
\emph{squaregraphs}, i.e., plane graphs in which  all inner vertices have
degree $\ge 4$. For squaregraphs, distance schemes with labels of size 
$O(\log^2n)$ follow from a more general result of \cite{chepoi2006distance} for 
plane graphs of nonpositive curvature.
Another such class of graphs is that of partial double trees \cite{BaChEp}.
Those are the median graphs which isometrically embed into a Cartesian product 
of two trees and can be characterized as the cube-free median graphs in which 
all links are bipartite graphs \cite{BaChEp}.
The isometric embedding of partial double trees into a product of two trees
immediately leads to distance schemes with labels of $O(\log^2n)$ bits.
Finally, with a technically involved proof, it was shown in \cite{ChHa} that
there exists a constant $M$ such that any cube-free median graph $G$ with
maximum degree $\Delta$ can be isometrically embedded into a Cartesian product
of at most $\epsilon(\Delta):=M\Delta^{26}$ trees. This immediately shows that
cube-free median graph admit distance labeling schemes with labels of length
$O(\epsilon(\Delta)\log^2n)$. Compared with the $O(\log^3n)$-labeling scheme
obtained in the current paper, the disadvantage of the resulting 
$O(\epsilon(\Delta)\log^2n)$-labeling scheme is its dependence of the maximum 
degree $\Delta$ of $G$.
The situation is even worse for high dimensional median graphs:
\cite{ChHa} presents an example of a 5-dimensional median graph/CAT(0)
cube complex with constant degree  which cannot be embedded into a Cartesian 
product of a finite number of trees. Therefore, for general finite median 
graphs the function $\epsilon(\Delta)$ does not exist.

Analogously, it was shown in \cite{Ch_nice} that the nice labeling conjecture 
for event structures (a conjecture formulated in concurrency theory) is false 
for event domains which are median graphs of dimension at least 4 but it was 
proven in \cite{ChHa} that this conjecture is true for event structures with 
2-dimensional (cube-free) domains. On the other hand, it was shown in 
\cite{CC-JCSS} that the Thiagarajan conjecture (an important conjecture in 
concurrency relating 1-safe Petri nets and regular event structures) is false 
already for regular event structures with  2-dimensional domains but is true 
for event structures with hyperbolic domains (the second result heavily relies 
on the very deep result of Agol \cite{Agol} from geometric group theory).
All this in a sense explains the difficulty of designing polylogarithmic  
distance labeling schemes for general median graphs and motivates the 
investigation of such schemes for cube-free median graphs.
Nevertheless, we do not have a proof that such schemes do not exist for all 
median graphs.

\section{Fibers in  median graphs}
\label{sect_fibers}

In this section, we recall  the properties of median graphs and of the fibers
of their gated subgraphs. They will be used in our labeling schemes and some of
them could be potentially useful for designing distance labeling schemes for
general median graphs. Since all those results are dispersed in the literature
and time, we present them with (usually, short and unified) proofs in the
appendix.

\begin{lemma} 
    \label{thm_quadrangle} 
    (Quadrangle Condition) For any vertices $u,v,w,z$ of a median graph $G$ 
    such that $\dist_G(u,z) = k+1$, $v, w\sim z$, and $\dist_G(u,v) = 
    \dist_G(u,w) = k$, there is a unique vertex $x\sim v,w$  such that 
    $\dist_G(u,x) = k-1$.
\end{lemma}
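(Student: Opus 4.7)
The plan is to take $x$ to be the median $m(u,v,w)$ of the triple, and then to verify that this $x$ has all the required properties and is unique.

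Before doing so I would record one preliminary observation: because median graphs are bipartite and $v,w$ are distinct common neighbors of $z$, the distance $\dist_G(v,w)$ must be even and at most $2$, so $\dist_G(v,w)=2$ and $v\nsim w$. Note also that $z$ is a common neighbor of $v$ and $w$, but is not the vertex we seek, since $\dist_G(u,z)=k+1$ rules out $z\in I(u,v)$ (otherwise $k+1+1=k$).

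The existence step is to set $x := m(u,v,w)$, so that $x\in I(u,v)\cap I(v,w)\cap I(u,w)$. From $x\in I(v,w)$ and $\dist_G(v,w)=2$ there are three formal possibilities: $x=v$, $x=w$, or $x$ is a common neighbor of $v$ and $w$. I would rule out $x=v$ by noting that it would force $\dist_G(u,v)+\dist_G(v,w)=\dist_G(u,w)$, i.e.\ $k+2=k$; symmetrically $x\neq w$. Thus $x\sim v$ and $x\sim w$, and combining this adjacency with $x\in I(u,v)\cap I(u,w)$ yields $\dist_G(u,x)=k-1$, as required.

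For uniqueness, I would argue that any vertex $x'$ satisfying the conclusion, namely $x'\sim v,w$ with $\dist_G(u,x')=k-1$, automatically belongs to the triple intersection $I(u,v)\cap I(v,w)\cap I(u,w)$: it lies in $I(u,v)$ and $I(u,w)$ because $\dist_G(u,x')+1=k$, and it lies in $I(v,w)$ because it is a common neighbor of $v,w$ with $\dist_G(v,w)=2$. The uniqueness of the median then forces $x'=m(u,v,w)=x$. I do not expect any serious obstacle here: the whole argument is a direct unpacking of the median axiom, and the only care required is in using the hypotheses $\dist_G(u,z)=k+1$ and $\dist_G(u,v)=\dist_G(u,w)=k$ to exclude the degenerate identifications $x\in\{v,w,z\}$.
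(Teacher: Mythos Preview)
Your proposal is correct and follows essentially the same approach as the paper: set $x=m(u,v,w)$, deduce $x\sim v,w$ and $\dist_G(u,x)=k-1$, and obtain uniqueness from the uniqueness of the median. Your write-up is simply more explicit in justifying why the median must be a common neighbor of $v$ and $w$ (via bipartiteness and $\dist_G(v,w)=2$), a step the paper leaves implicit.
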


The Quadrangle Condition is simultaneously a local and global metric condition 
on graphs, which has topological consequences. In bipartite graphs, this 
condition implies that any cycle $C$ can be ``paved'' in a special way (from 
top-to-bottom with respect to any basepoint of $C$) with quadrangles (4-cycles 
or squares). This implies that all cycles are null-homotopic, i.e., that the 
square complexes of such graphs  are simply connected. The Quadrangle and 
Triangle Conditions define the class of weakly modular graphs 
\cite{BaCh_survey,Ch_metric}, which comprises most of te classes of graphs 
investigated in Metric Graph Theory; for a full account, see the survey 
\cite{BaCh_survey} and the recent paper \cite{ChChHiOs}.

The following result is a particular case of the local-to-global 
characterization of convexity and gatedness in weakly modular graphs 
established in \cite{Ch_metric}:

\begin{lemma} 
    \label{thm_convex=gated}
    A subset of vertices $A$ of a median graph $G$ gated iff $A$ is convex and 
    iff $H:=G[A]$ is connected and $A$ is \emph{locally convex}, i.e., if 
    $x,y\in A$ and $d_G(x,y)=2$, then $I(x,y)\subseteq A$.
\end{lemma}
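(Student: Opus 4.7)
The plan is to establish the chain gated $\Rightarrow$ convex $\Rightarrow$ locally convex and connected $\Rightarrow$ convex $\Rightarrow$ gated, where the middle implication is immediate from the definitions. For gated $\Rightarrow$ convex, I take $u,v\in A$, $w\in I(u,v)$, and let $w'$ be the gate of $w$ in $A$; adding the two gate identities $d_G(w,u)=d_G(w,w')+d_G(w',u)$ and $d_G(w,v)=d_G(w,w')+d_G(w',v)$ and using $w\in I(u,v)$ yields $d_G(u,v)\geq 2d_G(w,w')+d_G(u,v)$, which forces $w=w'\in A$. For convex $\Rightarrow$ gated, given $v\notin A$ I pick $v'\in A$ minimizing $d_G(v,v')$; if $v'$ were not the gate, some $u\in A$ would satisfy $d_G(v,u)<d_G(v,v')+d_G(v',u)$, and the median $m=m(v,v',u)$ would lie in $I(v',u)\subseteq A$ by convexity, with $m\neq v'$ forced by the strict inequality, so $d_G(v,m)<d_G(v,v')$, contradicting the choice of $v'$.

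The main direction is locally convex and connected $\Rightarrow$ convex, proved by induction on $k:=d_G(u,v)$ for $u,v\in A$, with goal $I(u,v)\subseteq A$. The cases $k\leq 2$ are immediate from local convexity. For $k\geq 3$, the first subgoal is to produce a shortest $(u,v)$-path that stays in $G[A]$, so that its initial edge supplies a neighbor $u_1\in N(u)\cap I(u,v)\cap A$. I start with a $(u,v)$-path $u=a_0,a_1,\dots,a_m=v$ in $G[A]$ of minimum length, which exists by connectedness of $G[A]$. If $m>k$, the path is not monotone, so there is a first index $i$ with $d_G(u,a_{i+1})<d_G(u,a_i)$, and $a_{i-1}\neq a_{i+1}$ are common neighbors of $a_i$ both at distance $i-1$ from $u$. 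The Quadrangle Condition (Lemma \ref{thm_quadrangle}) yields $x\sim a_{i-1},a_{i+1}$ with $d_G(u,x)=i-2$; since $a_{i-1},a_{i+1}\in A$ lie at distance $2$, local convexity forces $x\in I(a_{i-1},a_{i+1})\subseteq A$. Replacing $a_i$ by $x$ preserves the path length while strictly decreasing the potential $\sum_j d_G(u,a_j)$, so iterating this smoothing must bottom out at a monotone path, whose length is $d_G(u,v)=k$; this contradicts $m>k$.

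With $u_1$ in hand, I close the induction step by showing every $u'\in N(u)\cap I(u,v)$ lies in $A$; applying the induction hypothesis to $u',v\in A$ at distance $k-1$ then gives $I(u',v)\subseteq A$, and the identity $I(u,v)=\{u\}\cup\bigcup_{u'\in N(u)\cap I(u,v)}I(u',v)$ concludes. If $u'=u_1$ this is immediate; otherwise, the Quadrangle Condition with base $v$ and apex $u$ (using $d_G(v,u_1)=d_G(v,u')=k-1$) yields $x\sim u_1,u'$ with $d_G(v,x)=k-2$, whence $x\in I(u_1,v)\subseteq A$ by induction. Now $u,x\in A$ with $d_G(u,x)=2$, so local convexity forces $I(u,x)\subseteq A$; in a median graph any pair at distance $2$ has exactly two common neighbors (three common neighbors $y_1,y_2,y_3$ would place two distinct medians $u,x$ in $I(y_1,y_2)\cap I(y_2,y_3)\cap I(y_3,y_1)$), so $u'\in I(u,x)\subseteq A$. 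The main obstacle is the smoothing step, where the Quadrangle Condition has to be combined carefully with local convexity to promote an arbitrary $A$-path into a shortest $(u,v)$-path that stays entirely in $A$.
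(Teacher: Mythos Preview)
Your proof is correct. The convex $\Rightarrow$ gated direction is identical to the paper's (both take a closest point and use the median with a witness $u\in A$ violating the gate property). The gated $\Rightarrow$ convex direction you spell out explicitly, while the paper just says ``obviously''; your argument is fine.

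For the main implication (connected $+$ locally convex $\Rightarrow$ convex) you take a genuinely different route. The paper inducts on the \emph{intrinsic} distance $d_H(u,v)$ in $H=G[A]$: it takes the $H$-neighbor $u''$ of $u$ on an $H$-geodesic to $v$, applies the induction hypothesis to get $I(u'',v)\subseteq A$, and then, with a single application of the quadrangle condition (base $v$, apex $u$, neighbors $u',u''$), pulls the $G$-geodesic neighbor $u'$ into $A$ via local convexity. The subtle point --- that in the nontrivial case one automatically has $d_H(u,v)=d_G(u,v)$ --- falls out of the induction hypothesis applied to $u''$. You instead induct on $d_G(u,v)$ and spend a separate ``smoothing'' argument (iterated quadrangle $+$ local convexity, with a potential) to first prove that a shortest $H$-path is already a $G$-geodesic, thereby producing $u_1\in N(u)\cap I(u,v)\cap A$; your closing step is then essentially the same quadrangle-plus-local-convexity move as the paper's. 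Your approach is longer but yields the pleasant intermediate fact that $H$ is isometric in $G$; the paper's choice of induction parameter absorbs this and gives a shorter proof with one quadrangle per step.

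Two minor remarks. First, after a smoothing step you should note that the new walk is still of minimum length $m$ in $G[A]$, hence still a simple path (if the inserted vertex $x$ coincided with some $a_j$, you could shortcut below $m$); this is what keeps $a_{i-1}\neq a_{i+1}$ valid at every iteration. Second, your aside that vertices at distance $2$ have ``exactly two'' common neighbors should read ``at most two'' (a pendant path gives one), and in any case this $K_{2,3}$-freeness digression is unnecessary: you only need $u'\in I(u,x)$, which is immediate from $u'\sim u$ and $u'\sim x$.
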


\begin{lemma} 
    \label{convex-interval} 
    Intervals of a median graph $G$ are convex. 
\end{lemma}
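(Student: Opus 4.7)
The plan is to invoke Lemma \ref{thm_convex=gated}, which reduces the statement to verifying that $I(u,v)$ is connected and locally convex. Connectedness is immediate: any $w \in I(u,v)$ can be joined to $u$ and to $v$ along a shortest $(u,v)$-path passing through $w$, every vertex of which lies in $I(u,v)$. For local convexity, fix $x, y \in I(u,v)$ with $d_G(x,y) = 2$ and a vertex $z \in I(x,y) \setminus \{x,y\}$, i.e., a common neighbor of $x$ and $y$; the goal is to show $z \in I(u,v)$.

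Set $k := d_G(u,x)$ and $\ell := d_G(v,x)$, so $k + \ell = d_G(u,v)$. Since $G$ is bipartite and $d_G(x,y) = 2$, the values $d_G(u,x)$ and $d_G(u,y)$ have the same parity and differ by $0$ or $2$. If they differ by $2$, say $d_G(u,y) = k+2$, then $x$ lies on a shortest $(u,y)$-path, which forces $d_G(u,z) = k+1$, and symmetrically $d_G(v,z) = d_G(v,y) + 1$; summing gives $d_G(u,z) + d_G(v,z) = d_G(u,v)$ and hence $z \in I(u,v)$. Otherwise $d_G(u,x) = d_G(u,y) = k$ and $d_G(v,x) = d_G(v,y) = \ell$. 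The distances $d_G(u,z)$ and $d_G(v,z)$ lie in $\{k-1, k+1\}$ and $\{\ell - 1, \ell + 1\}$, respectively: the combination $(k-1, \ell-1)$ violates the triangle inequality; the combinations $(k-1, \ell+1)$ and $(k+1, \ell-1)$ each give $d_G(u,z) + d_G(v,z) = d_G(u,v)$ and hence $z \in I(u,v)$; so it only remains to exclude $(k+1, \ell+1)$.

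Suppose for contradiction that $d_G(u,z) = k+1$ and $d_G(v,z) = \ell+1$. Applying the Quadrangle Condition (Lemma \ref{thm_quadrangle}) to $z, x, y$ with basepoint $u$ yields a unique common neighbor $w$ of $x$ and $y$ with $d_G(u,w) = k-1$; the triangle inequality then forces $d_G(v,w) = \ell+1$. Applying Lemma \ref{thm_quadrangle} to $w, x, y$ with basepoint $v$ produces a common neighbor $w'$ of $x$ and $y$ with $d_G(v,w') = \ell - 1$ and, again by the triangle inequality, $d_G(u,w') = k+1$. The vertices $z, w, w'$ are pairwise distinct (their pairs of distances to $u$ and $v$ are different), pairwise at distance $2$ (each pair is linked by length-$2$ paths through $x$ and through $y$), and each pair has both $x$ and $y$ as common neighbors. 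Consequently $x, y \in I(z,w) \cap I(w,w') \cap I(z,w')$, contradicting the uniqueness of the median $m(z, w, w')$ in the median graph $G$. I expect this last median-uniqueness step to be the main obstacle, since it requires first discovering the two auxiliary witnesses $w$ and $w'$ via the Quadrangle Condition; the remaining cases are routine parity and triangle-inequality bookkeeping.
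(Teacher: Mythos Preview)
Your proof is correct and follows essentially the same approach as the paper: both reduce to local convexity via Lemma~\ref{thm_convex=gated}, handle the unequal-distance case directly, and in the equal-distance case produce two auxiliary common neighbors of $x$ and $y$ (your $w,w'$ are exactly the paper's medians $z'=m(u,x,y)$ and $z''=m(v,x,y)$). The only cosmetic differences are that the paper skips your explicit subcase analysis on $(d_G(u,z),d_G(v,z))$ by going straight to the medians, and it phrases the final contradiction as a forbidden $K_{2,3}$ on $\{x,y,z,z',z''\}$ rather than as two distinct candidates for $m(z,w,w')$---these are equivalent observations.
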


Suppose that the median graph $G$ is rooted at an arbitrary vertex $v_0$. For a 
vertex $v$, all neighbors $u$ of $v$ such that $u\in I(v,v_0)$ are called 
\emph{predecessors} of $v$.  A median graph $G$ satisfies the \emph{downward 
cube property} \cite{BeChChVa} if any vertex $v$ and all its predecessors 
belong to a single cube of $G$.

\begin{lemma} 
    \label{descendent_cube}
    \cite{Mu}
    Any median graph $G$ satisfies the downward cube property.
\end{lemma}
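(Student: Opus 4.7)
I would argue by induction on $k := |\{u : u\sim v,\ u\in I(v,v_0)\}|$, the number of predecessors of $v$. The cases $k\le 1$ are trivial (a vertex or an edge is a $0$- or $1$-cube). For $k=2$, with predecessors $u_1,u_2$, the Quadrangle Condition (Lemma \ref{thm_quadrangle}) applied to $v_0,v,u_1,u_2$ yields a unique vertex $w\sim u_1,u_2$ with $d_G(v_0,w)=d_G(v_0,v)-2$, and $\{v,u_1,u_2,w\}$ is a $2$-cube lying in $I(v,v_0)$ (which is convex by Lemma \ref{convex-interval}).

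For the inductive step with $k\ge 3$ predecessors $u_1,\dots,u_k$, I would apply the induction hypothesis to the two sub-lists $(u_1,\dots,u_{k-1})$ and $(u_2,\dots,u_k)$, obtaining $(k{-}1)$-cubes $Q$ and $Q'$ that share the $(k{-}2)$-dimensional face $F$ spanned by $v,u_2,\dots,u_{k-1}$. The goal is to glue $Q$ and $Q'$ along $F$ into a $k$-cube. For every vertex $q\in Q$, I would introduce a ``shifted'' vertex $q^* := m(q,u_k,v_0)$, using that medians exist and are unique in $G$; the natural candidates for the new $2^{k-1}$ vertices form the set $\{q^* : q\in Q\}$, and I would verify that $q\sim q^*$ and $q^*\in I(v_0,q)$ by repeated appeals to the Quadrangle Condition and to the convexity of intervals (Lemma \ref{convex-interval}), which keeps all constructed vertices inside $I(v,v_0)$. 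Finally, the local-convexity criterion of Lemma \ref{thm_convex=gated} certifies that the resulting $2^k$-vertex set induces a gated subgraph of $G$; as a gated subgraph of a median graph of dimension equal to $k$ and containing $v$ together with $k$ pairwise incident edges $vu_1,\dots,vu_k$ from different $\Theta$-classes (distinctness coming from the fact that the $u_i$ are pairwise non-adjacent in bipartite $G$), this subgraph must be the $k$-cube $Q_k$.

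The main obstacle I anticipate is not the existence of each ``shifted'' vertex $q^*$ (which follows cleanly from the Quadrangle Condition together with the fact that $F\subset I(v,v_0)$), but rather the combinatorial bookkeeping needed to show that the $2^k$ vertices are pairwise distinct and connected by exactly the right edges — equivalently, that no two distinct subsets of $\{u_1,\dots,u_k\}$ produce the same vertex and that no ``diagonal'' edges appear. This is precisely where the uniqueness of medians (not just their existence) is essential: it forbids the identifications that could occur in a merely bipartite graph satisfying the Quadrangle Condition, and it is what makes the gated hull of $\{v,u_1,\dots,u_k\}$ coincide with a hypercube rather than some larger median subgraph.
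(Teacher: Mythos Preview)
The paper does not actually supply a proof of this lemma: it is stated with a citation to Mulder~\cite{Mu}, and the appendix --- which collects the proofs of the other lemmas from Section~\ref{sect_fibers} --- omits this one. So there is no paper proof to compare against.

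Your outline follows the standard route and is essentially sound, but the induction hypothesis as you formulate it does not match how you invoke it. You induct on the number $k$ of predecessors of $v$, yet in the inductive step you apply the hypothesis to the proper sublists $(u_1,\dots,u_{k-1})$ and $(u_2,\dots,u_k)$ of predecessors of the \emph{same} vertex $v$, which still has $k$ predecessors; nothing in your hypothesis covers ``$v$ together with a chosen subset of its predecessors''. The fix is to strengthen the statement being proved to: \emph{for every vertex $v$ and every set of $j$ predecessors $u_1,\dots,u_j$ of $v$, there is a $j$-cube of $G$ containing $v,u_1,\dots,u_j$}, and induct on $j$. With that adjustment your two $(k{-}1)$-cubes are legitimately available, and the gluing argument goes through. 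The median-shift $q\mapsto m(q,u_k,v_0)$ is a nice device, though you will still need an argument (not just an appeal to Lemma~\ref{thm_quadrangle}) that $q\sim q^*$ for \emph{every} $q\in Q$, not only for $q=v$ and $q=u_i$; the cleanest way is again inductive along the levels of $Q$, using that $u_k\notin Q$ and that $Q\subset I(v,v_0)$.
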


Lemma~\ref{descendent_cube} immediately implies the following upper
bound on the number of edges of $G$:

\begin{corollary}
    \label{upper_edges}
	If a median graph $G$ has $n$ vertices, $m$ edges, and dimension $d$, then 
	$m\le dn\le n\log_2 n$. In particular, $m\le 2n$ if $G$ is cube-free.
\end{corollary}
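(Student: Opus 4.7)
The plan is to count edges by orienting each one toward the endpoint farther from a fixed basepoint, and then to bound the in-degree of each vertex by the dimension $d$ via Lemma~\ref{descendent_cube}. I do not expect any real obstacle; the proof should be a short consequence of the downward cube property, and the only thing to verify carefully is that the orientation is well-defined.

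First, I would root $G$ at an arbitrary vertex $v_0$ and orient every edge $uv$ from the endpoint closer to $v_0$ to the one farther from $v_0$. Since median graphs are bipartite, for any edge $uv$ the distances $\dist_G(u,v_0)$ and $\dist_G(v,v_0)$ differ by exactly $1$, so this orientation is well-defined. Under this orientation, the in-neighbors of a vertex $v$ are exactly the neighbors $u$ with $\dist_G(u,v_0)=\dist_G(v,v_0)-1$, which is precisely the set $\pred(v)$ of predecessors of $v$. Counting each edge once at its head gives
$$m \;=\; \sum_{v \in V} |\pred(v)|.$$

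Next, I would invoke Lemma~\ref{descendent_cube}: for each vertex $v$, the vertex $v$ together with all its predecessors lies in a single cube of $G$. If $|\pred(v)|=k$, then this cube has dimension at least $k$, so $k \le \dim(G) = d$. Summing over all vertices yields $m \le d n$.

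Finally, by definition of $\dim(G)$, the graph $G$ contains a $d$-dimensional hypercube as a subgraph, which has $2^d$ vertices, so $2^d \le n$ and hence $d \le \log_2 n$, giving $m \le dn \le n\log_2 n$. In the cube-free case we have $d \le 2$ by definition, and the same estimate yields $m \le 2n$.
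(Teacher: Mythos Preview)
Your proof is correct and is exactly the argument the paper has in mind: the corollary is stated as an immediate consequence of Lemma~\ref{descendent_cube}, and your orientation-and-predecessor count is the natural way to spell this out. The only cosmetic remark is that the predecessors of $v$ are pairwise non-adjacent neighbors of $v$ inside a single cube, so that cube must have dimension at least $|\pred(v)|$, which you use implicitly and correctly.
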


Lemma~\ref{descendent_cube} also implies the following useful property of cube-free median graphs:

\begin{corollary} 
    \label{neighbors_interval} 
    If $u$ and $v$ are any two vertices of a cube-free median graph $G$, then 
    $v$ has at most two neighbors in the interval $I(u,v)$.
\end{corollary}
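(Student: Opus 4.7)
The plan is to apply Lemma~\ref{descendent_cube} (the downward cube property) after rooting $G$ at the vertex $u$. With this choice of root, the predecessors of $v$ as defined just before Lemma~\ref{descendent_cube} are exactly the neighbors $w$ of $v$ with $w\in I(v,u)=I(u,v)$, that is, the neighbors of $v$ on a shortest $(u,v)$-path. So the quantity I want to bound is precisely the number of predecessors of $v$ with respect to the rooting at $u$.

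Next I would invoke Lemma~\ref{descendent_cube}: there exists a hypercube $Q\subseteq G$ containing $v$ together with all of its predecessors. If $v$ has $k$ predecessors, then $Q$ must contain $v$ and at least $k$ distinct neighbors of $v$, so $\dim(Q)\ge k$ (because in a hypercube the number of edges at any vertex equals its dimension). Since $G$ is cube-free, $\dim(G)\le 2$, hence $\dim(Q)\le 2$, which forces $k\le 2$. This yields the stated bound.

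There is no real obstacle here: everything is a direct consequence of Lemma~\ref{descendent_cube}, the only subtlety being to note that the notion of ``predecessor'' from that lemma, when the root is taken to be $u$, coincides with ``neighbor of $v$ in $I(u,v)$''. The cube-freeness is then used only in the final step to cap the dimension of the enveloping hypercube at $2$.
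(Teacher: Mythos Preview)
Your proof is correct and follows exactly the approach intended by the paper: the corollary is stated there as an immediate consequence of Lemma~\ref{descendent_cube}, and your argument (root at $u$, identify predecessors of $v$ with neighbors in $I(u,v)$, then bound by the dimension of the enveloping cube) is precisely how that implication is unpacked.
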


We continue with properties of stars and fibers of stars.
Combinatorially, the stars of median graphs may have quite an arbitrary 
structure: by a result of \cite{BaVdV}, there is a bijection between the stars 
of median graphs and arbitrary graphs. Given an arbitrary graph $H$, the {\it 
simplex graph} \cite{BaVdV} $\sigma(H)$ of $H$ has a vertex $v_{\sigma}$ for 
each clique of $G$ (i.e., empty set, vertices, edges, triangles, etc.) and two 
vertices $v_{\sigma}$ and $v_{\sigma'}$ are adjacent in $\sigma(H)$ if and only 
if the cliques $\sigma$ and $\sigma'$ differ only in a vertex. It was shown
in \cite{BaVdV} that the simplex graph $\sigma(H)$ of any graph $H$ is a median 
graph. Moreover, one can easily show that the star in $\sigma(H)$ of the vertex 
$v_{\varnothing}$ (corresponding to the empty set)  coincides with the whole 
graph $\sigma(H)$. Vice-versa, any star $\St(z)$ of a median graph can be 
realized as the simplex graph $\sigma(H)$ of the graph $H$ having the neighbors 
of $z$ as the set of vertices and two such neighbors $u',u''$ of $z$ are 
adjacent in $H$ if and only if $z,u',u''$ belong to a common square of $G$.

Next, we consider stars $\St(z)$ of median graphs from the metric point of view.

\begin{lemma} 
    \label{prop_St(m)_convex}
    For any vertex $z$ of a median graph $G$, the star  $\St(z)$ is a gated 
    subgraph of $G$.
\end{lemma}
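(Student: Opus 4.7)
The plan is to apply Lemma~\ref{thm_convex=gated}, which reduces gatedness of $\St(z)$ to connectedness plus local convexity. Connectedness is immediate since every vertex of $\St(z)$ lies in a hypercube through the common vertex $z$. For local convexity, I take $x, y \in \St(z)$ with $d_G(x, y) = 2$ and any $w \in I(x, y)$, and show $w \in \St(z)$.

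Two preparatory facts would be set up first. (i) In a median graph, two vertices at distance $2$ have at most two common neighbors: three common neighbors $w_1, w_2, w_3$ of $x, y$ would force $m(w_1, w_2, w_3)$ into each $I(w_i, w_j) \supseteq \{x, y\}$, violating uniqueness of the median; hence $I(x, y)$ is either a 3-vertex path or a 4-cycle. (ii) A vertex $u$ lies in $\St(z)$ if and only if $I(u, z)$ is a hypercube of dimension $d_G(u, z)$; this follows from the convexity of hypercubes in median graphs (a standard fact provable by induction on dimension via Lemma~\ref{thm_convex=gated} combined with (i)), so any cube through $u, z$ contains $I(u, z)$, which inside the cube is itself a subcube.

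Setting $m := m(x, y, z) \in I(x, y)$, I would split into cases. If $m = z$, then $\{z, x, w, y\}$ is a square and $w \in \St(z)$. If $m \in \{x, y\}$, say $m = x$, then $d_G(y, z) = d_G(x, z) + 2$ and a direct distance check gives $w \in I(y, z)$, a hypercube by (ii); hence $I(w, z)$ is a sub-hypercube and $w \in \St(z)$. The remaining case arises only when $I(x, y) = \{x, w_1, y, w_2\}$ is a 4-cycle and $m = w_1$ (WLOG): then $w_1 \in I(x, z)$ is immediately in $\St(z)$, and parity together with uniqueness of the median force $d_G(w_2, z) = d_G(w_1, z) + 2$. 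To place $w_2$ in $\St(z)$, let $n_x, n_y$ be the neighbors of $z$ on $I(x, z), I(y, z)$ respectively that do not lie on $I(w_1, z)$. I apply the Quadrangle Condition (Lemma~\ref{thm_quadrangle}) with anchor $w_2$ at distance $d_G(w_1, z) + 2$ from $z$ and with $n_x, n_y$ both at distance $d_G(w_1, z) + 1$ from $w_2$: this yields a vertex $s$ adjacent to both $n_x, n_y$ with $d_G(w_2, s) = d_G(w_1, z)$, producing a square $\{z, n_x, s, n_y\}$ through $z$. Invoking the simplex-graph description $\St(z) \cong \sigma(H_z)$ from \cite{BaVdV}, this square makes $n_x, n_y$ adjacent in $H_z$; hence the $d_G(w_1, z) + 2$ neighbors of $z$ appearing on $I(x, z) \cup I(y, z)$ form a clique in $H_z$, and the correspondence produces the required $(d_G(w_1, z) + 2)$-cube in $\St(z)$ through $z$, whose top must coincide with $w_2$ (since both are common neighbors of $x, y$ at distance $d_G(w_1, z) + 2$ from $z$).

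The main obstacle is the final subcase, where one must build a hypercube of dimension $d_G(w_1, z) + 2$ through $z$ and $w_2$ starting from the two hypercubes $I(x, z)$ and $I(y, z)$ meeting along the face $I(w_1, z)$. The decisive step is applying the Quadrangle Condition to $w_2, z, n_x, n_y$, which immediately produces the ``missing'' square $\{z, n_x, s, n_y\}$ at $z$; the simplex-graph correspondence then assembles the entire required cube from this square together with $I(x, z)$ and $I(y, z)$.
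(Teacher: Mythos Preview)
Your proof is correct and follows essentially the same route as the paper's (sketched) argument: both reduce gatedness to local convexity via Lemma~\ref{thm_convex=gated}, both isolate the hard case where the common neighbor is farther from $z$ than $x$ and $y$, and both apply the Quadrangle Condition at the neighbors of $z$ to produce the missing square before assembling the larger cube. Your case split via the median $m(x,y,z)$ and your appeal to the simplex-graph description of $\St(z)$ (equivalently, the cube condition for median graphs) are clean ways to organize and finish what the paper only sketches with ``one can show'' and a reference to \cite{ChChHiOs}.
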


The following property of median graphs is also well-known in more general 
contexts. The graphs satisfying this property are called {\it 
fiber-complemented} \cite{Chast}.

\begin{lemma} 
    \label{fiber-gated1}
    For any gated subgraph $H$ of a median graph $G$, the fibers $F(x), x \in 
    V(H)$, are gated.
\end{lemma}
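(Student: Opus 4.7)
The plan is to show that $F(x)$ is convex in $G$ and then invoke the convex-iff-gated equivalence for median graphs (Lemma~\ref{thm_convex=gated}). The idea is to realize $F(x)$ as the intersection of the edge-halfspaces of $G$ anchored at $x$ along the neighbors of $x$ in $H$; since such edge-halfspaces are convex in a median graph, so is their intersection.

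Concretely, for each neighbor $y$ of $x$ in $H$, set $W(x,y) := \{v \in V : \dist_G(v,x) < \dist_G(v,y)\}$. I claim that $F(x) = \bigcap_{y \in N(x) \cap V(H)} W(x,y)$. The inclusion $\subseteq$ is immediate from the gate condition: if $v \in F(x)$ and $y \in N(x) \cap V(H)$, then $\dist_G(v,y) = \dist_G(v,x) + \dist_G(x,y) = \dist_G(v,x) + 1$, so $v \in W(x,y)$. For the reverse inclusion, take $w$ in the intersection and let $x^* \in V(H)$ be its gate in $H$; I will show $x^* = x$. Otherwise, convexity of $H$ (Lemma~\ref{thm_convex=gated}) yields a shortest $(x, x^*)$-path entirely in $H$; let $y$ be the neighbor of $x$ on this path, so that $y \in N(x) \cap V(H)$ and $y \in I(x, x^*)$. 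The gate property of $x^*$ applied to $w$ then gives $\dist_G(w, y) = \dist_G(w, x^*) + \dist_G(x^*, y) = \dist_G(w, x) - 1$, contradicting $w \in W(x, y)$. Hence $x^* = x$, and so $w \in F(x)$.

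Since each $W(x, y)$ is a convex subset of $G$ and $F(x)$ is their intersection, $F(x)$ is convex; Lemma~\ref{thm_convex=gated} then upgrades this to gatedness. The one substantive input is the convexity of the edge-halfspaces $W(x, y)$ in a median graph — the standard fact underlying the isometric embedding of median graphs into hypercubes — which can be obtained directly from the Quadrangle Condition (Lemma~\ref{thm_quadrangle}) by a short induction on $\dist_G(u_1, u_2)$ for $u_1, u_2 \in W(x, y)$, using the produced quadrangle to push a neighbor of $u_1$ on a shortest $(u_1, u_2)$-path back into $W(x, y)$. This convexity of edge-halfspaces is also the main obstacle in the proof; once it is granted, everything else reduces to manipulating the defining gate identity.
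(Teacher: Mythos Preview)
Your argument is correct and the identity $F(x)=\bigcap_{y\in N_H(x)} W(x,y)$ is a clean way to see why fibers are convex. The paper takes a different, more self-contained route: it verifies local convexity of $F(x)$ directly by taking $u,v\in F(x)$ at distance~$2$ with a common neighbor $z$, assuming $z\in F(y)$ for some $y\neq x$, and then using the quadrangle condition twice to produce a $K_{2,3}$, which median graphs forbid.

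The main difference is in logical dependencies. In the paper's development, convexity of edge-halfspaces (Corollary~\ref{halfspaces-gated1}) is \emph{derived from} Lemma~\ref{fiber-gated1} by specializing $H$ to an edge, so invoking halfspace convexity to prove Lemma~\ref{fiber-gated1} would be circular within that framework. You avoid this by noting that halfspace convexity can be established independently (Mulder), but your sketch of that step (``a short induction \ldots\ push a neighbor back into $W(x,y)$'') is the one place that is underspecified; a cleaner route is to check connectedness of $W(x,y)$ (any shortest $(u,x)$-path stays in $W(x,y)$ by the triangle inequality) and then local convexity via Lemma~\ref{thm_convex=gated}, which does require a short case analysis. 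Your approach buys a structural description of fibers as intersections of halfspaces, which is conceptually pleasant and generalizes; the paper's approach buys a proof that needs nothing beyond the quadrangle condition and $K_{2,3}$-freeness, and lets halfspace convexity fall out afterward as a corollary.
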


Lemma \ref{fiber-gated1} has two corollaries. First, from this lemma and Lemma \ref{prop_St(m)_convex} we obtain:

\begin{corollary} 
    \label{fiber-gated2} 
    For any vertex $z$ of a median graph $G$, the fibers of the star $\St(z)$ 
    are gated.
\end{corollary}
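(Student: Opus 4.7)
The statement is an immediate consequence of the two lemmas that precede it, so the plan is essentially to assemble them in the right order rather than to introduce new ideas.

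First I would recall the definition: the fibers of $\St(z)$ are defined only when $\St(z)$ is a gated subgraph of $G$, since the gate map is what partitions $V(G)$ into the sets $F(x)=\{v\in V: x \text{ is the gate of } v \text{ in } \St(z)\}$ for $x\in V(\St(z))$. So the first step is to invoke Lemma~\ref{prop_St(m)_convex}, which guarantees that $\St(z)$ is indeed a gated subgraph of the median graph $G$. This legitimizes speaking about the fibers of $\St(z)$ in the first place.

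Once $\St(z)$ is known to be gated, the second step is to apply Lemma~\ref{fiber-gated1} directly with $H=\St(z)$. That lemma asserts exactly that, for any gated subgraph $H$ of a median graph, each fiber $F(x)$ with $x\in V(H)$ is gated. Specialising $H$ to $\St(z)$ yields the conclusion of the corollary.

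There is no real obstacle here; the only thing to be careful about is that the hypotheses of Lemma~\ref{fiber-gated1} are met (the ambient graph is median and $\St(z)$ is gated, both of which are on hand), and that Lemma~\ref{prop_St(m)_convex} is the right tool to verify the gatedness assumption. The whole proof can therefore be given in one or two sentences, just chaining the two lemmas.
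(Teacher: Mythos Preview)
Your proposal is correct and matches the paper's approach exactly: the paper states the corollary as an immediate consequence of Lemma~\ref{prop_St(m)_convex} (the star $\St(z)$ is gated) combined with Lemma~\ref{fiber-gated1} (fibers of any gated subgraph are gated), and you assemble these two lemmas in precisely the same way.
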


For an edge $uv$ of a median graph $G$, let $W(u,v)=\{ z\in V: d_G(z,u)<d_G(z,v)\}$  and $W(v,u)=\{ z\in V: d_G(z,v)<d_G(z,u)\}$.
Since $G$ is bipartite, $W(u,v)$ and $W(v,u)$ constitute a partition of the 
vertex set of $G$.  In view of the following result,  we will refer to the sets 
of the form $W(u,v), W(v,u)$ as to the \emph{halfspaces} of $G$ (recall that a 
halfspace is a convex set with convex complement):

\begin{corollary} 
    \label{halfspaces-gated1} 
    For any edge $uv$ of a median graph $G$, the sets $W(u,v)$ and $W(v,u)$ are 
    gated and, thus, are complementary halfspaces of $G$. Conversely, any pair 
    of complementary halfspaces of $G$ has the form $\{ W(u,v),W(v,u)\}$ for an 
    edge $uv$ of $G$.
\end{corollary}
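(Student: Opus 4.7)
The plan is to prove the two directions separately, leveraging the fiber machinery from Lemma~\ref{fiber-gated1} for the first direction, and convexity arguments using bipartiteness for the second.

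For the forward direction, I would first observe that the edge $\{u,v\}$ (viewed as a two-vertex subgraph) is itself a gated subgraph of $G$. Indeed, since $G$ is bipartite, for every vertex $z\in V$ we have $d_G(z,u)\neq d_G(z,v)$, so exactly one of $u,v$ is strictly closer to $z$, and that endpoint is trivially the gate of $z$ in $\{u,v\}$. Next, I would identify the fibers of $\{u,v\}$: by definition, $F(u)=\{z:u\text{ is the gate of }z\}=\{z:d_G(z,u)<d_G(z,v)\}=W(u,v)$, and symmetrically $F(v)=W(v,u)$. Applying Lemma~\ref{fiber-gated1} (fibers of a gated subgraph in a median graph are gated) then yields that $W(u,v)$ and $W(v,u)$ are gated. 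Since gated implies convex (Lemma~\ref{thm_convex=gated}) and the two sets partition $V$, they form a pair of complementary halfspaces.

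For the converse, let $\{A,B\}$ be complementary halfspaces of $G$, i.e., $B=V\setminus A$ and both $A$ and $B$ are convex. Since $G$ is connected and $A,B$ are nonempty (else the halfspace terminology is vacuous; the only edgeless case is $|V|=1$, which I would dismiss as degenerate), there exists an edge $uv$ with $u\in A$ and $v\in B$. I would then show $A=W(u,v)$ (and symmetrically $B=W(v,u)$) by contradiction: suppose $z\in A$ but $z\notin W(u,v)$. By bipartiteness, $d_G(z,u)\neq d_G(z,v)$, so $d_G(z,u)>d_G(z,v)$, and bipartiteness actually forces $d_G(z,u)=d_G(z,v)+1=d_G(z,v)+d_G(v,u)$. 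This means $v\in I(z,u)$. But $z,u\in A$ and $A$ is convex, so $v\in A$, contradicting $v\in B$. Hence $A\subseteq W(u,v)$, and by a symmetric argument $B\subseteq W(v,u)$; since both pairs partition $V$, equality holds.

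There is no real obstacle here: both directions follow in a few lines once one recognizes that $\{u,v\}$ is gated (immediate from bipartiteness), so that the previously established fiber-gatedness result (Lemma~\ref{fiber-gated1}) applies directly, and that the convexity of halfspaces plus the odd-parity gap between $d_G(z,u)$ and $d_G(z,v)$ produces the forced membership $v\in I(z,u)$ in the converse. The only place where one should be a little careful is verifying that every vertex has a well-defined gate in $\{u,v\}$, but this is exactly where bipartiteness of median graphs is used.
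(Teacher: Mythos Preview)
Your proposal is correct and follows essentially the same approach as the paper: applying Lemma~\ref{fiber-gated1} to the gated edge $\{u,v\}$ for the forward direction, and using convexity plus the partition property for the converse. You simply spell out in more detail why edges are gated and why convexity forces $A\subseteq W(u,v)$, whereas the paper states these steps tersely.
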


\begin{proof}
    Since edges of a median graph $G$ are gated, the first assertion of the 
    corollary follows by applying Lemma \ref{fiber-gated1} to the edge $uv$.
    Conversely, if $\{H',H''\}$ is a pair of complementary halfspaces of $G$, 
    let $uv$ be an edge of $G$ with $u\in H'$ and $v\in H''$. Since both $H'$ 
    and $H''$ are convex, we conclude that $H'\subseteq W(u,v)$ and 
    $H''\subseteq W(v,u)$. Since both pairs $\{ H',H''\}$ and $\{ 
    W(u,v),W(v,u)\}$ define partitions, we have $H'=W(u,v)$ and $H''=W(v,u)$.
\end{proof}

That the halfspaces of a median graph are convex was established first by 
Mulder \cite{Mu}. He also proved that the boundaries of halfspaces are convex 
(the boundary of the halfspace $W(u,v)$ is the set $\partial W(u,v)=\{ z'\in 
W(u,v): \exists z''\in W(v,u), z''\sim z'\}$).
We will prove this property for boundaries of fibers of arbitrary gated 
subgraphs of a median graph.

Let $H$ be a gated subgraph of a median graph $G=(V,E)$ and let ${\mathcal F}(H)=\{ F(x): x\in V(H)\}$ be the partition of $V$ into the  fibers of $H$.
We will call two fibers $F(x)$ and $F(y)$ {\it neighboring} (notation $F(x)\sim F(y)$) if there exists an edge $x'y'$ of $G$ with one end $x'$ in $F(x)$ and another end $y'$ in $F(y)$.
If $F(x)$ and $F(y)$ are neighboring fibers of $H$, then denote by 
$\partial_yF(x)$ the set of all vertices $x'\in F(x)$ having a neighbor $y'$ in 
$F(y)$ and call $\partial_yF(x)$ the {\it boundary of $F(x)$ relative to 
$F(y)$}.

\begin{lemma} 
    \label{lem_gated_borders}
    Let $H$ be a gated subgraph of a median graph $G =(V,E)$. Two fibers $F(x)$ 
    and $F(y)$ of $H$ are neighboring if and only if $x\sim y$. If $F(x)\sim 
    F(y)$, then  the boundary $\partial_yF(x)$ of $F(x)$ relative to $F(y)$ 
    induces a gated subgraph of $G$ of dimension $\le \dim(G)-1$.
\end{lemma}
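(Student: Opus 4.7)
Both parts are controlled by the halfspaces $W(x,y)$ and $W(y,x)$ defined by the edge $xy$. A direct application of the gate property shows $F(x)\subseteq W(x,y)$ (for $z\in F(x)$, $\dist_G(z,y)=\dist_G(z,x)+\dist_G(x,y)>\dist_G(z,x)$), and symmetrically $F(y)\subseteq W(y,x)$.

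For the first assertion, the reverse direction is witnessed by the edge $xy$ itself. For the forward direction, given an edge $x'y'$ with $x'\in F(x)$ and $y'\in F(y)$, set $a:=\dist_G(x',x)$, $b:=\dist_G(y',y)$, and $c:=\dist_G(x,y)$. The gate property yields $\dist_G(x',y)=a+c$ and $\dist_G(y',x)=b+c$; applying the triangle inequality through the edge $x'y'$ gives $a+c\le b+1$ and $b+c\le a+1$. Summing these, $c\le 1$, and since $x\neq y$ we conclude $x\sim y$.

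For the second assertion, the central claim is the identity
\[
\partial_y F(x)=F(x)\cap\partial W(x,y).
\]
The inclusion $\subseteq$ is immediate from the halfspace containments above. For $\supseteq$, take $x'\in F(x)\cap\partial W(x,y)$ with a neighbor $y'\in W(y,x)$; I must show $y$ is the gate of $y'$ in $H$. The edges $x'y'$ and $xy$ are parallel (they cross the same hyperplane), so $\dist_G(w,x')<\dist_G(w,y')$ iff $w\in W(x,y)$; specialising to $w\in\{x,y\}$ and combining with the gate at $x$ yields $\dist_G(y',x)=a+1$ and $\dist_G(y',y)=a$, where $a=\dist_G(x',x)$. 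To check that $y$ is the gate of $y'$, it remains to verify $\dist_G(y',h)=a+\dist_G(y,h)$ for every $h\in H$; splitting on whether $h\in W(x,y)$ or $h\in W(y,x)$ (in which cases $\dist_G(y,h)=\dist_G(x,h)\pm 1$), the required equality follows from triangle-inequality routing through $x'$ and through $y$, combined once more with the halfspace-separation inequality.

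Once the identity is established, $\partial_y F(x)$ is the intersection of two gated sets---$F(x)$ by Lemma~\ref{fiber-gated1} and $\partial W(x,y)$ by Mulder's theorem cited in the paragraph above the lemma (together with Lemma~\ref{thm_convex=gated})---and so is itself gated, with the nonemptyness clear since $x$ lies in both. For the dimension bound, $\partial_y F(x)\subseteq\partial W(x,y)$, and any $k$-cube of $\partial W(x,y)$ pairs with its parallel copy in $\partial W(y,x)$ via the hyperplane edges to span a $(k+1)$-cube of $G$, giving $\dim(\partial W(x,y))\le\dim(G)-1$. I expect the hardest step to be the reverse inclusion in the central identity: one must rule out the neighbor $y'$ lying in some fiber $F(y'')$ with $y''\sim x$ and $y''\neq y$, and the halfspace-separation/parity argument above is precisely what excludes this possibility.
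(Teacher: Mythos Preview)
Your argument is correct, and it takes a genuinely different route from the paper's proof. For the first assertion the two arguments are essentially the same (yours is just written out with explicit distances). The divergence is in the gatedness of $\partial_y F(x)$.

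The paper works directly with the local-convexity characterization from Lemma~\ref{thm_convex=gated}: it first shows $I(x',x)\subseteq\partial_y F(x)$ for every $x'\in\partial_y F(x)$ (giving connectedness), and then checks local convexity by taking $x',x''\in\partial_y F(x)$ at distance~$2$ with common neighbor $u$, pushing into $F(y)$ via the quadrangle condition, and using $K_{2,3}$-freeness to force $u\in\partial_y F(x)$. Your approach instead proves the identity $\partial_y F(x)=F(x)\cap\partial W(x,y)$ and then invokes that intersections of gated sets in a median graph are gated (via Lemma~\ref{thm_convex=gated}), citing Lemma~\ref{fiber-gated1} for $F(x)$ and Mulder's result (mentioned just before the lemma) for $\partial W(x,y)$.

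Your route is shorter and more structural: it explains \emph{why} the relative boundary is gated by exhibiting it as a slice of the halfspace boundary, and it makes the dimension bound a one-liner. The paper's route is more self-contained---it does not appeal to Mulder's external result and instead reproves the special case in the process---and it makes explicit the useful auxiliary fact $I(x',x)\subseteq\partial_y F(x)$, which is invoked later (e.g., in the proofs of Lemmas~\ref{lem_dim-1_Uborders} and~\ref{lem_Uborders_cube-free}). One small point worth spelling out in a final version: the parallelism step $W(x,y)=W(x',y')$ is standard but not stated in the paper; it follows immediately from the convexity of the two halfspaces (Corollary~\ref{halfspaces-gated1}) together with bipartiteness, and you should say so rather than leaving ``cross the same hyperplane'' as a bare assertion.
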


For a vertex $x$ of a gated subgraph $H$ of $G$ and its fiber $F(x)$, the union 
of all boundaries $\partial_y F(x)$ over all $F(y)\sim F(x), y\in V(H)$, is 
called the {\it total boundary} of $F(x)$ and is denoted by $\partial^* F(x)$. 
More precisely, the vertices and the edges of $\partial^* F(x)$ are the unions
of vertices and edges of all boundaries $\partial_y F(x)$ over all fibers 
$F(y)\sim F(x), y\in V(H)$.

\begin{lemma} 
    \label{lem_dim-1_Uborders}
    Let $H$ be a gated subgraph of a median graph $G$ of dimension $d$.
    Then the total boundary $\partial^* F(x)$ of any fiber $F(x)$ of $H$ does 
    not contain $d$-dimensional cubes.
\end{lemma}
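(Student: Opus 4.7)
The plan is to argue by contradiction. Suppose that $\partial^* F(x)$ contains a $d$-dimensional cube $Q$. Cubes in median graphs are convex and hence gated by Lemma~\ref{thm_convex=gated}, so $x$ admits a unique gate $v_0\in Q$ and $\dist_G(v,x)=\dist_G(v,v_0)+\dist_G(v_0,x)$ for every $v\in Q$. Since $v_0\in\partial^* F(x)$, some neighbor $v_0'$ of $v_0$ lies outside $F(x)$; a short check using bipartiteness of $H$ together with the fact that the gate of $v_0$ in $H$ is $x$ shows that the gate of $v_0'$ in $H$ must be some neighbor $y_0$ of $x$ in $H$. Hence $v_0'\in F(y_0)$, and the edge $v_0v_0'$ lies in the $\Theta$-class $\theta^*:=\theta_{xy_0}$ separating $W(x,y_0)$ from $W(y_0,x)$. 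Because $Q\subseteq F(x)\subseteq W(x,y_0)$, the class $\theta^*$ is distinct from all the $d$ $\Theta$-classes of edges inside $Q$.

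The central step is to prove that $Q\subseteq\partial_{y_0}F(x)$; once this is established, Lemma~\ref{lem_gated_borders} bounds $\dim\partial_{y_0}F(x)\le d-1$, yielding the contradiction $d\le d-1$. The intersection $P:=Q\cap\partial_{y_0}F(x)$ is gated as the intersection of two gated sets, and thus is a face of the cube $Q$ of dimension at most $d-1$, containing $v_0$. I would prove $P=Q$ by induction on $k=\dist_G(v,v_0)$: show that every vertex $v\in Q$ has a neighbor in $F(y_0)$. The base $k=0$ is immediate. For the inductive step with $k\ge 2$, pick two distinct predecessors $v_1,v_2$ of $v$ in $Q$; by the inductive hypothesis they have partners $v_1^\#,v_2^\#\in F(y_0)$. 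An application of the Quadrangle Condition (Lemma~\ref{thm_quadrangle}) to a well-chosen configuration involving $v$, $v_0'$, $v_1^\#$, $v_2^\#$ produces a new common neighbor of $v_1^\#$ and $v_2^\#$, and the same gate-plus-bipartiteness argument as in the opening paragraph identifies it as a vertex of $F(y_0)$ that is adjacent to $v$ through an edge in $\theta^*$.

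The delicate case, and the main obstacle, is $k=1$: here $v$ has only the single predecessor $v_0$ in $Q$, so the two-predecessor argument is unavailable. To handle it, I would use that $d\ge 2$ (otherwise $\partial^* F(x)$ is a single vertex and there is nothing to prove) to access a deeper vertex $v_{\{i,j\}}$ at distance two from $v_0$; the statement for $v_{\{i,j\}}$ is established first (by a direct Quadrangle-Condition argument combined with the matching isomorphism between $\partial_{y_0}F(x)$ and $\partial_xF(y_0)$ along $\theta^*$), and then the Quadrangle Condition is run ``inward'' with $v_{\{i,j\}}^\#$ and $v_0'$ as anchors to produce the missing partner of $v=v_{\{i\}}$. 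The essential inputs for the whole argument are (i) the matching along $\theta^*$, which realizes an isomorphism between the gated subgraphs $\partial_{y_0}F(x)$ and $\partial_xF(y_0)$ and hence transports cube structures intact between them, and (ii) the bipartiteness of $H$, which forbids two neighbors of $x$ from being adjacent in $H$ and thereby rules out any new vertex produced by a Quadrangle closure escaping to a fiber other than $F(y_0)$.
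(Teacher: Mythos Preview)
Your overall strategy matches the paper's: argue by contradiction, show that the hypothetical $d$-cube $Q$ must lie inside a single relative boundary $\partial_{y}F(x)$, and then invoke the dimension bound of Lemma~\ref{lem_gated_borders}. The divergence is entirely in how you choose the relevant neighbor $y$ of $x$ and establish $Q\subseteq\partial_{y}F(x)$, and this is where your argument runs into trouble.

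You anchor at the gate $v_0$ of $x$ in $Q$ and pick $y_0$ so that $v_0\in\partial_{y_0}F(x)$. This forces you into an induction on $\dist_Q(v,v_0)$ whose base case $k=1$ you yourself flag as the main obstacle; your proposed workaround (establish the $k=2$ case first ``by a direct Quadrangle-Condition argument'') is not spelled out, and I do not see a clean configuration that yields it without already assuming the $k=1$ statement. The difficulty is structural: convexity of $\partial_{y_0}F(x)$ together with $x,v_0\in\partial_{y_0}F(x)$ only gives you $I(x,v_0)\subseteq\partial_{y_0}F(x)$, but $Q$ lies on the \emph{far} side of $v_0$ from $x$, so this inclusion captures nothing of $Q$ beyond $v_0$ itself. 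There is no a priori reason the particular $y_0$ you picked (there may be several) is the one whose relative boundary swallows all of $Q$.

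The paper's proof is a two-line fix obtained by anchoring at the opposite end of the cube. Let $x''$ be the antipode of $v_0$ in $Q$ (the vertex of $Q$ farthest from $x$) and pick $y$ with $x''\in\partial_{y}F(x)$. The gate property plus antipodality give $Q\subseteq I(x,x'')$; since $x$ always lies in $\partial_{y}F(x)$ (because $x\sim y$ in $H$ and $y\in F(y)$), convexity of $\partial_{y}F(x)$ yields $Q\subseteq I(x,x'')\subseteq\partial_{y}F(x)$ immediately. No induction, no $\Theta$-classes, no Quadrangle Condition.
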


\begin{proof} 
    Suppose by way of contradiction that $\partial^* F(x)$ contains a 
    $d$-dimensional cube $Q$.
    Since $Q$ is a gated subgraph of $G$, we can consider the gate $x'$ of $x$ 
    in $Q$ and  the vertex furthest from $x$ in $Q$ (this is the vertex of $Q$ 
    opposite to $x'$).
    Denote this furthest from $x$ vertex of $Q$ by $x''$. From the choice of 
    $x'$ as the gate of $x$ in $Q$ and $x''$ as the opposite to $x'$ vertex of 
    $Q$, we conclude that $Q\subseteq I(x,x'')$. Suppose that $x''\in 
    \partial_y F(x)\subset \partial^* F(x)$.
    Since $\partial_y F(x)$ is gated (Lemma \ref{lem_gated_borders}) and 
    $Q\subseteq I(x'',x)$,  $Q$ is included in the boundary $\partial_y F(x)$. 
    This contradicts Lemma \ref{lem_gated_borders} that $\partial_y F(x)$ has 
    dimension $\le d-1$.
\end{proof}

\begin{lemma} 
    \label{lem_isometric_Uborders} 
    Let $H$ be a gated subgraph of a median graph $G$. 
    Then the total boundary $\partial^* F(x)$ of any fiber $F(x)$ of $H$ 
    induces an isometric subgraph of $G$.
\end{lemma}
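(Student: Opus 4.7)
The plan is to show, given arbitrary $u, v \in \partial^* F(x)$, that there exists a shortest $(u,v)$-path of $G$ contained in $\partial^* F(x)$; since shortest-path distances can only increase when restricted to an induced subgraph, this suffices for isometricity.

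The central observation I would use is that the vertex $x$ itself lies in every boundary piece $\partial_y F(x)$ with $y \in V(H)$ and $F(y) \sim F(x)$. Indeed, by Lemma \ref{lem_gated_borders}, $F(y) \sim F(x)$ forces $x \sim y$ in $G$; since $y \in F(y)$ (being its own gate in $H$) and $x \in F(x)$ for the same reason, $x$ has the neighbor $y \in F(y)$, so $x \in \partial_y F(x)$. Thus $x$ is a common vertex of all pieces of the total boundary.

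Now suppose $u \in \partial_{y_1} F(x)$ and $v \in \partial_{y_2} F(x)$. I would introduce $m := m(u, v, x)$, the unique median of this triple in the median graph $G$. By Lemma \ref{lem_gated_borders} each of $\partial_{y_1} F(x)$ and $\partial_{y_2} F(x)$ is gated, hence convex by Lemma \ref{thm_convex=gated}. Since $\partial_{y_1} F(x)$ contains both $u$ and $x$, convexity gives $I(u, x) \subseteq \partial_{y_1} F(x)$; symmetrically $I(v, x) \subseteq \partial_{y_2} F(x)$. Because $m \in I(u,x) \cap I(v,x)$, we obtain $m \in \partial_{y_1} F(x) \cap \partial_{y_2} F(x) \subseteq \partial^* F(x)$.

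To conclude, I would concatenate a shortest $(u,m)$-path in $\partial_{y_1} F(x)$ with a shortest $(m,v)$-path in $\partial_{y_2} F(x)$; both exist by convexity of these gated subgraphs. Since $m \in I(u,v)$ as well, the concatenation has length $d_G(u,m) + d_G(m,v) = d_G(u,v)$, so it is a shortest $(u,v)$-path of $G$ lying entirely in $\partial_{y_1} F(x) \cup \partial_{y_2} F(x) \subseteq \partial^* F(x)$. There is no genuine obstacle here: neither induction nor case analysis will be necessary, and the case $y_1=y_2$ is absorbed into the general argument. The only real insight is to identify $x$ as a canonical common point of all the boundary pieces, after which the median of the triple $u, v, x$ does all the work.
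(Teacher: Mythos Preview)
Your proof is correct and is essentially the paper's own argument: both take the median $m(u,v,x)$, use the convexity of the gated boundaries $\partial_{y_i}F(x)$ to place $I(u,x)$ and $I(v,x)$ inside them, and then concatenate shortest paths through $m$. Your version is slightly more explicit in justifying that $x$ belongs to every $\partial_y F(x)$, which the paper leaves implicit.
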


\begin{proof} 
    Pick $u,v\in \partial^* F(x)$, say $u\in \partial_y F(x)$ and $v\in 
    \partial_z F(x)$.
    Let $w$ be the median of the triplet $x,u,v$. Since $w\in I(u,x)\subseteq 
    \partial_y F(x)\subset \partial^* F(x)$ we deduce that $I(u,w)\subseteq 
    \partial^* F(x)$. Analogously, we can show that $I(v,w)\subseteq \partial^* 
    F(x)$. Since $w\in I(u,v)$ and $I(u,w)\cup I(w,v)\subseteq \partial^* 
    F(x)$, the vertices $u$ and $v$ can be connected in $\partial^* F(x)$ by a 
    shortest path passing via $w$.
\end{proof}

We conclude this section with an additional property of fibers of stars of  centroids of $G$.
Recall, that $c$ is a centroid of $G$ if $c$ minimizes the function 
$M(x)=\sum_{v\in V} d_G(x,v).$

\begin{lemma} 
    \label{lem_small_halfspaces} 
    Let $c$ be a centroid of a median graph $G$ with $n$ vertices. Then any 
    fiber $F(x)$ of the star $\St(c)$ of $c$ has at most $n/2$ vertices.
\end{lemma}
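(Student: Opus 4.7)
The plan is to reduce the statement to the standard centroid inequality applied to a single edge incident to $c$, by showing that every fiber $F(x)$ is contained in the halfspace $W(u,c)$ for some suitably chosen neighbor $u$ of $c$.

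First I would dispose of the trivial case $x = c$: if $v \in F(c)$, then $c$ is the gate of $v$ in $\St(c)$, so $d_G(v,u') = d_G(v,c) + 1$ for every neighbor $u'$ of $c$. But if $v \ne c$, any shortest $(v,c)$-path ends at some neighbor $u'$ of $c$ with $d_G(v,u') = d_G(v,c) - 1$, a contradiction. Hence $F(c) = \{c\}$, and $|F(c)| = 1 \le n/2$.

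Now assume $x \ne c$. Since $x \in \St(c)$, we have $d_G(c,x) \in \{1,2\}$, so I can pick a neighbor $u$ of $c$ with $d_G(x,u) = d_G(x,c) - 1$ (either $u=x$ or $u$ is a common neighbor of $c$ and $x$ in a square of $\St(c)$ containing them). The key claim is $F(x) \subseteq W(u,c)$. Indeed, for $v \in F(x)$, since $x$ is the gate of $v$ in $\St(c)$ and both $c,u \in \St(c)$, we get
\[
d_G(v,c) = d_G(v,x) + d_G(x,c), \qquad d_G(v,u) = d_G(v,x) + d_G(x,u).
\]
Subtracting and using $d_G(x,u) = d_G(x,c) - 1$ yields $d_G(v,u) = d_G(v,c) - 1 < d_G(v,c)$, so $v \in W(u,c)$.

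It now suffices to show $|W(u,c)| \le n/2$, which is the classical centroid inequality for the edge $cu$. Since $G$ is bipartite, $V = W(u,c) \sqcup W(c,u)$, and for $v \in W(u,c)$ we have $d_G(v,u) - d_G(v,c) = -1$ while for $v \in W(c,u)$ the same difference equals $+1$. Summing gives $M(u) - M(c) = |W(c,u)| - |W(u,c)|$. Since $c$ is a centroid, $M(u) \ge M(c)$, so $|W(u,c)| \le |W(c,u)|$ and therefore $|W(u,c)| \le n/2$, which finishes the argument.

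The only subtlety is choosing $u$ correctly when $x$ has distance $2$ from $c$ and then exploiting the full strength of the gate property of $\St(c)$ (Lemma \ref{prop_St(m)_convex}) to handle both vertices $c$ and $u$ simultaneously; once this is set up, the proof reduces to the textbook halfspace centroid bound, so I do not anticipate any real obstacle.
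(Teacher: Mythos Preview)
Your proof is correct and follows essentially the same approach as the paper: choose a neighbor $u$ of $c$ in $I(x,c)$, show $F(x)\subseteq W(u,c)$, and then use the centroid identity $M(u)-M(c)=|W(c,u)|-|W(u,c)|\ge 0$ to conclude $|W(u,c)|\le n/2$. The only cosmetic differences are that the paper argues by contradiction and does not separate out the trivial case $x=c$, whereas you proceed directly and handle that case explicitly.
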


\begin{proof} 
    Suppose by way of contradiction that $|F(x)|>n/2$ for some vertex $x\in 
    \St(c)$. Let $u$ be a neighbor of $c$ in $I(x,c)$.
    If $v\in F(x)$, then  $x\in I(v,c)$ and  $u\in I(x,c)$, and we conclude 
    that $u\in I(v,c)$. Consequently, $F(x)\subseteq W(u,c)$, whence
    $|W(u,c)|>n/2$. Therefore $|W(c,u)|=n-|W(u,c)|<n/2$. But this contradicts 
    the fact that $c$ is a centroid of $G$. Indeed,
    since $u\sim c$, one can easily show that $M(u)-M(c)=|W(c,u)|-|W(u,c)|<0$.
\end{proof}

Unfortunately, the total boundary $\partial^* F(x)$ of a fiber does not always induce a  median subgraph. Therefore,
even if  $\partial^* F(x)$ is an isometric subgraph of $G$ of dimension $\le \dim(G)-1$, one cannot recursively apply the algorithm
to the subgraphs induced by the total boundaries $\partial^* F(x)$. However, if $G$ is 2-dimensional (i.e., $G$ is cube-free), then the total boundaries of fibers are isometric
subtrees of $G$ and one can use for them  distance and routing schemes for trees. Even in this case, we still need an additional
property of total boundaries, which we will establish in the next section.

\section{Fibers in cube-free median graphs}
\label{sect_cube-free_fibers}

In this section, we establish additional properties of fibers of stars and  of 
their total boundaries in cube-free median graphs $G=(V,E)$.
Using them we can show that for any pair $u,v$ of vertices of $G$, the 
following trichotomy holds: the distance $d_G(u,v)$ either can be computed as 
$d_G(u,c)+d_G(c,v)$, or as the sum of distances from $u,v$ to appropriate 
vertices $u',v'$  of $\partial^*F(x)$ plus the distance between $u',v'$ in 
$\partial^*F(x)$, or via a recursive call to the  fiber containing $u$ and $v$.

\subsection{Classification of fibers}
\label{classification-fibers}

From now on, let $G=(V,E)$ be a cube-free median graph. Then the star $\St(z)$ 
of any vertex $z$ of $G$ is the union of all squares and edges containing $z$. 
Specifying the bijection between stars of median graphs and simplex graphs of 
arbitrary graphs mentioned above, the stars of cube-free median graphs 
correspond to simplex graphs of triangle-free graphs.

Let $z$ be an arbitrary vertex of $G$ and let ${\mathcal F}_z=\{ F(x): x\in
\St(z)\}$ denote the partition of $V$ into the fibers of $\St(z)$.
We distinguish two types of fibers: the fiber $F(x)$ is called a {\it panel} if $x$ is adjacent to $z$ and $F(x)$ is called a
{\it cone} if $x$ has distance two to $z$. The interval $I(x,z)$ is the edge $xz$ if $F(x)$ is a panel and is a
square $Q_x:=(x,y',z,y'')$ if $F(x)$ is a cone. In the second case, since $y'$ and $y''$ are the only neighbors of $x$ in $\St(z)$,
by Lemma \ref{lem_gated_borders} we deduce that the cone $F(x)$ is adjacent to the panels $F(y')$ and $F(y'')$ and that $F(x)$ is not adjacent to any
other panel or cone. By the same lemma, any panel $F(y)$ is not adjacent to any other panel, but $F(y)$ is adjacent to all cones
$F(x)$ such that the square $Q_x$ contains the edge $yz$. For an illustration,
see Figure \ref{fig_fibers}.

\subsection{Total boundaries of fibers are quasigated}
\label{crucial-total-boundary}

For a set $A$, an \emph{imprint}  of a vertex $u\notin A$ on $A$ is a vertex $a\in A$ such that
$I(u,a)\cap A=\{ a \}$. Denote by $\eproj{u}{A}$ the set of all imprints of $u$ 
on $A$. 
The most important property of imprints is that for any vertex $z\in A$, there 
exists a shortest $(u,z)$-path passing via an imprint, i.e., that $I(u,z)\cap 
\eproj{u}{A}\ne \varnothing$.
Therefore, if the set $\eproj{u}{A}$ has constant size, one can store
in the label of $u$ the distances to the vertices of  $\eproj{u}{A}$. Using 
this, for any $z\in A$, one can compute $d_G(u,z)$ as $\min \{ 
d_G(u,a)+d_G(a,z): a\in \eproj{u}{A}\}$.
Note that a set $A$ is gated if and only if any vertex $u\notin A$ has a unique imprint on $A$.
Following this, we will say that a set $A$ is {\it $k$-gated} if for any vertex 
$u\notin A$, $|\eproj{u}{A}|\le k$. In particular, we will say that a set $A$ 
is {\it quasigated} if $|\eproj{u}{A}|\le 2$ for any vertex $u\notin A$.
The main goal of this subsection is to show that the total boundaries of fibers are quasigated.

\begin{figure}
    \centering
    \includegraphics[scale=0.4]{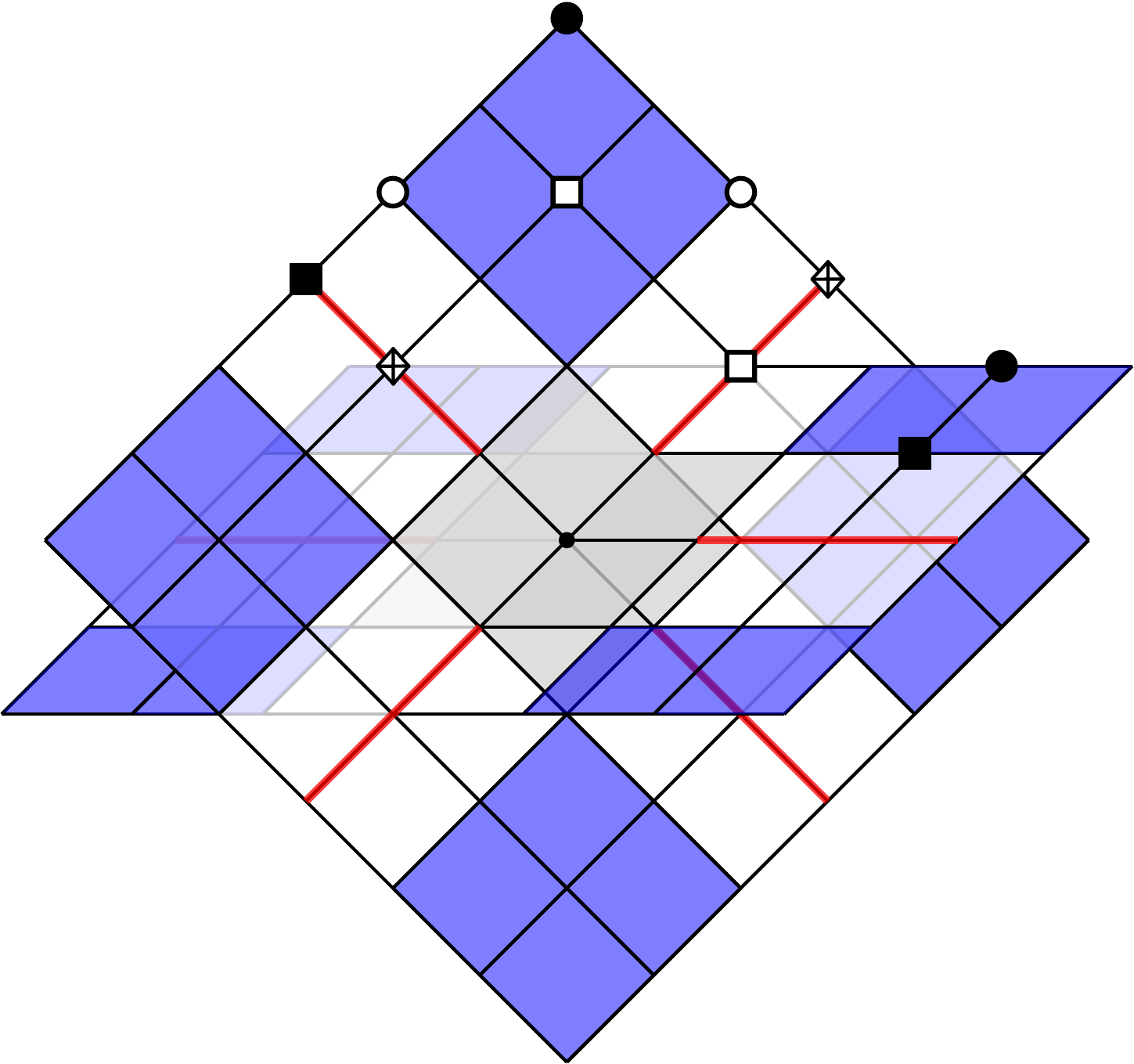}
    \caption{\label{fig_fibers}
        A star $\St(z)$ (in gray), its fibers (\cones in blue, and \panels in
        red), and an illustration of the classification of pairs of vertices:
        the pairs of empty circle points are \roommates, of disk points are
        \aNeighboring,of empty square points are \neighboring, and of full
        square and diamond points are \separated.
    }
\end{figure}

Let $T$ be a tree with a distinguished vertex $r$ in $G$.
The vertex $r$ is called the {\it root} of $T$ and $T$ is called a {\it rooted 
tree}. 
We will say that a rooted tree $T$ has {\it gated branches} if for any vertex 
$x$ of $T$ the unique path $P(x,r)$ of $T$ connecting $x$ to the root $r$ is a 
gated subgraph of $G$.

By adapting Lemmas \ref{lem_gated_borders}, \ref{lem_dim-1_Uborders}, and
\ref{lem_isometric_Uborders} to cube-free median graphs, we obtain the 
following result:

\begin{lemma} 
    \label{lem_Uborders_cube-free}
    For each \fiber $F(x)$ of a star $\St(z)$ of a cube-free median graph $G$,
    the total boundary $\partial^*F(x)$ is an isometric tree with gated branches.
\end{lemma}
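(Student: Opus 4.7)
My plan is to lift the three general fiber-boundary results (Lemmas~\ref{lem_gated_borders}, \ref{lem_dim-1_Uborders}, and \ref{lem_isometric_Uborders}) to a tree structure using cube-freeness $\dim(G)\le 2$, and then root the resulting tree at the fiber's apex to obtain gated branches. By Lemma~\ref{lem_gated_borders}, every individual boundary $\partial_y F(x)$ is a gated subgraph of $G$ of dimension at most $\dim(G)-1\le 1$; since gated subgraphs of a median graph are themselves median, and a $1$-dimensional median graph is a tree, each $\partial_y F(x)$ is a \emph{gated tree} of $G$. By Lemma~\ref{lem_isometric_Uborders}, the total boundary $T:=\partial^* F(x)$ is isometric in $G$, and by Lemma~\ref{lem_dim-1_Uborders} applied with $d=\dim(G)=2$, $T$ contains no $2$-cube. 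Since a $4$-cycle in a bipartite median graph is always a $2$-cube, $T$ is $4$-cycle-free.

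To conclude $T$ is a tree, I exclude cycles of length $\ge 6$. Let $C$ be a shortest cycle of $T$, of length $2k\ge 6$. Minimality forces $d_T(u,v)=d_C(u,v)$ for all $u,v\in V(C)$---any strictly shorter $T$-path would combine with a $C$-arc into a closed walk containing a strictly shorter cycle---and isometry of $T$ upgrades this to $d_G(u,v)=d_C(u,v)$, making $C$ an isometric cycle of $G$. But isometric cycles in a median graph have length $4$: for $|C|=6$, three alternating vertices of $C$ have pairwise $G$-distance $2$, their median is a common neighbor of all three, and the resulting three squares meeting at this apex extend by the downward cube property (Lemma~\ref{descendent_cube}) to a $3$-cube, contradicting cube-freeness; longer cycles reduce to this case via repeated application of the quadrangle condition (Lemma~\ref{thm_quadrangle}), or this may be invoked directly as a classical fact about median graphs (isometric embedding into hypercubes). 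Being acyclic and connected as an isometric subgraph, $T$ is a tree.

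Finally, I root $T$ at $x$. In either case of the fiber classification from Section~\ref{classification-fibers}, $x\in F(x)$ is adjacent in $G$ to every $\St(z)$-neighbor it has: to $y',y''$ if $F(x)$ is a cone, and to $z$ together with the adjacent cone apices $x_i$ if $F(x)$ is a panel. Hence $x$ belongs to every constituent boundary $\partial_y F(x)$. Given $v\in T$, pick $y$ with $v\in\partial_y F(x)$; the unique path from $v$ to $x$ in the tree $\partial_y F(x)$ is a path in $T$, and by uniqueness of tree paths it coincides with the branch $P(v,x)$. A path in a tree is a gated subgraph of that tree, and by transitivity of gatedness in median graphs (a gated subgraph of a gated subgraph is gated, via Lemma~\ref{thm_convex=gated}), the branch $P(v,x)$ is itself gated in $G$. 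The principal technical obstacle is the second step: ruling out cycles of length $\ge 6$ in $T$, which genuinely combines cube-freeness (Lemma~\ref{lem_dim-1_Uborders}) with the median-graph downward cube property (Lemma~\ref{descendent_cube}); once $T$ is known to be a tree, the gated branches property follows cleanly from the gatedness of each constituent $\partial_y F(x)$ together with transitivity.
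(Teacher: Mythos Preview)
Your argument for gated branches is correct and essentially matches the paper's: each $\partial_y F(x)$ is a gated tree containing $x$, so the branch $P(v,x)$ lies inside some $\partial_y F(x)$, is convex there, and convexity passes up through a gated (hence convex) subgraph to convexity in $G$.

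However, your acyclicity argument has a genuine gap. The claim ``isometric cycles in a median graph have length $4$'' is \emph{false}: the $3$-cube $Q_3$ is a median graph and the $6$-cycle $000,001,011,111,110,100$ is isometric in it. So the ``classical fact'' you invoke via hypercube embedding does not hold---hypercubes themselves have isometric cycles of every even length up to their diameter doubled. Your $|C|=6$ argument does work, but only because you explicitly use cube-freeness to produce a forbidden $3$-cube; that is a statement about \emph{cube-free} median graphs, not median graphs in general. For $|C|\ge 8$ you then fall back on the false general claim, and the phrase ``reduce to this case via repeated application of the quadrangle condition'' is not an argument. The needed fact (no isometric cycle of length $\ge 6$ in a cube-free median graph) is true and can be proven, e.g.\ by filling the cycle inward via the quadrangle condition until $a_0$ acquires more than two neighbors in $I(a_0,a_k)$, contradicting Corollary~\ref{neighbors_interval}; but you have not carried this out.

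The paper sidesteps all of this with a one-line argument that does not split on cycle length and does not need Lemma~\ref{lem_dim-1_Uborders} at all. Suppose $C\subseteq\partial^*F(x)$ is a cycle and let $u$ be a vertex of $C$ farthest from $x$, say $u\in\partial_yF(x)$. The two $C$-neighbors $v',v''$ of $u$ are strictly closer to $x$ (bipartiteness), hence lie in $I(u,x)$; since $\partial_yF(x)$ is gated and contains $u,x$, convexity gives $v',v''\in\partial_yF(x)$. But $\partial_yF(x)$ is a tree rooted at $x$, so $u$ has a unique neighbor closer to $x$---contradiction. This uses only that each relative boundary is a gated tree, which you already established.
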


\begin{proof}  
    By Lemma \ref{lem_isometric_Uborders}, $\partial^* F(x)$ is an isometric 
    subgraph of $G$. Now we show that  $\partial^* F(x)$  is a tree. By Lemma 
    \ref{lem_gated_borders}, each boundary $\partial_yF(x)$ is a gated tree  
    rooted at $x$, thus  $\partial^* F(x)$ is the union of the gated trees. 
    Suppose by way of contradiction that $\partial^* F(x)$ contains a cycle 
    $C$. Let $u$ be a furthest from $x$ vertex of $C$ and suppose that $u\in 
    \partial_yF(x)$.
    Let $v',v''$ be the neighbors of $u$ in $C$. Since $G$ is bipartite, from 
    the choice of $u$ we conclude that $v',v''$ are closer to $x$ than $u$. 
    Since $\partial_yF(x)$ is gated, and thus convex, this implies that 
    $v',v''\in I(u,x)\subseteq \partial_yF(x)$, contrary to the fact that 
    $\partial_yF(x)$ is a gated tree  rooted at $x$.
    This shows that $\partial^*F(x)$ is an isometric tree of $G$.

    For any vertex $v\in \partial^*F(x)$ there exists a fiber $F(y)\sim F(x)$ 
    of $\St(z)$ such that $v$ belongs to the boundary $\partial_y F(x)$ of 
    $F(x)$ relative to $F(y)$.
    Since $\partial_y F(x)$ is a gated subtree of $G$ and $v,x\in \partial_y 
    F(x)$, the unique path $P(v,x)$ connecting $v$ and $x$ in $\partial_y F(x)$ 
    is a convex subpath of $\partial_y F(x)$, and therefore a convex subpath of 
    the whole graph $G$. Since convex subgraphs are gated, $P(x,v)$ is a gated 
    path of $G$ belonging to $\partial^*F(x)$. This proves that 
    $\partial^*F(x)$ is a tree with gated branches.
\end{proof}

By Lemma \ref{lem_Uborders_cube-free},  each $\partial^*F(x)$ induces a tree of $G$. This tree is an isometric subgraph of $G$, which means that
any two vertices of $\partial^*F(x)$ can be connected in $\partial^*F(x)$ by a shortest path of $G$.  Moreover, since $\partial^*F(x)$ has gated branches,
the respective shortest path connecting any vertex $v$ of
$\partial^*F(x)$ to the root $x$ coincides with the entire interval $I(v,x)$. However the tree $\partial^*F(x)$
is not necessarily gated itself. Since a panel $F(x)$ may be adjacent to an arbitrary number of cones,
one can think that the imprint-set $\eproj{u}{\partial^*F(x)}$ of a vertex $u$ of $F(x)$
may have an arbitrarily large size. The following  lemma shows that this is not the case,
namely that $|\eproj{u}{\partial^*F(x)}|\le 2$. This is one of the key ingredients in
the design of the distance and routing labeling schemes
presented in Sections \ref{sect_dist_labeling} and \ref{sect_rout_labeling}. 

\begin{lemma} 
    \label{lem_2_gates_Uborders} 
    Any rooted tree $T$ with gated branches of $G=(V,E)$ is quasigated.
\end{lemma}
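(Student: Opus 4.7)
My plan is to argue by contradiction: suppose $u \notin V(T)$ admits three distinct imprints $a_1, a_2, a_3$ on $T$; the aim is to exhibit three distinct neighbors of a single vertex $s$ inside the interval $I_G(u, s)$, thereby contradicting Corollary~\ref{neighbors_interval}.

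Setup. Let $s$ be the Steiner point of $\{a_1, a_2, a_3\}$ in $T$, i.e.\ the unique vertex of $T$ lying on all three pairwise tree-paths $P_T(a_i, a_j)$. The imprint property combined with the gatedness of $P_T(a_i, r)$ forces the gate of $u$ on $P_T(a_i, r)$ to be $a_i$ itself; otherwise an interior $T$-ancestor of $a_i$ would sit in $I_G(u, a_i)\cap V(T)$, contradicting imprint. This yields the distance identity
\begin{equation*}
    d_G(u, x) = d_G(u, a_i) + d_G(a_i, x) \quad \text{for every } x \in P_T(a_i, r).
\end{equation*}
A short median-of-$(x, y, r)$ argument, using the gated branches, moreover shows that $T$ is isometric in $G$ and that tree-lcas are medians in $G$; in particular $s = m_G(a_1, a_2, a_3)$ and $s \notin \{a_1, a_2, a_3\}$.

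Case split. In the \emph{star case}, where $s$ is the lca of every pair $(a_i, a_j)$, the three distinct tree-neighbors $p_1, p_2, p_3$ of $s$ on the branches to $a_1, a_2, a_3$ satisfy $d_G(u, p_i) = d_G(u, s) - 1$ by the identity above; they are three distinct neighbors of $s$ in $I_G(u, s)$, contradicting Corollary~\ref{neighbors_interval}. In the remaining \emph{split case}, without loss of generality $s = \ell_{12}$ lies strictly below $\ell := \ell_{13} = \ell_{23}$, so only the two tree-neighbors $p_1, p_2$ of $s$ on branches toward $a_1, a_2$ sit in $I_G(u, s)$ (the $T$-parent $s^+$ of $s$ has $d_G(u, s^+) = d_G(u, s) + 1$, and any additional tree-neighbor of $s$ is either already in $I_G(u, s)$, finishing the proof, or at distance $d_G(u, s) + 1$ by bipartiteness). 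To supply a third neighbor, let $a_3'$ be the tree-neighbor of $\ell$ on $P_T(\ell, a_3)$ and set $w := m_G(u, s, a_3')$. Using the isometry of $T$ in $G$ together with the setup identity, one computes $d_G(s, a_3') = d_G(s, \ell) + 1$ and $d_G(u, a_3') = d_G(u, s) + d_G(s, \ell) - 1$; solving the standard median distance system then yields $d_G(w, s) = 1$ and $d_G(u, w) = d_G(u, s) - 1$, so $w$ is a neighbor of $s$ in $I_G(u, s)$.

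It remains to see $w \notin V(T)$. Since $w \in I_G(s, a_3')$, I check that among tree-neighbors of $s$ only $s^+$ can belong to $I_G(s, a_3')$: for every other tree-neighbor $p$ of $s$, the tree-lca of $p$ and $a_3'$ is $s$ itself, so $d_T(p, a_3') = 1 + d_G(s, \ell) + 1 > d_G(s, \ell) = d_G(s, a_3') - d_G(s, p)$. Since $d_G(u, s^+) = d_G(u, s) + 1 \ne d_G(u, w)$, one has $w \ne s^+$, hence $w \notin V(T)$, and $\{p_1, p_2, w\}$ provides the three distinct neighbors of $s$ in $I_G(u, s)$ needed for the contradiction. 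The chief obstacle is the split-case distance computation pinpointing $w$ adjacent to $s$ yet outside $V(T)$, which relies essentially on the existence and uniqueness of medians in $G$ combined with the isometry of $T$ in $G$ afforded by the gated-branch hypothesis.
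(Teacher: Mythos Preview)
Your proof is correct and takes a genuinely different route from the paper's.

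The paper argues via a \emph{minimal counterexample}: among all rooted trees with gated branches admitting three imprints of $u$, it picks one with the fewest vertices, reduces to the three root-to-leaf paths, and then either (Case~1) explicitly builds an isometric $Q_3$ when the three imprints are equidistant from $r$, or (Case~2) constructs a parallel gated path producing a strictly smaller counterexample.  Your argument is more direct: you locate the Steiner point $s$ of the three imprints in $T$ and show that $s$ has three distinct neighbours in $I_G(u,s)$, invoking Corollary~\ref{neighbors_interval} (the downward-cube bound) for the contradiction.  This avoids the minimality machinery entirely and makes the role of cube-freeness transparent; the paper's approach, in exchange, is self-contained in that it exhibits the forbidden $3$-cube by hand rather than appealing to Corollary~\ref{neighbors_interval}.

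One small slip: in the split case you write ``the tree-lca of $p$ and $a_3'$ is $s$ itself'', but since $a_3'$ lies in the subtree of $\ell$ not containing $s$, the tree-lca of a child $p$ of $s$ and $a_3'$ is $\ell$, not $s$.  Your distance computation $d_T(p,a_3') = 1 + d_G(s,\ell) + 1$ is nevertheless correct (the tree path from $p$ to $a_3'$ passes through $s$ and then $\ell$), so the conclusion that $p\notin I_G(s,a_3')$ stands.  Also, your claim $w\notin V(T)$ implicitly uses the isometry of $T$ in $G$ (a $G$-neighbour of $s$ lying in $V(T)$ would have $d_T(\cdot,s)=1$, hence be a tree-neighbour of $s$); you may want to make that one line explicit.
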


\begin{proof} 
    Let $r$ be the root of $T$. Pick any $u\in V\setminus V(T)$ and suppose by 
    way of contradiction
    that  $\eproj{u}{T}$ contains three distinct imprints $x_1$, $x_2$, and 
    $x_3$. Since $T$ has gated branches,
    none of the vertices $x_1,x_2,x_3$ belong to the path of $T$ between the 
    root $r$ and another vertex from this triplet.
    In particular,  $r$ is different from $x_1,x_2,x_3$.
    Suppose additionally that among all rooted trees $T'$ with gated branches 
    of $G$ and such that  $|\eproj{u}{T'}|\ge 3$,
    the tree $T$ has the minimal number of vertices. This minimality choice (and
    the fact that any subtree of $T$ containing $r$
    is also a  rooted tree with gated branches)  implies that $T$ is exactly 
    the union of the three gated
    paths $P(r,x_1),P(r,x_2),$ and $P(r,x_3)$ connecting the root $r$ with the 
    leaves $x_1,x_2,$ and $x_3$ of $T$.
    Notice that $P(r,x_1),P(r,x_2),$ and $P(r,x_3)$ not necessarily pairwise 
    intersect only in $r$.
    
    First, notice that $x_1,x_2,x_3\in I(u,r)$. Indeed, let  $z_i$ denote the 
    median of the triplet $x_i,u,r$. If $z_i\ne x_i$, since
    $z_i\in I(x_i,r)=P(x_i,r)\subset T$ and $z_i\in I(u,x_i)$, we obtain a 
    contradiction with the inclusion of $x_i$ in $\eproj{u}{T}$.
    Thus $z_i=x_i$, yielding $x_i\in I(u,z_i)$.
    
    Let $y_i$ be the neighbor of $x_i$ in the path $P(r,x_i)$, $i=1,2,3$. Since 
    $G$ is bipartite, either $x_i\in I(y_i,u)$ or $y_i\in I(x_i,u)$. Since 
    $x_i\in \eproj{u}{T}$, necessarily
    $x_i\in I(y_i,u)$. Let $T'_i$ be the subtree of $T$ obtained by removing 
    the leaf $x_i$. From the minimality choice of $T$, we cannot replace $T$ by 
    the subtree $T'_i$. This means that
    $|\eproj{u}{T'_i}|\le 2$. Since $x_j,x_k\in \eproj{u}{T'_i}$ for 
    $\{i,j,k\}=\{ 1,2,3\}$, necessarily $I(y_i,u)\cap \{ x_j,x_k\}\ne 
    \varnothing$ holds.
    
    Suppose without loss of generality that $\dist_G(r,x_3) = \max\{ 
    \dist_G(r,x_i) : i = 1,2,3 \}:=k$. Since $I(y_3,u)\cap \{ x_1,x_2\}\ne 
    \varnothing$ holds, we can
    suppose without loss of generality that $x_2\in I(y_3,u)$.
    Since $x_3\in I(y_3,u)$, from these two inclusions we obtain that 
    $\dist_G(x_3,u) + 1 = \dist_G(y_3,x_2) + \dist_G(x_2,u)$. Therefore, 
    $\dist_G(x_3,u) \geq \dist_G(x_2,u)$.
    Since $x_2,x_3\in I(u,r)$, we have 
    $\dist_G(u,x_2)+\dist_G(x_2,r)=\dist_G(u,x_3)+\dist_G(x_3,r)$. Since 
    $\dist_G(r,x_3)\ge \dist_G(r,x_2)$, all
    this is possible only if  $\dist_G(x_3,u) = \dist_G(x_2,u)$ and  
    $d_G(x_3,r)=d_G(x_2,r)$. Moreover,
    $\dist_G(y_3,x_2)=1$ holds, i.e., $y_3$ and $x_2$ are adjacent in $T$. 
    Since $x_2$ is a leaf of $T$, this is possible only if $y_2$ and $y_3$ 
    coincide. Let $y:=y_2=y_3$.
    We distinguish two cases:

    \smallskip\noindent
    {\bf Case 1.} $\dist_G(x_1,r)=k$.
    
    \smallskip\noindent
    Since all three vertices $x_1,x_2,x_3$ have the same distance $k$ to $r$, 
    we can apply to $x_1$ the same analysis as to $x_3$ and deduce that the 
    neighbor $y_1$ of
    $x_1$ in $T$ coincides with one of the vertices $y_2,y_3$.
    Since $y_2=y_3=y$, we conclude that the vertices $x_1,x_2,x_3$ have the 
    same neighbor $y$ in $T$. Since $y$ is closer to $r$ than each of the 
    vertices $x_1,x_2,x_3$ and since
    $x_1,x_2,x_3\in I(r,u)$, we conclude that $x_1,x_2,x_3\in I(y,u)$. By the 
    minimality of $T$, we conclude that $k=1$ and $y=r$, i.e., $T$ consists 
    only of $x_1,x_2,x_3,$ and $y=r$.
    Applying the quadrangle condition three times, we can find  three vertices 
    $x_{i,j}, i,j\in \{ 1,2,3\}, i\ne j,$ such that
    $x_{i,j}\sim x_i,x_j$ and $d_G(x_{i,j},u)=k-1$ (see Figure 
    \ref{fig_lemma_2gates}, left). If two of the vertices $x_{1,2},x_{2,3},$ 
    and $x_{3,1}$ coincide, then we will get a forbidden $K_{2,3}$: if, say 
    $x_{1,2}=x_{2,3},$
    then this copy of $K_{2,3}$ contains the vertices $y,x_1,x_2,x_3,$ and 
    $x_{1,2}=x_{2,3}$. Thus $x_{1,2},x_{2,3},$ and $x_{3,1}$
    are pairwise distinct. Since $G$ is bipartite, this implies that 
    $d_G(x_i,x_{j,k})=3$ for $\{ i,j,k\}=\{ 1,2,3\}$. Since $x_{1,2},x_{2,3}\in 
    I(x_2,u)$, by quadrangle condition
    there exists a vertex $w$ such that $w\sim x_{1,2},x_{2,3}$ and 
    $d_G(w,u)=k-2$. Since $G$ is bipartite, $d_G(w,x_{3,1})$ equals to 3 or to 
    1. If $d_G(w,x_{3,1})=3=d_G(y,w)$, then the triplet $y,w,x_{3,1}$ has two
    medians $x_1$ and $x_3$, which is impossible, because $G$ is  median. Thus  
    $d_G(w,x_{3,1})=1$, i.e., $w\sim x_{3,1}$. Then one can easily see that
    the vertices $y,x_1,x_2,x_3,x_{1,2},x_{2,3},x_{3,1},w$
    define an isometric  3-cube of $G$, contrary to the assumption that $G$ is 
    cube-free.
    This finishes the analysis of Case 1.

    \smallskip\noindent
    {\bf Case 2.} $\dist_G(x_1,r) < k$.

    \smallskip\noindent
    This implies that $d_G(r,x_1)\le k-1=d_G(r,y)$. Let $r'$ be the neighbor of $r$ in the $(r,y)$-path of $T$. Notice that $r'\notin I(r,x_1)=P(r,x_1)$. Indeed,
    otherwise, $r'\in P(r,x_1)\cap P(r,x_2)\cap P(r,x_3)$ and we can replace the tree $T$ by the subtree $T'$ rooted at $r'$ and
    consisting of the subpaths of $P(r,x_i)$ comprised between $r'$ and $x_i$, $i=1,2,3$. Clearly $T'$ is a rooted tree with gated branches and $x_1,x_2,x_3\in \eproj{u}{T'}$,
    contrary to the minimality choice of the counterexample $T$. Thus $r'\notin
    P(r,x_1)$.

    Let also
    $P(r,x_1)=(r,v_1,\ldots,v_{m-1},v_m=:x_1)$. Notice that $r$ may coincide with $y_1$ and $x_1$ may coincide with $v_1$. Since $v_1,r'\in I(r,u)$, applying the
    quadrangle condition we will find a vertex $v_1'\sim v_1,r'$ at distance
    $d_G(r,u)-2$ from $u$. Since $r'\notin I(r,x_1)$, $v_1'\ne v_2$. Since
    $v_2,v'_1\in I(v_1,u)$,
    by quadrangle condition we will find $v'_2\sim v_2,v'_1$ at distance
    $d_G(r,u)-3$ from $u$. Again, since $r'\notin I(r,x_1)$, $v'_2\ne v_3$.
    Continuing this way,
    we will find the vertices $r',v'_1,v'_2,\ldots,v'_{m-1},v'_m =: x'_1$
    forming
    an $(r',x'_1)$-path $P(r',x'_1)$ and such that $v'_i \sim v_i,v'_{i-1}$,
    $v'_i \ne v_{i+1}$, and $v'_i$ is
    one step closer to $u$ than $v_i$ and $v'_{i-1}$ (see Figure
    \ref{fig_lemma_2gates}, right). From its construction, the path
    $P(r',x'_1)$ is a shortest path. We assert that $P(r',x'_1)$ is gated.  If
    this is not the case,
    by Lemma \ref{thm_convex=gated} and since $P(r',x'_1)$ is shortest, we can find two vertices $v'_{i-1},v'_{i+1}$ having a common neighbor $z'$ different from $v'_i$. Let $z$ be the median
    of the triplet $z',v_{i-1},v_{i+1}$. Then $z$ is a common neighbor of $z',v_{i-1},v_{i+1}$ and $z$ is different from $v_i$ (otherwise, we obtain a forbidden $K_{2,3}$). But then one can easily check that
    the vertices $v_{i-1},v_i,v_{i+1}, v'_{i-1},v'_i,v'_{i+1},z,z'$ induce in $G$ an isometric 3-cube, contrary to the assumption that $G$ is cube-free.  Consequently, $P(r',x'_1)$ is a gated path of $G$.

    Let $T''$ be the tree rooted at $r'$ and consisting of the gated path $P(r',x'_1)$ and the gated subpaths of $P(r,x_2)$ and $P(r,x_3)$ between $r'$ and $x_2,x_3$, respectively. Clearly, $T''$ is a rooted tree with gated branches.
    Notice that $x'_1,x_2,x_3\in \eproj{u}{T''}$. Indeed, if $x_2$ or $x_3$
    belonged to $I(x'_1,u)$, then $x'_1$ would belong to $I(x_1,u)$ and we
    would
    conclude that
    $x_2$ or $x_3$ belongs to $I(x_1,u)$, which is impossible because $x_1\in
    \eproj{u}{T}$.
    On the other hand, $x'_1$ cannot belong to $I(x_2,u)$ or to $I(x_3,u)$ because $d_G(x'_1,u)=d_G(x_1,u)-1\le d_G(x_2,u)=d_G(x_3,u)$. Consequently, $|\eproj{u}{T''}|\ge 3$. Since $T''$ contains less vertices than $T$, we obtain a contradiction
    with the minimality choice of $T$. This concludes the analysis of Case 2, thus  $T$ is quasigated.
\end{proof}

\begin{figure}[ht]
    \begin{minipage}{0.4\textwidth}
        \centering
        \includegraphics[width=0.9\textwidth]{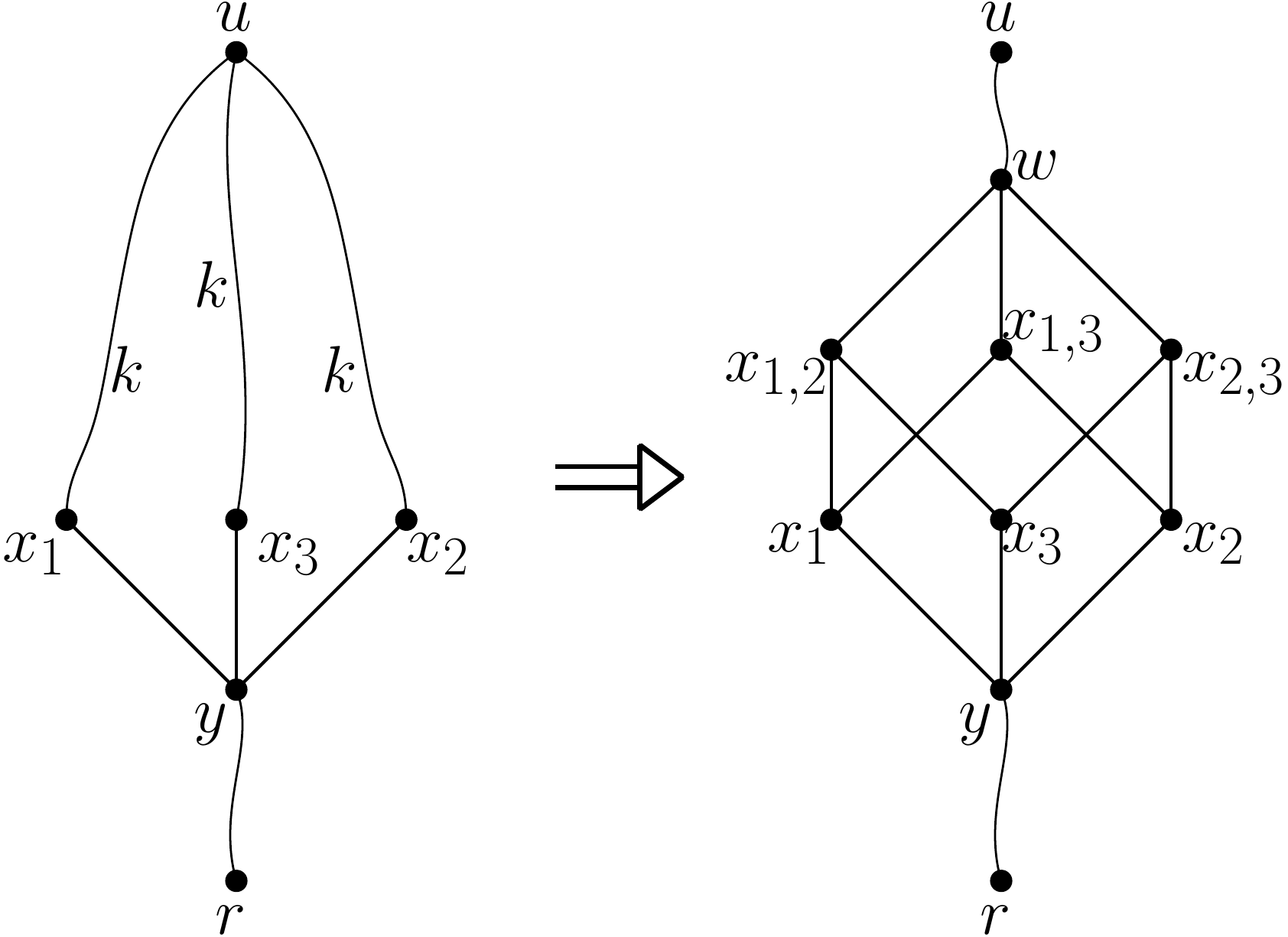}
    \end{minipage}
    \begin{minipage}{0.49\textwidth}
        \centering
        \includegraphics[width=0.8\textwidth]{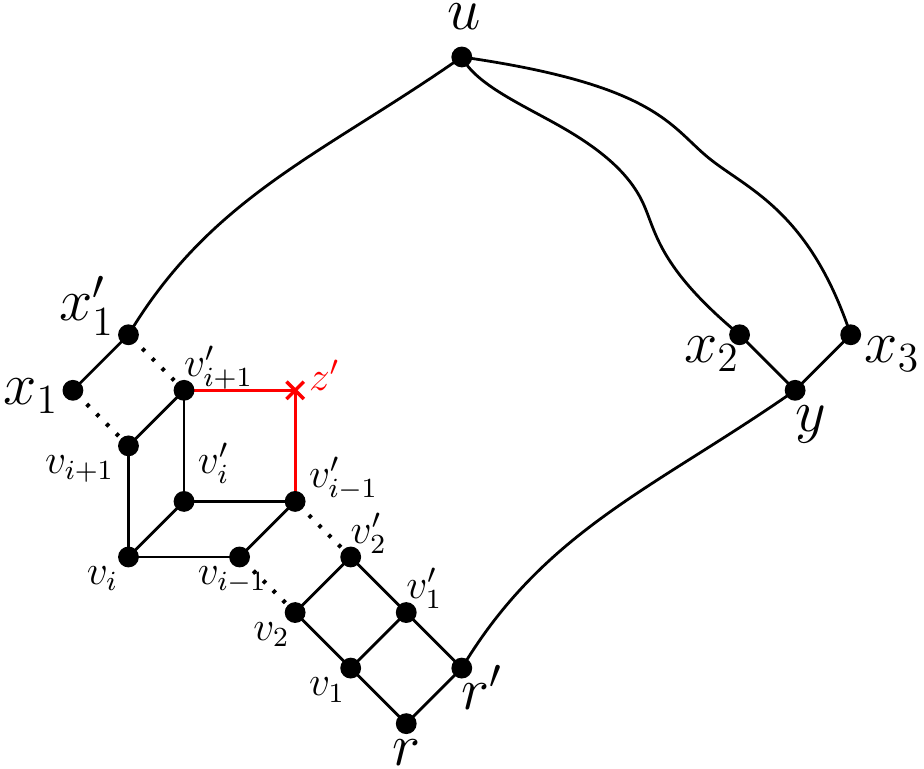}
    \end{minipage}
    \caption{\label{fig_lemma_2gates}
        Cases 1 and 2 of Lemma
        \ref{lem_2_gates_Uborders}.
    }
\end{figure}

Applying  Lemmas \ref{lem_Uborders_cube-free} and \ref{lem_2_gates_Uborders} to the cube-free
median subgraph of $G$ induced by the fiber $F(x)$, we immediately  obtain:

\begin{corollary} 
    \label{proj-total-boundary} 
    The total boundary $\partial^*F(x)$ of any fiber $F(x)$ is quasigated.
\end{corollary}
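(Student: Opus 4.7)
The plan is to combine the two preceding lemmas essentially as a one-line consequence, so the proof is not where the work happens; the work was done in Lemmas \ref{lem_Uborders_cube-free} and \ref{lem_2_gates_Uborders}. First I would invoke Lemma \ref{lem_Uborders_cube-free} applied inside the fiber $F(x)$ (which is itself a cube-free median graph, being gated in $G$): this yields that $\partial^* F(x)$ is an isometric tree with gated branches, and the natural choice of root is the vertex $x$, since by the construction recalled in Lemma \ref{lem_Uborders_cube-free} every boundary $\partial_y F(x)$ is a gated tree rooted at $x$ and $\partial^* F(x)$ is the union of these trees glued at $x$.

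Second, I would feed this rooted tree directly into Lemma \ref{lem_2_gates_Uborders}, which states that any rooted tree with gated branches is quasigated. This gives that every vertex $u \notin \partial^* F(x)$ has at most two imprints on $\partial^* F(x)$, which is exactly the definition of quasigated, finishing the proof.

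The only point that deserves a sentence of care is the distinction between vertices $u$ lying inside $F(x)$ and vertices $u$ lying outside of $F(x)$. For $u \in F(x)\setminus \partial^* F(x)$ the conclusion is immediate from Lemma \ref{lem_2_gates_Uborders}. For $u \notin F(x)$, one can first replace $u$ by its gate $u'$ in $F(x)$ (which exists since $F(x)$ is gated by Corollary \ref{fiber-gated2}): because every shortest path from $u$ to a vertex of $F(x)$ passes through $u'$, the imprint-sets $\eproj{u}{\partial^* F(x)}$ and $\eproj{u'}{\partial^* F(x)}$ coincide, so the bound $|\eproj{u'}{\partial^* F(x)}|\le 2$ coming from Lemma \ref{lem_2_gates_Uborders} transfers to $u$. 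Since all the heavy lifting (the cube-free analysis in Case 1 and the gated-path construction in Case 2) is already contained in Lemma \ref{lem_2_gates_Uborders}, there is no remaining obstacle here; this corollary is simply the packaging of those two lemmas in the language of fibers.
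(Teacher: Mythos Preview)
Your proposal is correct and follows essentially the same approach as the paper: combine Lemma~\ref{lem_Uborders_cube-free} (to get that $\partial^* F(x)$ is a rooted tree with gated branches) with Lemma~\ref{lem_2_gates_Uborders} (to conclude quasigatedness). One minor simplification: your case distinction between $u\in F(x)$ and $u\notin F(x)$ is unnecessary, since the branches of $\partial^* F(x)$ are gated in $G$ itself (not merely in $F(x)$), so Lemma~\ref{lem_2_gates_Uborders} applied in $G$ already covers every vertex $u\in V\setminus \partial^* F(x)$ at once.
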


\subsection{Classification of pairs of vertices}
\label{classification-pairs}

In Section \ref{classification-fibers} we classified the fibers of
$\St(z)$ into panels and cones. Now, we use this classification
to provide a classification of pairs of vertices of $G$ with respect to the 
partition into fibers, which extends the one done in \cite{chepoi2006distance}
for planar median graphs.

Let $z$ be an arbitrary fixed vertex of a cube-free median graph $G=(V,E)$.
Let ${\mathcal F}_z=\{ F(x): x\in \St(z)\}$ denote the partition of $V$ into 
the fibers of $\St(z)$.

Let $u,v$ be two arbitrary vertices of $G$ and suppose that $u$ belongs to the fiber $F(x)$ and $v$ belongs
to the fiber $F(y)$ of ${\mathcal F}_z$. We say that $u$ and $v$ are {\it
\roommates} if they belong to the same fiber, i.e., $x=y$.
We  say that $u$ and $v$ are {\it \neighboring} if $F(x)$ and $F(y)$ are two
neighboring fibers (then one of them is a panel and another is a cone).
We say that $u$ and $v$ are {\it \aNeighboring} if $F(x)$ and $F(y)$ are
distinct cones neighboring  with a common panel, i.e., there exists a panel
$F(w)\sim F(x),F(y)$.
Finally, we say that $u$ and $v$ are {\it \separated} if the fibers $F(x)$ and
$F(y)$ are distinct, are not neighboring, and if both $F(x)$ and $F(y)$ are
cones,
then they are not \aNeighboring. For an illustration, see Figure
\ref{fig_fibers}.
From the definition it easily follows that any two vertices $u,v$ of $G$ are 
either \roommates, or  \separated, or \neighboring, or \aNeighboring.
Notice also the following transitivity property of this classification: if $u'$ 
belongs to the same fiber $F(x)$ as $u$ and $v'$ belongs to the same fiber 
$F(y)$ as $v$, then $u',v'$ are classified in the same category as $u,v$.

We continue with distance formulae for \separated, \aNeighboring, and
\neighboring vertices.
The illustration of each of the formulae is provided in Figure
\ref{fig_shortest_paths}.

\begin{lemma}
    \label{lem_shortest_path_separated} 
    For vertices $u$ and $v$ belonging  to the fibers $F(x)$ and $F(y)$ of 
    $\St(z)$, respectively, the following conditions are equivalent:
    \begin{enumerate}[(i)]
        \item $u$ and $v$ are \separated;
        \item $I(x,z)\cap I(y,z)=\{ z\}$;
        \item $d_G(u,v)=d_G(u,z)+d_G(z,v)$, i.e., $z\in I(u,v)$.
    \end{enumerate}
\end{lemma}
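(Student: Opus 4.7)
The plan is to prove the three equivalences via a preliminary reformulation of (ii), a case analysis for (i)$\Leftrightarrow$(ii), and a median-based argument for (ii)$\Leftrightarrow$(iii).

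\textbf{Preliminary.} I would first observe that (ii) is equivalent to $z\in I(x,y)$, i.e., $d_G(x,y)=d_G(x,z)+d_G(z,y)$. Indeed, the median $m(x,y,z)$ always lies in $I(x,z)\cap I(y,z)$, so (ii) forces $m(x,y,z)=z$ and hence $z\in I(x,y)$. Conversely, if $z\in I(x,y)$ and $w\in I(x,z)\cap I(y,z)$, then summing $d_G(x,w)+d_G(w,z)=d_G(x,z)$ and $d_G(y,w)+d_G(w,z)=d_G(y,z)$ gives $d_G(x,w)+d_G(y,w)+2d_G(w,z)=d_G(x,y)$, and the triangle inequality $d_G(x,w)+d_G(y,w)\ge d_G(x,y)$ forces $w=z$.

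\textbf{(i)$\Leftrightarrow$(ii).} I would case-split on whether each of $F(x),F(y)$ is a panel (so $I(x,z)=\{x,z\}$) or a cone (so $I(x,z)$ is the square $Q_x$). Two distinct panels immediately give $\{x,z\}\cap\{y,z\}=\{z\}$. For a panel $F(x)$ and a cone $F(y)$, we have $F(x)\sim F(y)$ iff $x\sim y$ (Lemma~\ref{lem_gated_borders}) iff $x\in Q_y=I(y,z)$, so being non-neighboring is exactly $I(x,z)\cap I(y,z)=\{z\}$. For two cones, being 2-neighboring means sharing a common panel $F(w)$, which amounts to a common neighbor $w$ of $z$ with $w\sim x,y$, i.e., $w\in Q_x\cap Q_y\setminus\{z\}$; thus non-2-neighboring is equivalent to $Q_x\cap Q_y=\{z\}$. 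Conversely, in each excluded case (roommates with $x=y$, neighboring with $x\sim y$, or 2-neighboring via a common panel $F(w)$), one exhibits a vertex of $I(x,z)\cap I(y,z)$ different from $z$, so (ii) also fails.

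\textbf{(ii)$\Rightarrow$(iii).} Let $m:=m(u,v,z)$. From $m\in I(u,v)\cap I(u,z)\cap I(v,z)$ one gets the identity $d_G(u,v)=d_G(u,z)+d_G(z,v)-2d_G(m,z)$, so it suffices to prove $m=z$. Let $m^*$ be the gate of $m$ in $\St(z)$, which exists because $\St(z)$ is gated (Lemma~\ref{prop_St(m)_convex}). Using the gate equations $d_G(u,z)=d_G(u,x)+d_G(x,z)$ and $d_G(u,m^*)=d_G(u,x)+d_G(x,m^*)$, the gate equation $d_G(m,z)=d_G(m,m^*)+d_G(m^*,z)$, and the triangle inequality $d_G(u,m)+d_G(m,m^*)\ge d_G(u,m^*)$, chaining with $m\in I(u,z)$ yields $d_G(x,m^*)+d_G(m^*,z)\le d_G(x,z)$, forcing $m^*\in I(x,z)$. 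Symmetrically $m^*\in I(y,z)$, so (ii) gives $m^*=z$, i.e., $m\in F(z)$. A short argument then shows $F(z)=\{z\}$: if $w\in F(z)\setminus\{z\}$, the neighbor of $z$ on a shortest $(w,z)$-path lies in $\St(z)$ and is strictly closer to $w$ than $z$, contradicting that $z$ is the gate of $w$. Hence $m=z$ and $d_G(u,v)=d_G(u,z)+d_G(z,v)$.

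\textbf{(iii)$\Rightarrow$(ii) and the obstacle.} I would argue contrapositively: if (ii) fails, the preliminary gives $d_G(x,y)<d_G(x,z)+d_G(z,y)$, and combining the gate equations $d_G(u,z)=d_G(u,x)+d_G(x,z)$, $d_G(v,z)=d_G(v,y)+d_G(y,z)$ with the triangle inequality $d_G(u,v)\le d_G(u,x)+d_G(x,y)+d_G(y,v)$ yields $d_G(u,v)<d_G(u,z)+d_G(z,v)$, contradicting (iii). The main obstacle is in (ii)$\Rightarrow$(iii): the median $m$ of $u,v,z$ is not a priori in $\St(z)$, so the key idea is to pass to its gate $m^*$, apply (ii) inside $\St(z)$ to get $m^*=z$, and then exploit the triviality $F(z)=\{z\}$ to conclude $m=z$.
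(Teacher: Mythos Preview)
Your proof is correct and follows the same overall architecture as the paper's: a case analysis on panel/cone types for (i)$\Leftrightarrow$(ii), and a median-of-$(u,v,z)$ argument for (ii)$\Leftrightarrow$(iii). The implementation details differ slightly. For (ii)$\Rightarrow$(iii), the paper also assumes the median $m=m(u,v,z)\ne z$ but, instead of taking the gate $m^*$ of $m$ in $\St(z)$, picks the neighbor $s$ of $z$ in $I(z,m)$; since $s\in\St(z)$ and $s\notin I(x,z)$ (say), the gate property forces $d_G(s,u)=1+d_G(z,u)$, contradicting $s\in I(z,u)$. Your route via $m^*\in I(x,z)\cap I(y,z)$ and the observation $F(z)=\{z\}$ is equally clean and perhaps more conceptual. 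For (iii)$\Rightarrow$(ii), the paper argues directly that $I(u,z)\cap I(v,z)=\{z\}$ and then invokes convexity of intervals (Lemma~\ref{convex-interval}) to get $I(x,z)\subseteq I(u,z)$ and $I(y,z)\subseteq I(v,z)$; your contrapositive via the triangle inequality is more elementary, as it avoids any appeal to interval convexity. Your preliminary reformulation of (ii) as $z\in I(x,y)$ is not used in the paper's proof but is a natural observation.
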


\begin{proof} 
    (i)$\Longleftrightarrow$(ii): Notice that $u$ and $v$ are \separated if and 
    only if $x\ne y$ and either $F(x)$ and $F(y)$ both are panels, or both are 
    cones not having a neighboring panel, or one is a cone and another is a 
    panel and the cone and the panel are not neighboring.
    If $F(x)$ and $F(y)$ are panels, then $I(x,z)=\{ x,z\}$ and $I(y,z)=\{ 
    y,z\}$, thus  $I(x,z)\cap I(y,z)=\{ z\}$.
    If $F(x)$ and $F(y)$ are cones, then $I(x,z)$ and $I(y,z)$ are two squares 
    $Q_x$ and $Q_y$.  By Lemma \ref{lem_gated_borders}, $Q_x$ and $Q_y$ 
    intersect in an edge $wz$ if and only if $F(w)$ is a panel neighboring 
    $F(x)$ and $F(y)$, i.e., if and only if $u$ and $v$ are not \separated. 
    Finally, if $F(x)$ is a cone and $F(y)$ is a panel, then $I(x,z)$ is the 
    square $Q_x$ and $I(y,z)$ is the edge $yz$. Then  $F(x)$ and $F(y)$ are not 
    neighboring if and only if  the edge $yz$ is not an edge of the square 
    $Q_x$, i.e., if and only if $I(x,z)\cap I(y,z)=\{ z\}$.

    (ii)$\Longleftrightarrow$(iii): First, suppose that $I(x,z)\cap I(y,z)=\{ 
    z\}$. To show that $z\in I(u,v)$ it suffices to prove that $z$ is the 
    median of $u,v,z$.
    Suppose by way of contradiction that the median of $u,v,z$ is the vertex 
    $w\ne z$. Let $s$ be a neighbor of $z$ in $I(z,w)$. Then obviously $s\in 
    \St(z)$.
    Since $I(x,z)\cap I(y,z)=\{ z\}$, $s$ does not belong to at least one of 
    the intervals $I(x,z)$ and $I(y,z)$, say $s\notin I(x,z)$. This implies 
    that $d_G(s,x)=d_G(z,x)+1$.
    Since $x$ is the gate of $u$ in $\St(z)$ and $s\in \St(z)$, necessarily 
    $x\in I(u,s)$. This implies that there is a shortest $(s,u)$-path passing 
    via $z$ and $x$, i.e., $d_G(s,u)=1+d_G(z,u)$.  On the other hand, since 
    $s\in I(z,w)\subset I(z,u)$, we conclude that $d_G(z,u)=1+d_G(s,u)$. 
    Comparing the two equalities, we obtain a contradiction.

    Conversely, suppose that $z\in I(u,v)$. This implies that $z$ is the median 
    of the triplet $u,v,z$ and that $I(u,z)\cap I(v,z)=\{ z\}$. Since $x$ is 
    the gate of $u$ and $y$ is the gate of $v$ in $\St(z)$, we conclude that
    $x\in I(u,z)$ and $y\in I(v,z)$. Consequently, $I(x,z)\subseteq I(u,z)$ and 
    $I(y,z)\subseteq I(v,z)$, proving that $I(x,z)\cap I(y,z)=\{ z\}$.  This 
    establishes (iii)$\Longrightarrow$(ii).
\end{proof}

\begin{remark} 
    The equivalence (ii)$\Longleftrightarrow$(iii) of Lemma  
    \ref{lem_shortest_path_separated} holds for all median graphs.
\end{remark}

\begin{lemma} 
    \label{lem_shortest_path_neighboring}
    Let $u$ and $v$ be two \neighboring vertices such that $u$ belongs to the
    \panel $F(x)$ and $v$ belongs to the \cone $F(y)$. Let $u_1$ and $u_2$ be 
    the two imprints  of $u$ on the total boundary $\partial^* F(x)$ (it may 
    happen that $u_1=u_2$) and let $v^+$ be the gate of $v$ in  $F(x)$.
    Then
    $$
        \dist_G(u,v) =
            \min\{
                \dist_G(u,u_1) + \dist_{\partial^* F(x)}(u_1,v^+),
                \dist_G(u,u_2) + \dist_{\partial^* F(x)}(u_2,v^+)
            \}
            + \dist_G(v^+,v).
    $$
\end{lemma}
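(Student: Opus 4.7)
The plan is to decompose the distance from $u$ to $v$ in two stages: first going from $u$ to the gate $v^+$ inside the panel $F(x)$, and then continuing from $v^+$ to $v$. Since $F(x)$ is gated (Corollary \ref{fiber-gated2}) and $v^+$ is the gate of $v$ in $F(x)$, we immediately get $\dist_G(u,v)=\dist_G(u,v^+)+\dist_G(v^+,v)$, so the task reduces to expressing $\dist_G(u,v^+)$ in terms of imprints on $\partial^* F(x)$.

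The crucial geometric step is to verify that $v^+$ actually lies in $\partial^* F(x)$, indeed inside $\partial_y F(x)$. For this, I would consider a shortest $(v,v^+)$-path in $G$ and look at its last edge $v'v^+$ with $v^+\in F(x)$. Since the path is shortest and $v^+$ is the gate of $v$ in $F(x)$, the vertex $v'$ cannot lie in $F(x)$; as $v\in F(y)$ with $F(y)\sim F(x)$ and $y\sim x$ (Lemma \ref{lem_gated_borders}), the simplest argument is that $v'$ must belong to $F(y)$ since otherwise $v$ would route through an additional fiber, contradicting minimality of the path through the gate. Hence $v^+$ has a neighbor in $F(y)$, so $v^+\in\partial_y F(x)\subseteq \partial^* F(x)$.

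Once $v^+\in \partial^* F(x)$ is established, I would invoke Corollary \ref{proj-total-boundary}, which gives that $\partial^* F(x)$ is quasigated, so $u$ has at most two imprints $u_1,u_2$ on $\partial^* F(x)$. By the defining property of imprints, there exists a shortest $(u,v^+)$-path passing through some $u_i$, which yields
\[
\dist_G(u,v^+)=\min_{i\in\{1,2\}}\bigl\{\dist_G(u,u_i)+\dist_G(u_i,v^+)\bigr\}.
\]
Finally, by Lemma \ref{lem_Uborders_cube-free}, $\partial^* F(x)$ is an isometric tree of $G$, so the $G$-distance between any two of its vertices coincides with the tree distance; in particular $\dist_G(u_i,v^+)=\dist_{\partial^* F(x)}(u_i,v^+)$ for $i=1,2$. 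Substituting this into the decomposition $\dist_G(u,v)=\dist_G(u,v^+)+\dist_G(v^+,v)$ produces exactly the claimed formula.

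The main obstacle is the middle step, namely showing $v^+\in\partial^* F(x)$: all the other pieces are direct applications of previously established results (gatedness of fibers, quasigatedness of the total boundary, and isometry of $\partial^* F(x)$ as a tree). Once that localization of $v^+$ onto the total boundary is in place, the formula follows by combining the imprint-via-shortest-path property with the isometric embedding of the tree.
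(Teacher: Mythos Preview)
Your proposal is correct and follows essentially the same route as the paper's proof: decompose $\dist_G(u,v)$ through the gate $v^+$ via gatedness of $F(x)$, observe that $v^+\in\partial^*F(x)$, and then combine quasigatedness (Corollary~\ref{proj-total-boundary}) with the isometric-tree property of $\partial^*F(x)$ (Lemma~\ref{lem_Uborders_cube-free}) to obtain the minimum over the two imprints. You even attempt to justify the localization $v^+\in\partial_y F(x)$, which the paper simply asserts without argument; your reasoning there is slightly informal (why the predecessor $v'$ lands in $F(y)$ rather than in the cone's other neighboring panel), but it is easily tightened, for instance by taking the gate $p$ of $v$ in the gated set $\partial_x F(y)$ and checking that its unique neighbor in $F(x)$ realizes $\dist_G(v,F(x))$ and hence equals $v^+$.
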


\begin{proof}
    By Lemma \ref{fiber-gated1} $F(x)$ is  gated. Hence there must exist a
    shortest $(u,v)$-path passing via  $v^+$. 
    The vertices $u_1,u_2,$ and $v^+$ belong to the total boundary $\partial^* 
    F(x)$ of $F(x)$. Since, by Lemma \ref{lem_isometric_Uborders}, $\partial^* 
    F(x)$ is an isometric tree  and since, by Lemma \ref{lem_2_gates_Uborders},
    $u$ has at most two imprints $u_1$ and $u_2$ in $\partial^* F(x)$, we 
    conclude that $\dist_G(u,v^+) = \min\{ \dist_G(u,u_1) + \dist_{\partial^* 
    F(x)}(u_1,v^+), \dist_G(u,u_2) + \dist_{\partial^* F(x)}(u_2,v^+) \}$. 
    Consequently, there is a shortest $(u,v)$-path passing first via one of the 
    vertices $u_1,u_2$ and  then via $v^+$, establishing the asserted property.
\end{proof}

\begin{lemma} 
    \label{lem_shortest_path_aNeighboring}
    Let $u$ and $v$ be two \aNeighboring vertices belonging to the \cones 
    $F(x)$ and $F(y)$, respectively, and  let $F(w)$ be the \panel neighboring 
    $F(x)$ and $F(y)$. 
    Let $u^+$ and $v^+$ be the gates of $u$ and $v$ in $F(w)$.
    Then $\dist_G(u,v) =\dist_G(u,u^+)+\dist_{\partial^* F(w)}(u^+,v^+)+\dist_G(v^+,v).$
\end{lemma}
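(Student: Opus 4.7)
My plan is to split the identity into an upper bound (immediate from the triangle inequality) and a matching lower bound. The upper bound $d_G(u,v)\le d_G(u,u^+)+d_G(u^+,v^+)+d_G(v^+,v)$ is immediate. The lower bound reduces to the claim that every shortest $(u,v)$-path meets $F(w)$: granted a vertex $t\in F(w)$ on such a path, gatedness of $F(w)$ (Lemma~\ref{fiber-gated1}) yields $d_G(u,t)=d_G(u,u^+)+d_G(u^+,t)$ and $d_G(v,t)=d_G(v,v^+)+d_G(v^+,t)$, and a single triangle inequality inside $F(w)$ applied to $d_G(u^+,t)+d_G(t,v^+)\ge d_G(u^+,v^+)$ gives $d_G(u,v)\ge d_G(u,u^+)+d_G(u^+,v^+)+d_G(v^+,v)$.

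The main obstacle is the claim that every shortest $(u,v)$-path enters $F(w)$. I would argue this via halfspaces and cuts. Let $y''$ and $w''$ denote the neighbor in $\St(z)$ of $x$ (resp.\ $y$) other than $w$, coming from the squares $Q_x$ and $Q_y$. Since $\St(z)$ is gated (Lemma~\ref{prop_St(m)_convex}) and hence isometric, the gate property immediately gives $F(x)\subseteq W(x,w)$ and $F(y)\subseteq W(w,x)$, so the complementary halfspaces $\{W(x,w),W(w,x)\}$ attached to the edge $xw$ (Corollary~\ref{halfspaces-gated1}) separate $u$ from $v$; symmetrically the halfspaces of $wy$ do too. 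A shortest path crosses each separating cut exactly once and never a non-separating cut. By Lemma~\ref{lem_gated_borders}, the edges of $G$ in the $xw$-cut whose endpoints lie in two distinct fibers of $\St(z)$ are precisely the $F(x)$-$F(w)$ edges and the $F(y'')$-$F(z)$ edges. The second option would force the path to cross earlier from $F(x)$ into $F(y'')$, but the cut attached to $xy''$ does not separate $u$ from $v$ (both gates $x$ and $y$ lie on the $x$-side, as a direct $\St(z)$-distance check confirms), which is impossible on a shortest path. Therefore the $xw$-cut is crossed by an $F(x)$-$F(w)$ edge, and the shortest path enters $F(w)$.

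It remains to identify $d_G(u^+,v^+)$ with the tree-distance $d_{\partial^* F(w)}(u^+,v^+)$. Applying the same cut analysis to a shortest $(u,u^+)$-path, which is separated only by the $xw$-cut among $\St(z)$-cuts, forces its last edge to be an $F(x)$-$F(w)$ edge, so $u^+$ has a neighbor in $F(x)$ and $u^+\in\partial_x F(w)\subseteq\partial^* F(w)$; symmetrically $v^+\in\partial_y F(w)\subseteq\partial^* F(w)$. By Lemma~\ref{lem_isometric_Uborders}, $\partial^* F(w)$ is an isometric subgraph of $G$, so $d_G(u^+,v^+)=d_{\partial^* F(w)}(u^+,v^+)$, completing the proof.
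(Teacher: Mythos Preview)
Your overall structure---upper bound by triangle inequality, lower bound by showing every shortest $(u,v)$-path meets $F(w)$, then invoking the gate property and the isometry of $\partial^*F(w)$---is correct and is exactly the paper's strategy. There is, however, one misstatement in your cut analysis.

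You claim that the edges of $G$ in the $xw$-cut joining distinct fibers of $\St(z)$ are \emph{precisely} the $F(x)$--$F(w)$ edges and the $F(y'')$--$F(z)$ edges. This is not true in general: in the hypercube embedding of $\St(z)$ one has $x=\{w,y''\}$ and $w=\{w\}$, so the $\Theta$-class of $xw$ is the class ``flip $y''$'', and the fibers on the $x$-side are $F(y'')$ together with all cones $F(\{a,y''\})$ for panels $a$ adjacent to $y''$. Hence the cut also contains all $F(\{a,y''\})$--$F(a)$ edges for such $a\ne w$, which your list omits.

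Fortunately this does not damage the argument. The cone $F(x)$ has only two neighboring fibers, $F(w)$ and $F(y'')$; so any shortest $(u,v)$-path, at the moment it first leaves $F(x)$, enters one of these two. Your observation that the $xy''$-cut does not separate $u$ from $v$ (since $d_G(y,y'')\ge 3>2=d_G(y,x)$, using $y''\ne w''$, which follows from $x\ne y$) already rules out $F(y'')$. That alone forces the first exit from $F(x)$ to land in $F(w)$, without needing the full description of the $xw$-cut.

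This is precisely the paper's argument, phrased via convexity rather than cuts: the edges $xy''$ and $wz$ are opposite sides of the square $Q_x$, so the $xy''$-cut and the $wz$-cut coincide; the paper simply says that $W(w,z)$ is convex and contains $u,v$, while $F(y'')\subset W(z,w)$, so the path cannot enter $F(y'')$. Your route and the paper's are thus the same underlying idea in different packaging; the paper's formulation is shorter because it avoids the (unnecessary) enumeration of the $xw$-cut.

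Your final paragraph, arguing explicitly that $u^+\in\partial_xF(w)$ via the cut analysis of a shortest $(u,u^+)$-path, is correct and fills in a detail the paper leaves implicit.
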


\begin{proof} 
    Since the halfspace $W(w,z)$ is convex and $u,v\in F(x)\cup F(w)\cup 
    F(y)\subset W(w,z)$, any shortest $(u,v)$-path $P(u,v)$ is contained in 
    $W(w,z)$. We assert that $P(u,v)\subset F(x)\cup F(w)\cup F(y)$.
    Indeed, since $u\in F(x)$, $v\in F(y)$ and the fibers $F(x), F(y)$ are not 
    neighboring, while moving from $u$ to $v$ along $P(u,v)$, we have to leave 
    $F(x)$ and enter a panel  neighboring $F(x)$. But the cone $F(x)$ has only 
    two neighboring panels: $F(w)$ and a panel $F(w')\subset W(z,w)$. Since 
    $P(u,v)\subset W(w,z)$, necessarily $P(u,v)$ must enter $F(w)$ (and not 
    $F(w')$). Analogously, one can show that while moving from $v$ to $u$ along 
    $P(u,v)$ when we leave $F(y)$ we must enter the same panel $F(w)$. 
    Consequently, since the fibers $F(x),F(w)$, and $F(y)$ are gated, the path 
    $P(u,v)$ must be included in their union.

    Next we show that $u^+$ and $v^+$  belong to a common shortest 
    $(u,v)$-path. Indeed, by what has been shown above, any shortest 
    $(u,v)$-path intersects $F(w)$, in particular, there  exists a vertex $s\in 
    I(u,v)\cap F(w)$. Since $u^+$ is the gate of $u$ in $F(w)$ and $v^+$ is the 
    gate of $v$ in $F(w)$, we  deduce that $u^+\in I(u,s)$ and $v^+\in I(v,s)$. 
    Since $s\in I(u,v)$, there exists a shortest path from $u$ to $v$ passing 
    via $u^+,s,$ and $v^+$. This shows that $\dist_G(u,v) = 
    \dist_G(u,u^+)+\dist_{G}(u^+,v^+)+\dist_G(v^+,v).$ Since $\partial^* F(w)$ 
    is an isometric tree, $d_G(u^+,v^+)=d_{\partial^* F(w)}(u^+,v^+)$, 
    establishing the required equality $\dist_G(u,v) = 
    \dist_G(u,u^+)+\dist_{\partial^* F(w)}(u^+,v^+)+\dist_G(v^+,v)$.
\end{proof}

\begin{figure}[ht]
    \begin{center}
        \begin{minipage}{0.3\linewidth}
            \centering
            \separated vertices
        \end{minipage}
        \begin{minipage}{0.3\linewidth}
            \centering
            \neighboring vertices
        \end{minipage}

        \begin{minipage}{0.3\linewidth}
            \centering
            \includegraphics[width=\linewidth]{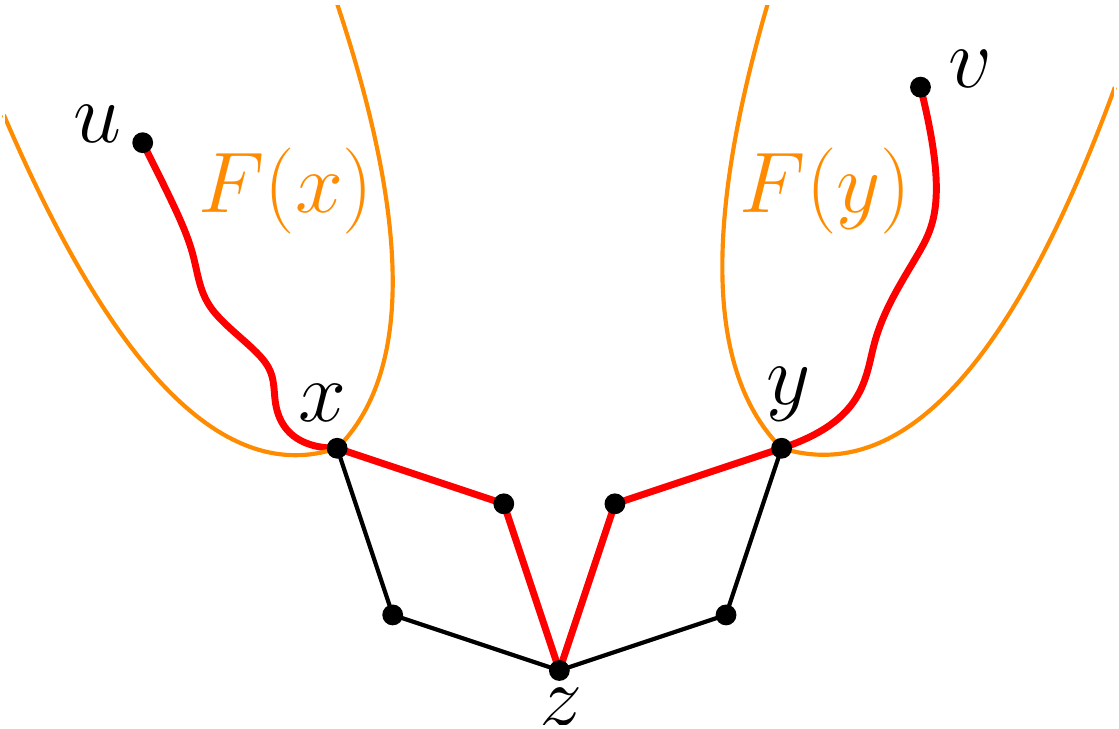}
        \end{minipage}
        \begin{minipage}{0.3\linewidth}
            \centering
            \includegraphics[width=\linewidth]{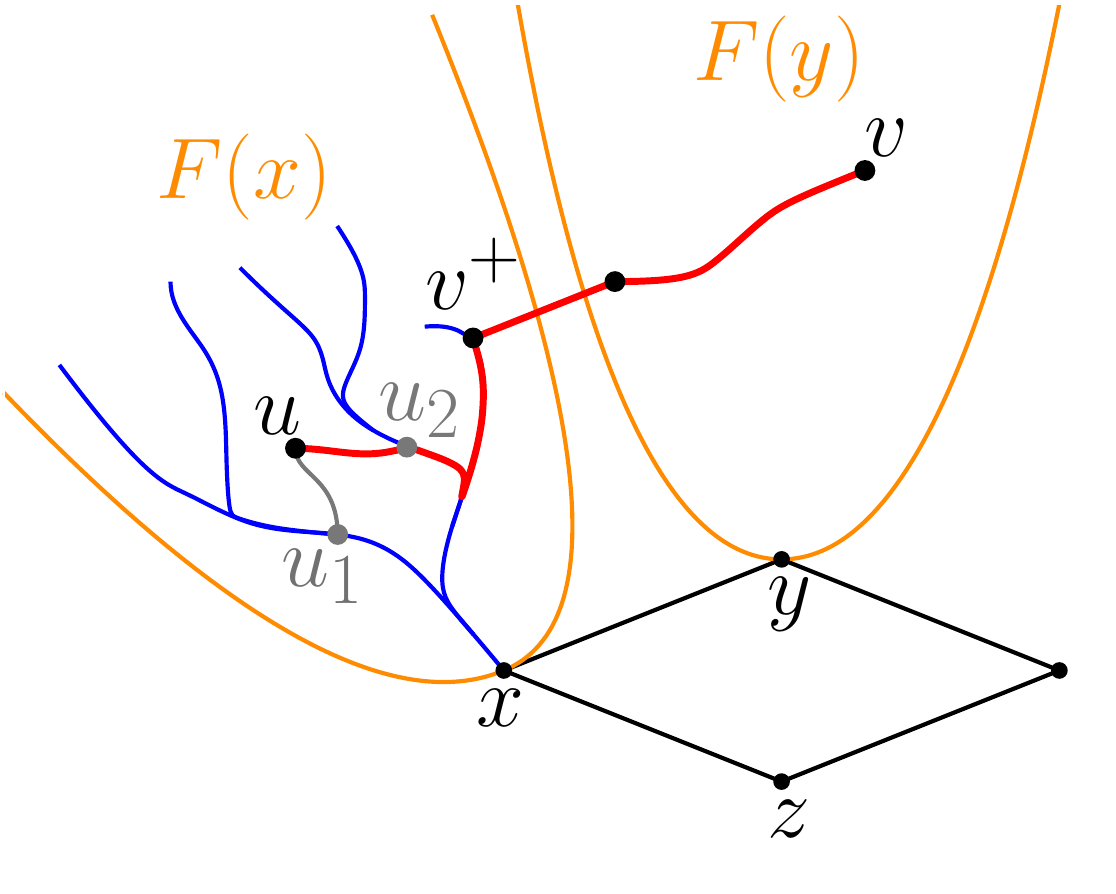}
        \end{minipage}
    \end{center}
    \centering
    \aNeighboring vertices

    \includegraphics[width=0.3\linewidth]{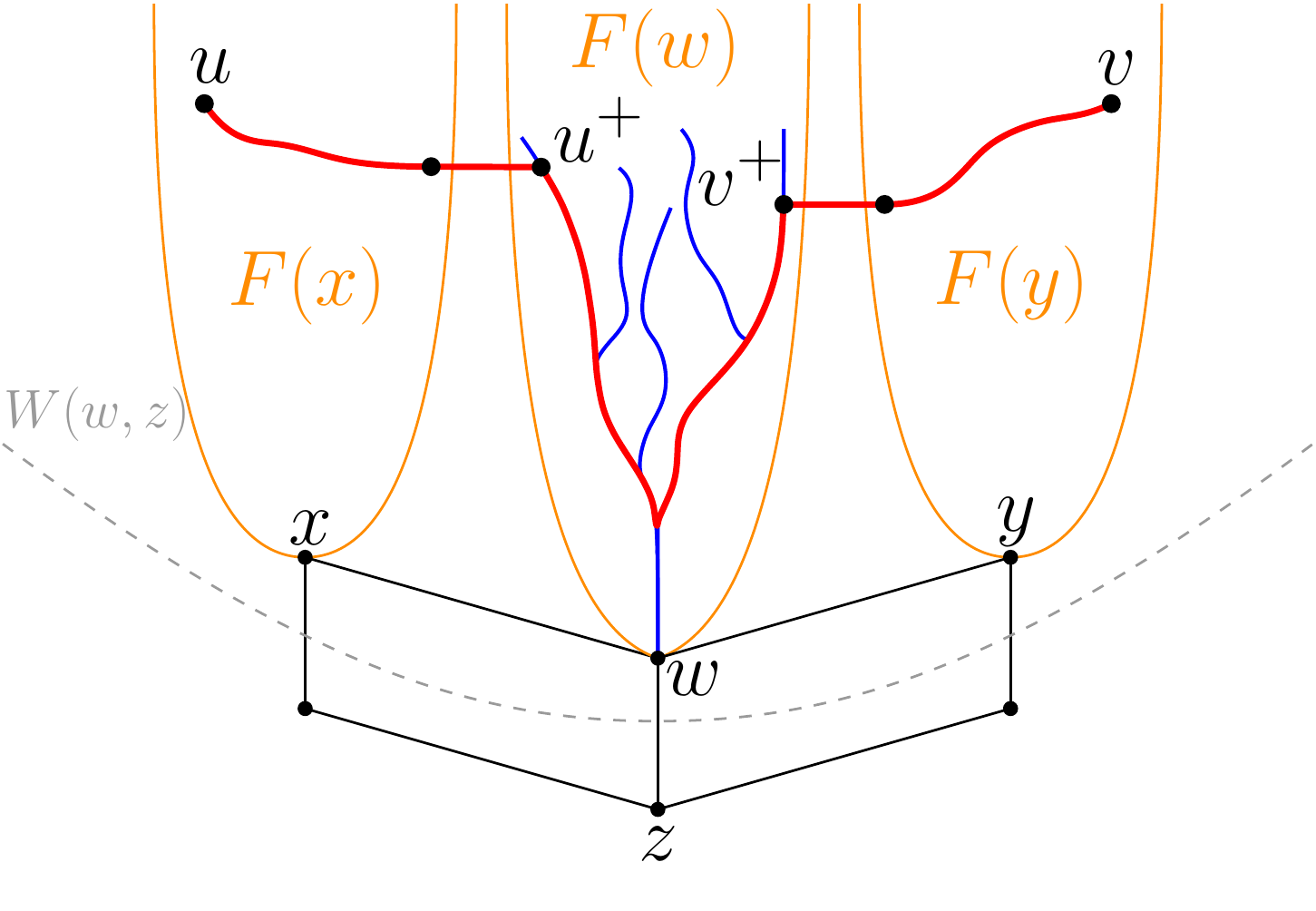}
    \caption{
        \label{fig_shortest_paths}
        An illustration of Lemmas \ref{lem_shortest_path_separated},
        \ref{lem_shortest_path_neighboring} and
        \ref{lem_shortest_path_aNeighboring}: examples of shortest paths (in
        red) between \separated, \neighboring, and \aNeighboring vertices $u$
        and $v$. The total boundaries of the \panels appear in blue.
    }
\end{figure}

\section{Distance labeling scheme for cube-free median graphs}
\label{sect_dist_labeling}

Let $G=(V,E)$ be a cube-free median graph with $n$ vertices and let $c$ be a centroid of $G$. Let $u,v$ be any pair of vertices of $G$ for which we have to compute the distance $d_G(u,v)$.
Applying Lemmas \ref{lem_shortest_path_separated}, \ref{lem_shortest_path_neighboring}, and \ref{lem_shortest_path_aNeighboring} of previous section with $m$ instead of $z$, the distance $d_G(u,v)$
can be computed once $u$ and $v$ are \separated, \neighboring, or \aNeighboring
and once $u$ and $v$ keep in their labels the distances to $c$, to the
respective gates $u^+$ and $v^+$, and to the
imprints $u_1$ and $u_2$ if $u$ belongs to a panel. It also requires keeping in the labels of $u$ and $v$ the information necessary to compute each of the distances $\dist_{\partial^* F(x)}(u_1,v^+),
\dist_{\partial^* F(x)}(u_2,v^+),\dist_{\partial^* F(w)}(u^+,v^+)$. Since the total boundaries are isometric trees, this can be done by keeping in the label of $u$ the labels of $u_1,u_2$, and $u^+$
in a distance labeling scheme for a tree (as well as keeping in the label of $v$ such a label of $v^+$). This shows that $d_G(u,v)$ can be computed in all cases except when $u$ and $v$
are \roommates, i.e., they belong to a common fiber $F(x)$ of $\St(c)$.  Since
$F(x)$ is gated and thus median,
we can apply the same recursive procedure to each fiber $F(x)$ instead of $G$. Therefore, $d_G(u,v)$ is computed in the first recursive call when $u$ and $v$ will no longer belong to the same
fiber of the current centroid. Since at each step the division into fibers is performed with respect to a centroid, $|F(x)|\le n/2$ by Lemma \ref{lem_small_halfspaces}, thus the tree of
recursive calls has logarithmic depth.

In this section, we present the formal description of the distance labeling scheme. The encoding scheme is described by the algorithm \distenc{} presented in Section
\ref{sect_dist_encoding}. Section \ref{sect_dist_queries} presents the algorithm
\distdec{} used for answering distance queries. In Section \ref{sect_labeling_trees_stars}
we formally present the distance labeling schemes for trees and stars.

\subsection{Distance and routing labelings for trees and stars}
\label{sect_labeling_trees_stars}

We present the distance labeling scheme (\distEncTree, \distDecTree)
for trees, which we briefly  described in Section \ref{sect_prelim_dist_rout}.
The procedure \distEncTree{} that gives a label $\LDt{v}{T}$ to every vertex
$v$ of a tree $T$ works as follows:
\begin{enumerate}[{\hskip1em(}1{)}]
    \item Give to every vertex $v$ a unique identifier $\id(v)$;
    \item Find a centroid $c$ of $T$;
    \item For every vertex $v$ of $T$, concatenate $(\id(c), d_T(v,c))$ to the
    current label of $v$;
    \item Repeat Step $2$ for each subtree with at least two vertices, created
    by the removal of $c$.
\end{enumerate}
Given two labels $\LDt{u}{T}$ and $\LDt{v}{T}$, the procedure
\distDecTree{} can find the last common separator $c$ of $u$ and $v$ (i.e.,
the common centroid stored latest in their labels) and return $d_T(u,c) +
d_T(v,c)$ as the distance $d_T(u,v)$.
The encoding \routEncTree{} for routing in trees is similar, just
 replace $(\id(c), d_T(u,c))$ at step (3) by $(\id(c),
\port(u,c), \port(c,u))$.
Then, the decoding function \routDecTree($u$,$v$), returns $\port(u,c)$
(stored in the label of $u$) if $c \neq u$, or $\port(c,v)$ (stored in the
label of $v$) otherwise (where $c$ is again the last common separator of $u$
and $v$).

We present the distance labeling scheme for stars $\St(z)$ of
any median graph $G$. It is based on the fact that median graphs are isometrically
embeddable into hypercubes and that $\St(z)$ is  gated, and thus is an isometric median
subgraph of $G$. So, we can suppose that $\St(z)$ is isometrically embedded into a
hypercube. Let $\varphi:\St(z)\rightarrow Q_d$ be such an isometric embedding so that $\varphi(z)=\varnothing$.
Consequently, for each vertex $x$ of $\St(z)$, $\varphi(x)$ is a set of cardinality equal to the dimension of the
cube $I(x,z)$, thus $\varphi(x)$ has size at most
$\log_2 n$, where $n=|\St(z)|$. For any two vertices, $x$ and $y$ of $\St(z)$,
$d_{\St(z)}(x,y)=|\varphi(x)\Delta \varphi(y)|:= |(\varphi (x) \cup \varphi 
(y)) \setminus (\varphi (x) \cap \varphi (y))|$.

Using the isometric embedding $\varphi$, we can describe a simple
encoding \encStar($\St(z)$) of the vertices of $\St(z)$ which can be
used to answer distance and routing queries.
For a vertex $x\in \St(z)$, let $\Lb{x}{\St(z)}=\varphi (x)$.
Then \encStar($\St(z)$) gives to $z$ the label $\varnothing$ and to every
neighbor of $z$ a unique label in $\{1,\ldots,\degree(z)\}$.
For any vertex $x$ at distance $k$ from $z$, $I(x,z)$
contains exactly $k$ neighbors of $z$ and the labels of these neighbors
completely define $\varphi(x)$ and  $\Lb{x}{\St(z)}$.

Giving unique labels to the neighbors of $z$ require
$\ceil{\log_2(\degree(z))}$ bits and thus, in the worst case, \encStar($\St(z)$)
gives labels of length $O(\degree(z)\log(\degree(z)))$. If the dimension of
$\St(z)$ is a fixed constant, then \encStar($\St(z)$) gives labels of 
logarithmic length.
For a vertex $x$ of $\St(z)$ labeled by the set $X:=\varphi(x)$, the vertex of 
$\St(z)$ labeled by the value $\min \{i : i \in X\}$ is called the 
\emph{$\lleft$ of $x$}, and the one labeled by $\max \{i : i \in X\}$ is called 
the \emph{$\rright$ of $x$}.

For simplicity, we assume that for a vertex $x$ labeled $X$ and a vertex $x'$
labeled $X' = X \setminus \{i\}$, $i \in X$, we have $\port(x,x') = \port(x',x)
= i$.
Since $\varphi$ is an isometric embedding, it is easy to see that for any two
vertices $x$ and $y$ encoded by the sets $X:=\Lb{x}{\St(z)}=\varphi(x)$ and
$Y:=\Lb{y}{\St(z)}=\varphi(y)$, the distance $d_{\St(z)}(x,y)$ between $x$ and
$y$ is $|X
\triangle Y| $.
This is exactly the value returned by \distDecStar($X$, $Y$).
Routing decisions follow from the same property. Assume that $|X| \leq |Y|$.
If $X \subseteq Y$, \routDecStar($X$,$Y$) returns the port to any vertex
labeled by $X \cup \{i\}$ with $i \in Y \setminus X$ (say the minimal $i$).
If $X \not\subset Y$, then \routDecStar($X$,$Y$) returns the port of any
vertex labeled by $X \setminus \{i\}$ for $i \in X \setminus Y$ (say the
minimal $i$).

\begin{figure}[ht]
    \centering
    \includegraphics[width=0.4\textwidth]{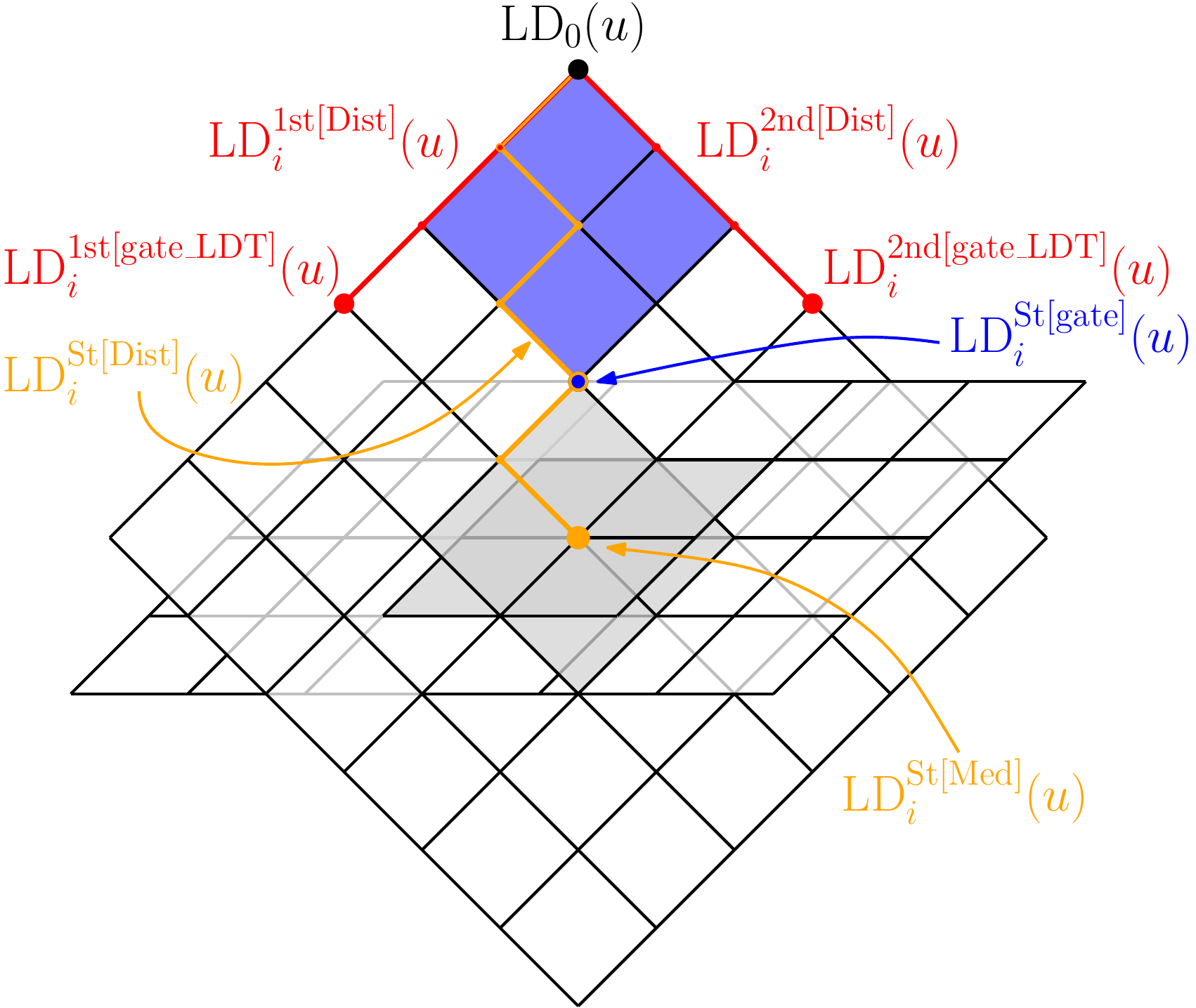}
    \caption{\label{fig_dist_labeling}
        Illustration of $\LD_0(u)$ and of the information  added to $\LD(u)$
        at step $i$.
    }
\end{figure}

\subsection{Encoding}
\label{sect_dist_encoding}

Let $G=(V,E)$ be a cube-free median graph with $n$ vertices. We describe now
how \distenc{} constructs for every vertex $u$ of
$G$ a distance label $\LD(u)$. This is done recursively and every depth of the recursion will be called a \emph{step}.
Initially, we suppose that every vertex $u$ of $G$ is given a unique identifier
$\id(u)$. We define this naming step as Step $0$ and we denote the
corresponding part of $\LD(u)$ by $\LD_0(u)$ (i.e., $\LD_0(u) := \id(u)$).
At Step $1$, \distenc{} computes a centroid $c$ of $G$, the star $\St(c)$ of $c$, and the partition ${\mathcal F}_c:=\{ F(x): x\in \St(c)\}$ of $V$
into fibers. Every vertex $u$ of $G$ ($c$ included) receives the
identifier $\id(c)$ of $c$ and its distance $\dist_G(u,c)$ to $c$.
After that, every vertex $x$ of $\St(c)$ receives a special identifier
$\Lb{x}{\St(c)}$ of size $O(\log |V|)$ consisting of a labeling for the star $\St(c)$,
as described in Section \ref{sect_labeling_trees_stars}.
Then, \distenc{} computes the gate $u^\downarrow$ in $\St(c)$ of every vertex
$u$ of $G$ and adds its identifier $\Lb{u^\downarrow}{\St(c)}$ to $\LD(u)$.
Note that the identifiers $\Lb{x}{\St(c)}$ of the vertices of $\St(c)$ can
also be used to distinguish the fibers of $\St(c)$.
This triple $(\id(c), \dist_G(u,c), \Lb{u^\downarrow}{\St(c)})$ contains the
necessary information relative to $\St(c)$ and is thus referred as the \emph{part
    ``star''} of the information $\LD_1(u)$ given to $u$ at Step $1$.
We denote this part by $\LD_1^{\st}(u)$. We also set
$\LD_1^{\st[\med]}(u) := \id(c)$, $\LD_1^{\st[\distance]}(u) :=
\dist_G(u,c)$ and $\LD_1^{\st[\rootB]}(u) := \Lb{u^\downarrow}{\St(c)}$
for the three components of the label $\LD_1^{\st}(u)$.

\medskip
\scalebox{0.9}{\begin{algorithm}[H]
    \caption{\label{alg_distenc}\distenc[]($G$, $\LD(V)$)}
    \Input{
        A cube-free median graph $G=(V,E)$ and a labeling $\LD(V)$,
        initially consisting of a unique identifier $\id(u)$ for every $u
        \in V$
    }

    \lIf{$V = \{v\}$}{ \Stop }

    \BlankLine
    Find a centroid $c$ of $G$ \;
    Compute the partition ${\mathcal F}_c$ of $G$ into fibers \;
    For each fiber $F(x)$ of  ${\mathcal F}_c$ compute its total boundary 
    $\partial^* F(x)$ \;
    $\Lb{\St(c)}{\St(c)}$ $\leftarrow$ \encStar($\St(c)$) \;
    \ForEach{\panel $F(x) \in {\mathcal F}_c$}{
        $\LDt{\partial^* F(x)}{\partial^* F(x)}$ $\leftarrow$ 
        \distEncTree($\partial^* F(x)$)\;

        \ForEach{$u \in F(x)$}{
        Find the gate $u^\downarrow$ of $u$	in $\St(c)$ \;
        Find the imprints $u_1$ and $u_2$ of $u$ on $\partial^* F(x)$;
        $(d, ~d_1, ~d_2) \leftarrow (\dist_G(u,c), ~\dist_G(u,u_1),
        ~\dist_G(u,u_2))$ \;
        $L_\st \leftarrow (\id(c), d,
        \Lb{u^\downarrow}{\St(c)})$ \;\label{alg-line_dist_enc_median1}
        $L_{\lleft} \leftarrow (\LDt{u_1}{\partial^* F(x)},d_1)$
        \;\label{alg-line_dist_enc_left1}
        $L_{\rright} \leftarrow (\LDt{u_2}{\partial^* F(x)},d_2)$
        \;\label{alg-line_dist_enc_right1}
        $\LD(u) \leftarrow \LD(u) \conc
        (L_\st,L_{\lleft},L_{\rright})$\;
        }
        \distenc[]($F(x)$, $\LD(V)$) \;
    }
    \ForEach{\cone $F(x) \in {\mathcal F}_c$}{
          \ForEach{$u \in F(x)$}{
            Find the gate $u^\downarrow$ of $u$	in $\St(c)$ \;
            Find the panels $F(w_1)$ and $F(w_2)$ neighboring $F(x)$\;
            Find the gates $u^+_1$ and $u^+_2$ of $u$ in $F(w_1)$ and $F(w_2)$ 
            \;
            $(d, ~d_1, ~d_2) \leftarrow (\dist_G(u,c), ~\dist_G(u,u^+_1),
            ~\dist_G(u,u^+_2))$ \;
            $L_\st \leftarrow (\id(c), d,
            \Lb{u^\downarrow}{\St(c)})$ \;\label{alg-line_dist_enc_median2}
            $L_{\lleft} \leftarrow (\LDt{u^+_1}{\partial^* F(w_1)},d_1)$
            \;\label{alg-line_dist_enc_left2}
            $L_{\rright} \leftarrow (\LDt{u^+_2}{\partial^* F(w_2)},d_2)$
            \;\label{alg-line_dist_enc_right2}
            $\LD(u) \leftarrow \LD(u) \conc
            (L_\st,L_{\lleft},L_{\rright})$ \;
        }
        \distenc[]($F(x)$, $\LD(V)$). \
    }
\end{algorithm}}

Afterwards, at Step 1, the algorithm considers each fiber $F(x)$ of ${\mathcal F}_c$.
If $F(x)$ is a panel, then the algorithm computes the total boundary $\partial^* F(x)$ of $F(x)$, which is an isometric quasigated
tree. The vertices $v$ of this tree $\partial^* F(x)$  are given special
identifiers $\LDt{v}{\partial^* F(x)}$ of size $O(\log^2 |V|)$ consisting of a distance
labeling scheme for trees described in Section \ref{sect_labeling_trees_stars}. For each vertex $u$
of the  panel $F(x)$, the algorithm computes the two imprints $u_1$ and $u_2$ of $u$ in $\partial^* F(x)$
(it may happen that $u_1=u_2$) and stores
$(\LDt{u_1}{\partial^* F(x)}, \dist_G(u,u_1))$ and $(\LDt{u_2}{\partial^* F(x)},
\dist_G(u,u_2))$ in $\LD_1^{\lleft}(u)$ and $\LD_1^{\rright}(u)$.

If $F(x)$ is a cone and $F(w_1),F(w_2)$ are the two panels neighboring
$F(x)$, then for each vertex $u$ of $F(x)$, the algorithm computes the gates
$u^+_1$ and $u^+_2$ of $u$ in $F(w_1)$ and $F(w_2)$.
Since $u^+_i\in \partial_x F(w_i)\subset \partial^* F(x), i=1,2,$ the labels $\LDt{u^+_1}{\partial^* F(w_1)}$ and $\LDt{u^+_2}{\partial^* F(w_2)}$
in the distance labelings of trees $\partial^* F(w_1)$ and $\partial^* F(w_2)$ are well-defined. Thus the algorithm stores $(\LDt{u^+_1}{\partial^* F(w_1)}, \dist_G(u,u^+_1))$ and $(\LDt{u^+_2}{\partial^* F(w_2)},
\dist_G(u,u^+_2))$ in $\LD_1^{\lleft}(u)$ and $\LD_1^{\rright}(u)$. This ends 
Step $1$.

Since ${\mathcal F}_c$ partitions $V$ into gated median subgraphs,  the label $\LD_2(u)$ added to $\LD(u)$ at Step $2$
is obtained as $\LD_1(u)$, where   $G$ is replaced by the fiber $F(u^\downarrow)$
containing $u$, and so on. Since each fiber contains
no more than half of the vertices of the current graph, at Step $\ceil{\log_2
|V|}$, each fiber  consists of a single vertex, and the
algorithm stops. Therefore,
for each pair of vertices $u$ and $v$ of $G$,    there exists a step of the recursion after which $u$ and $v$ are no longer \roommates.
For an illustration of the parts of $\LD_i(u)$, see Fig.\ref{fig_dist_labeling}. For a vector $L(v) := (t_1,\ldots,t_k)$ of vectors $t_1,\ldots,t_k$ and an
arbitrary vector $t$, we denote by $L(v) \conc t := (t_1,\ldots,t_k,t)$
the \emph{concatenation} of $L(v)$ and $t$.

\subsection{Distance queries}
\label{sect_dist_queries}

Let $u$ and $v$ be two vertices of a cube-free median graph $G=(V,E)$ and let
$\LD(u)$ and $\LD(v)$ be their labels returned by \distenc. Here we describe how
the algorithm \distdec{} can compute the information about the relative positions of $u$ and $v$
with respect to each other and how, using it, to compute the distance $d_G(u,v)$.

\label{sect_dist_algo}
We continue with the formal description of the algorithm \distdec{}.
The functions \DistanceNeighboring, \DistanceANeighboring, and 
\DistanceSeparated, used in this algorithm, are given below
(the function \distDecStar  ~is described in Section 
\ref{sect_labeling_trees_stars}).

Given the vertices $u$ and $v$, first the algorithm  detects if  $u$ and $v$ 
coincide. This is done in line 1 of  \distdec{}. If $u\ne v$, then  \distdec{} 
finds the largest integer $i$ such that $\LD_i^{\st[\med]}(u) = 
\LD_i^{\st[\med]}(v)$ (line 2).
This corresponds to the first time $u$ and $v$ belong to different fibers in a  
partition.  Let $c$ be a centroid vertex of the current graph.
In lines 3,4,5, the algorithm \distdec{} retrieves  the distances $d,d_u,$ and 
$d_v$ between the gates $u^\downarrow$ and $v^\downarrow$ of $u$ and $v$ in the 
star  $\St(c)$,  and the distances from $u^\downarrow$ and $v^\downarrow$ to  
$c$, respectively.
This is done by using the identifiers $\LD_i^{\st[\rootB]}(u)$ and 
$\LD_i^{\st[\rootB]}(v)$ and the distance decoder for distance labeling in 
stars.  With this information at hand, one  can easily decide for each of $u$ 
and $v$ if  it belongs to a cone or to a panel, and, moreover, to  decide if 
the vertices $u$ and $v$ are \neighboring, \aNeighboring, or \separated. In 
each of these cases, a call to an appropriate function is done in lines 6-9.

\medskip
\scalebox{0.91}{\begin{algorithm}[H]
        \caption{\label{alg_distdec}\distdec[]($\LD(u)$, $\LD(v)$)}
        \Input{The labels $\LD(u)$ and $\LD(v)$ of two vertices $u$ and $v$ of
            $G$}
        \Output{The distance between $u$ and $v$ in $G$}

        \BlankLine
        \lIf{$\LD_0(u) = \LD_0(v)$ \tcc*[h]{$u = v$}}{
            \Return $0$
        }

        \BlankLine
        Let $i$ be the largest integer such that $\LD_i^{\st[\med]}(u) = 
        \LD_i^{\st[\med]}(v)$ \;

        \BlankLine
        $d \leftarrow \distDecStar(\LD_i^{\st[\rootB]}(u),
        \LD_i^{\st[\rootB]}(v))$
        \tcp*{$\dist_G(u^\downarrow,v^\downarrow)$}
        $d_u \leftarrow \distDecStar(\LD_i^{\st[\rootB]}(u), 0)$
        \tcp*{$\dist_G(u^\downarrow,c)$}
        $d_v \leftarrow \distDecStar(\LD_i^{\st[\rootB]}(v), 0)$
        \tcp*{$\dist_G(v^\downarrow,c)$}

        \BlankLine
        \lIf{$d = 1$ and $d_u = 1$\hskip2em$~$}{
            \Return \DistanceNeighboring($\LD_i(u)$, $\LD_i(v)$)
        }
        \lIf{$d = 1$ and $d_v = 1$\hskip2em$~$}{
            \Return \DistanceNeighboring($\LD_i(v)$, $\LD_i(u)$)
        }
        \lIf{$d = 2$ and $d_u = d_v = 2$}{
            \Return \DistanceANeighboring($\LD_i(u)$, $\LD_i(v)$)
        }
        \Return \DistanceSeparated($\LD_i(u)$, $\LD_i(v)$). \
\end{algorithm}}

%\subsubsection{Description and functions}
\label{sect_dist_functions}

\medskip

First suppose that the vertices $u$ and $v$ are \neighboring  ($d=1$ and one of
$d_u, d_v$ is 1 and another is 2), i.e., one of the vertices $u,v$
belongs to a cone and another one belongs to a panel,
and the cone and the panel are neighboring. The  function \DistanceNeighboring 
returns the distance $d_G(u,v)$ in the assumption that $u$ belongs to a panel 
and $v$ belongs to a cone (if $v$ belongs to a panel and $u$ to a cone,
it suffices to swap the names of the vertices $u$ and $v$ before using 
\DistanceNeighboring). The function finds the gate $v^+$ of $v$  in  the panel 
of $u$ by looking at $\LD_i^{\st[\rootB]}(v)$ (it also retrieves the distance 
$d_G(v,v^+)$).
It then retrieves  the imprint $u^*$ of $u$ (and the distance $d_G(u,u^*)$)
on the total boundary of the panel that minimizes the distance of $u$ to one of 
the two imprints plus the distance from this imprint to the gate $v^+$ using 
their tree distance labeling scheme.  Finally, \DistanceNeighboring returns 
$\dist_G(u,u^*)+d_G(u^*,v^+)+d_G(v^+,v)$ as $d_G(v,u)$.

\medskip
\scalebox{0.91}{\begin{myFunction}
        \newcommand{\dir}{\text{dir}}
        \Fn{\DistanceNeighboring{$\LD_i(u)$, $\LD_i(v)$}}{
            $\dir \leftarrow \lleft$ \tcp*[l]{If
                $\LD_i^{\st[\rootB]}(u) =
                \max\{i : i \in \LD_i^{\st[\rootB]}(v)\}$}
            \If{$\LD_i^{\st[\rootB]}(u) = \min\{i : i \in
                \LD_i^{\st[\rootB]}(v)\}$}{
                $\dir \leftarrow \rright$ \;	
            }
            $d_1 \leftarrow \distDecTree(\LD_i^{\dir[\gateD]}(v),
            \LD_i^{\lleft[\imprintD]}(u))$ \tcp*{The distance from the gate to the first imprint}
            $d_2 \leftarrow \distDecTree(\LD_i^{\dir[\gateD]}(v),
            \LD_i^{\rright[\imprintD]}(u))$ \tcp*{The distance from the gate to the second imprint}
            \Return $\LD_i^{\dir[\distance]}(v) + \min\left\{ d_1 +
            \LD_i^{\lleft[\distance]}(u),  d_2 +
            \LD_i^{\rright[\distance]}(u)\right\}$. \	
        }
\end{myFunction}}

\medskip\noindent
Now suppose that the vertices $u$ and $v$ are \aNeighboring (i.e.,
$d=d_u=d_v=2$). Then both $u$ and $v$ belong to cones.
By inspecting $\LD_i^{\st[\rootB]}(u)$ and
$\LD_i^{\st[\rootB]}(v)$, the function \DistanceANeighboring determines the \panel $F(w)$ sharing a border with the cones $F(u^\downarrow)$ and $F(v^\downarrow)$. Then the function
retrieves the gates $u^+$ and $v^+$ of $u$ and $v$ in this panel $F(w)$ and the distances $d_G(u,u^+)$ and $d_G(v,v^+)$.
The distance between the gates $u^+$ and $v^+$ is retrieved using the distance decoder
for trees. The algorithm returns $\dist_G(u,u^+)+\dist_G(u^+,v^+)+\dist_G(v^+,v)$ as $d_G(u,v)$.

\medskip
\scalebox{0.91}{\begin{myFunction}
        \newcommand{\dir}{\text{dir}}
        \Fn{\DistanceANeighboring{$\LD_i(u)$, $\LD_i(v)$}}{
            \ForEach{$x \in \{u,v\}$}{
                $\dir_x \leftarrow \lleft$ \tcp*{The common \panel is the
                    \lleft of the \cone of $x$}
                \If{$\LD_i^{\st[\rootB]}(u) \cap
                    \LD_i^{\st[\rootB]}(v) =
                    \min\{i : i \in
                    \LD_i^{\st[\rootB]}(x)\}$}{
                    $\dir_x \leftarrow \rright$ \tcp*{The common \panel is the
                        \rright of the \cone of $x$}
                }
            }
            $d \leftarrow \distDecTree(\LD_i^{\dir_u[\gateD]}(u),
            \LD_i^{\dir_v[\gateD]}(v))$ \;
            \Return $\LD_i^{\dir_u[\distance]}(u) +
            \LD_i^{\dir_v[\distance]}(v) +
            d$. \
        }
\end{myFunction}}

\medskip\noindent
In the remaining cases, the vertices $u$ and $v$ are \separated. By Lemma \ref{lem_shortest_path_separated},
there exists a shortest path between $u$ and $v$ passing
via $c$. Both $u$ and $v$ have stored the centroid $c$ and their distances to
$c$. Therefore, \DistanceSeparated simply returns the sum of those two
distances.

\medskip
\scalebox{0.91}{\begin{myFunction}
        \Fn{\DistanceSeparated{$\LD_i(u)$, $\LD_i(v)$}}{
            \Return $\LD_i^{\st[\distance]}(u) +
            \LD_i^{\st[\distance]}(v)$. \
        }
\end{myFunction}}

\subsection{The efficient implementation of  \distenc{}}
\label{sect_constr_dist_lab_scheme}

In this subsection we show how to implement a single run of the  algorithm  \distenc{} on an $n$-vertex cube-free median graph $G$ in $O(n)$ time.
Since  the algorithm is recursively called to
the fibers $F(x), x\in \St(c)$ and these fibers have size at most $\frac{n}{2}$ and are cube-free median graphs, the total complexity of  \distenc{} is $O(n\log_2 n)$. The main difficulty
with this is that we have to compute centroids, fibers, gates, and imprints 
without knowing the distance matrix of $G$ (with the distance matrix at hand, 
\distenc{}  can be easily implemented in $O(n^2\log n)$ time).

\subsubsection{Computation of a centroid $c$} 

We compute a centroid $c$ of $G$ using a recent linear-time algorithm of 
\cite{BeChChVa} for computing centroids/medians of arbitrary median graphs. For 
a median graph with $m$ edges this algorithm has complexity $O(m)$. By 
Corollary \ref{upper_edges}, $G$ contains at most $2n$ edges, thus a centroid 
$c$ of $G$ can be computed in $O(n)$ time.

\subsubsection{Partition of a median graph into fibers}  

We describe how to partition in $O(m)$ time any median graph $G$ with $m$ edges
into fibers with respect to any gated subgraph $H$ of $G$. For this, we adapt 
the classical Breadth-First-Search.

Recall that the \emph{Breadth-First Search (BFS)} \cite[Chapter 22]{Cormen} 
rooted at vertex $v_0$ uses a queue $Q$. For each vertex $v$ of $G$, two 
variables $d(v)$ and $f(v)$ are computed. 
Initially, $v_0$ is inserted  in $Q$, $d(v_0)$ is set to $0$, and $f(v_0)$ is set to null.
When a vertex $u$ arrives at the head of $Q$, it is removed from $Q$ and all 
the not yet discovered neighbors $v$ of $u$ are inserted in $Q$; $u$ becomes 
the \emph{parent} $f(v)$ of $v$ and $d(v)$ is set to $d(u)+1$. The edges 
$f(v)v$ define the \emph{BFS-tree} $T(v_0)$ of $G$. The main property of BFS is 
that $d(v)$ is the distance $\dist_G(v,v_0)$ and that $f(v)$ belongs to a 
shortest $(v,v_0)$-path. 

First, given any subgraph $H$ of any connected  graph $G$, we adapt the 
classical BFS to compute, for each
vertex $v$ of $G$, a closest to $v$ vertex of $H$, i.e., a vertex of $H$ 
realizing the distance $\dist_G(v,H) := \min \{ \dist_G(v,x): x\in V(H)\}$.  
For each vertex $v$ of $G$, the algorithm
computes the variables $\closest(v)$, $d(v)$, and $f(v)$.  The vertices of $H$ are first inserted in the
queue $Q$ and, for each $x\in V(H)$, we set
$\closest(x) := x, d(x) := 0,$ and $f(x)$ is set to null. When a vertex $u$ of 
$G$ arrives at the head of
$Q$, it is removed from $Q$ and all not yet discovered neighbors $v$
of $u$ are inserted in $Q$; $u$ becomes the \emph{parent} $f(v)$ of
$v$, $d(v)$ is set to $d(u)+1$, and $\closest(v)$ is set to $\closest(u)$.
We call this algorithm a \emph{BFS traversal of}  $G$ \emph{with respect to}  $H$.
Clearly, the algorithm has linear-time complexity $O(m)$.
Its correctness follows from  the following lemma:

\begin{lemma} 
    \label{BFS-dist-to-subgraph} 
    For any graph $G$, any subgraph $H$ of $G$, and any vertex $v$ of $G$, 
    $\closest(v)$ is a closest to $v$ vertex of $H$ and 
    $d(v)=\dist_G(v,\closest(v))=\dist_G(v,H)$.
\end{lemma}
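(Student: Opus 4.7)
The plan is to prove the lemma by a straightforward induction on the order in which vertices are dequeued from $Q$, verifying simultaneously both assertions: that $d(v) = \dist_G(v,H)$ and that $\closest(v)$ attains this distance. A conceptually cleaner reformulation, which I would mention briefly, is to view the algorithm as an ordinary single-source BFS on the augmented graph $G^+$ obtained by adding a virtual super-source $s$ joined by an edge to every vertex of $H$. On $G^+$, classical BFS correctness gives $\dist_{G^+}(s,v) = 1 + \dist_G(v,H)$ for $v \notin V(H)$, and the parent edge into $v$ in the BFS tree of $G^+$ either is $sx$ (when $v = x \in V(H)$, in which case $\closest(v) = v$ works) or has the form $uv$ with $u \in V(G)$ and $\dist_G(u,H) = \dist_G(v,H) - 1$; this precisely mirrors the behaviour of the modified BFS, with the $d$-values shifted down by one.

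For the direct argument, first I would establish $d(v) \ge \dist_G(v,H)$ together with $\closest(v) \in V(H)$ and $\dist_G(v,\closest(v)) \le d(v)$. The base case $v \in V(H)$ is immediate from initialization. For the inductive step, when a vertex $v$ with $d(v) = k \ge 1$ is discovered, it is inserted by some dequeued neighbor $u$ with $d(u) = k-1$, and we set $\closest(v) := \closest(u)$; the induction hypothesis gives $\closest(u) \in V(H)$ and $\dist_G(u,\closest(u)) \le k-1$, whence $\dist_G(v,\closest(v)) \le k$ and therefore $\dist_G(v,H) \le k$.

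For the reverse inequality $d(v) \le \dist_G(v,H)$, I would invoke the standard FIFO property of BFS: the $d$-values of the dequeued vertices form a non-decreasing sequence, and no vertex is enqueued twice. Given any shortest path $x = v_0, v_1, \ldots, v_\ell = v$ in $G$ with $x \in V(H)$ and $\ell = \dist_G(v,H)$, induction along this path shows that by the time $v_{i-1}$ is processed, its neighbor $v_i$ has already been enqueued with $d(v_i) \le i$; consequently $d(v) \le \ell = \dist_G(v,H)$.

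Combining the two inequalities forces $d(v) = \dist_G(v,\closest(v)) = \dist_G(v,H)$, which is exactly the claim. The proof is essentially routine correctness of multi-source BFS; the only point that requires mild care is keeping the two invariants (about $d(v)$ and about $\closest(v)$) synchronised through the induction, and the FIFO ordering property of the BFS queue must be used explicitly for the lower bound on $\dist_G(v,H)$.
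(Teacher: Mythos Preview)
Your proof is correct and follows essentially the same approach as the paper: both establish one inequality by tracing parent pointers (the paper shows $d(v)\ge \dist_G(v,\closest(v))$, you show the equivalent $\dist_G(v,\closest(v))\le d(v)$ together with $\closest(v)\in V(H)$), and both obtain the reverse inequality from the standard FIFO monotonicity of the queue, arguing along a shortest path from $H$ to $v$. The super-source reformulation you mention is a nice conceptual shortcut that the paper does not use, but your direct argument matches the paper's.
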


\begin{proof} 
    The proof is inspired by the correctness proof of BFS; see, for example, 
    the proof of \cite[Theorem 22.5]{Cormen}.
    First, by induction on $d(v)$ one can easily show that $d(v)\ge 
    \dist_G(v,\closest(v))$. Indeed, let $u=f(v)$ and 
    $x=\closest(u)=\closest(v)$. By induction assumption,
    $d(u)\ge \dist_G(u,x)$. Since $d(v)=d(u)+1$ and $\dist_G(v,x)\le 
    \dist_G(u,x)+1$, we deduce that $d(v)=d(u)+1\ge \dist_G(u,x)+1\ge 
    \dist_G(v,x)$.
    Second, as in case of classical BFS, one can prove that at each execution  
    of the algorithm, if the queue $Q$ consists of the vertices 
    $v_1,v_2,\ldots,v_k$, then
    $d(v_1)\le d(v_2)\le \cdots\le d(v_r)$ and $d(v_r)\le d(v_1)+1$ hold. 
    Indeed, it can be easily seen that this invariant is preserved when
    a vertex is removed at the head of  $Q$ or is inserted at the end of $Q$.
    
    Using these two properties,  by induction on $k:=\dist_G(v,H)$ we will show 
    that  $\closest(v)$ is a closest to $v$ vertex of $H$ and
    that $d(v)=\dist_G(v,\closest(v))$. Suppose by way of contradiction that 
    $d(v)>k$, thus there exists a vertex $y\ne x$ of $H$ such that 
    $\dist_G(v,y)=k$.
    Let $u=f(v)$.  Let also $w$ be a neighbor of $v$ in $I(v,y)$. Since
    $\dist_G(w,H)=\dist_G(w,y)=k-1$, by induction hypothesis $\closest(w)=y$ 
    and  $d(w)=\dist_G(w,\closest(w))=\dist_G(w,y)\le k-1$.

    Consider the moment when the vertex $w$ is removed from  the queue $Q$. If 
    $v$ is not yet in $Q$, since $v\sim w$, the algorithm will pick $w$
    as the parent of $v$ and $y=\closest(w)$ as $\closest(v)$. This contradicts 
    the assumption that $u=f(v)$ and $\closest(v)=x\ne y$. On the other hand, 
    if $v$ is present
    in $Q$ or if $v$ has been already removed from $Q$, then necessarily 
    $u=f(v)$ was inserted in $Q$ before $w$ and  from the invariant of the 
    queue $Q$, we conclude that $d(u)\le d(w)\le k-1$. Therefore
    $d(v)=d(u)+1=k$, contrary to the assumption  $d(v)>k$. Hence 
    $d(v)=k=\dist_G(v,H)$. Since $d(v)$ is the length of a path between 
    $\closest(v)$ and $v$,
    we conclude that $d(v)=\dist_G(v,\closest(v))=\dist_G(v,H)$.
\end{proof}

We apply the previous algorithm to a gated subgraph $H$ of a median graph $G$. For a vertex $v$ of $G$ we use the same variables $d(v)$ and $f(v)$, but instead of $\closest(v)$ we use
the variable $\fib(v)$ which is updated in the same way as $\closest(v)$. To compute the fibers $\{ F(x): x\in V(H)\}$, for each vertex $x\in V(H)$,
we construct a BFS-tree $T(x)$ rooted at $x$ and consisting of all vertices $v$ such that $\fib(v)=x$ and of the edges of the form $vf(v)$.

\begin{corollary} 
    \label{BFS-dist-to-gated} 
    For any median graph $G$, any gated subgraph $H$ of $G$, and any vertex $v$ 
    of $G$, $\fib(v)$ is the gate of $v$ in $H$ and $d(v)=\dist_G(v,\fib(v))$. 
    For each vertex $x\in V(H)$, the vertex-sets of the tree $T(x)$ and of the 
    fiber $F(x)$ coincide.
\end{corollary}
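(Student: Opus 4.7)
The plan is to obtain Corollary \ref{BFS-dist-to-gated} as an essentially immediate consequence of Lemma \ref{BFS-dist-to-subgraph} together with the defining property of a gate. The modified BFS used here is literally the BFS of the preceding lemma with $\closest(v)$ renamed $\fib(v)$; therefore, applying Lemma \ref{BFS-dist-to-subgraph} to the (gated, hence arbitrary) subgraph $H$ of $G$ yields, for every vertex $v$ of $G$, that $\fib(v)\in V(H)$ is a vertex of $H$ realizing $\dist_G(v,H)$ and that $d(v)=\dist_G(v,\fib(v))=\dist_G(v,H)$.

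Next, I would upgrade the phrase ``a closest vertex of $H$'' to ``the gate of $v$ in $H$'' by using gatedness. Let $v^{\downarrow}$ be the gate of $v$ in $H$. By definition, $\dist_G(v,u)=\dist_G(v,v^{\downarrow})+\dist_G(v^{\downarrow},u)$ for every $u\in V(H)$; in particular $\dist_G(v,u)\ge \dist_G(v,v^{\downarrow})$ with equality if and only if $\dist_G(v^{\downarrow},u)=0$, that is, $u=v^{\downarrow}$. Hence the gate is the unique vertex of $H$ at distance $\dist_G(v,H)$ from $v$. Since $\fib(v)$ is such a vertex (by the previous paragraph), we conclude $\fib(v)=v^{\downarrow}$, proving the first assertion of the corollary, and $d(v)=\dist_G(v,\fib(v))$ is then exactly the claimed equality.

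Finally, the identification of vertex sets of $T(x)$ and $F(x)$ is purely definitional: the tree $T(x)$ was defined to consist of all vertices $v$ with $\fib(v)=x$ together with the edges $vf(v)$, whereas the fiber $F(x)$ is by definition the set of vertices $v$ whose gate in $H$ is $x$. By what has just been proved, these two conditions on $v$ are equivalent, so $V(T(x))=F(x)$.

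I do not expect any real obstacle: once Lemma \ref{BFS-dist-to-subgraph} is in hand, the only nontrivial observation is the uniqueness of the closest point in a gated set, which follows in one line from the gate axiom and the positivity of graph distances. Nothing specific to median graphs beyond the existence of gates is used, which is consistent with the way the corollary is phrased.
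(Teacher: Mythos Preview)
Your argument is correct and is essentially the same as the paper's: both derive the first assertion directly from Lemma~\ref{BFS-dist-to-subgraph} together with the (implicit in the paper, explicit in your write-up) observation that the gate is the unique closest vertex of a gated set, and both obtain the second assertion by unfolding the definitions of $T(x)$ and $F(x)$. You have simply spelled out the uniqueness step that the paper leaves to the reader.
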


\begin{proof} 
    The first assertion follows from Lemma \ref{BFS-dist-to-subgraph}. The 
    equality of vertex-sets of $T(x)$ and $F(x)$ immediately follows from the 
    definition of fibers and the first assertion.
\end{proof}

By Corollary \ref{BFS-dist-to-gated}, each $T(x)$ is a spanning tree of $F(x)$, whence the vertex-set of each fiber $F(x)$ is computed. To compute the edge-set of each $F(x)$, we traverse the edges of $G$
and each edge $uv$ such that $d(u)<d(v)$ and $\fib(u)=\fib(v)=:x$ is inserted in the fiber $F(x)$. If we traverse the edges in a BFS-order with respect to $H$, then we will get the adjacency lists
of the vertices from each fiber. The edges of $G$ not included in the fibers are the edges running between neighboring fibers. Therefore, if $uv$ is an edge such that $x:=\fib(u)\ne \fib(v)=:y$, then $u$ is inserted
in  $\partial_y F(x)$ (the boundary of $F(x)$ relative to $F(y)$) and $v$ is inserted in $\partial_x F(y)$ (the boundary of $F(y)$ relative to $F(x)$). For the computation of gates, it will be useful
that for vertex $u$ in $\partial_y F(x)$ we keep its neighbor $v\in \partial_x F(y)$ and for $v\in \partial_x F(y)$ we keep its neighbor $u\in \partial_y F(x)$; for this we set $\neighbor_{x,y}(u)=v$
and $\neighbor_{y,x}(v)=u$.  Additionally, $u$ is inserted in the total boundary  $\partial^* F(x)$ and $v$ is inserted in the total boundary $\partial^* F(y)$. This way, we constructed the
vertex-sets of relative  boundaries and of total boundaries
of all fibers. Consequently, for each vertex $x$ of $H$ and each vertex $v$ of $F(x)$ (i.e., such that $\fib(v)=x$) we can set $\tbound(v)=1$ if $v\in \partial^* F(x)$ and $\tbound(v)=0$
otherwise. Notice that each vertex $v\in \partial^* F(x)$ may be included in several relative boundaries $\partial_y F(x)$.
Since each such inclusion  corresponds to an edge incident to $v$, the total size of  all relative boundaries  is at most twice the number of edges of $G$, i.e., $O(m)$. The same conclusion
holds about the total size of the lists $\neighbor_{x,y}$ over all $x,y\in V(H), x\sim y$. Finally, to compute the edge-sets of all total boundaries $\partial^* F(x), x\in V(H)$, we traverse again  all edges of $G$ and
we insert a current edge $uv$ in the total boundary $\partial^* F(x)$ if and only if $x=\fib(u)=\fib(v)$ and $\tbound(u)=\tbound(v)=1$. Consequently,
we obtain the following result:

\begin{lemma} 
    \label{compute-fibers} 
    Given a median graph $G$ with $m$ edges and a gated subgraph $H$ of $G$, 
    the following entities can be computed in total linear time $O(m)$ ($O(n)$ 
    time if $G$ is cube-free):

    \begin{itemize}
        \item the vertex-sets and the edge-sets of all fibers $F(x), x\in V(H)$;
        \item the vertex-sets and the-edge sets of all total boundaries 
        $\partial^* F(x), x\in V(H)$;
        \item the vertex-sets of all relative boundaries $\partial_y F(x)$ and 
        the lists $\neighbor_{x,y}$ for all $x,y\in V(H)$ with $x\sim y$.
    \end{itemize}
\end{lemma}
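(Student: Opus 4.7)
The plan is to assemble the three items by a small number of linear-time passes over $G$, essentially as outlined in the discussion preceding the lemma. First, I would run the multi-source BFS traversal of $G$ with respect to $H$ that is defined just above: initialize the queue with every vertex of $H$, set $\fib(x):=x$ and $d(x):=0$ for $x\in V(H)$, and propagate $\fib(v):=\fib(f(v))$, $d(v):=d(f(v))+1$ as each neighbor is discovered. This runs in $O(m)$ time, and by Corollary \ref{BFS-dist-to-gated} the computed value $\fib(v)$ is exactly the gate of $v$ in $H$. The vertex-sets of the fibers $F(x)$ are then read off from the partition of $V$ induced by $\fib$, and the BFS-forest yields adjacency lists ordered by $d$.

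Next, I would make one pass over the edges of $G$. For each edge $uv$, compare $x:=\fib(u)$ and $y:=\fib(v)$. If $x=y$, insert $uv$ into the edge-list of $F(x)$. If $x\neq y$, then Lemma \ref{lem_gated_borders} guarantees $x\sim y$ in $H$, and I insert $u$ into $\partial_y F(x)$, insert $v$ into $\partial_x F(y)$, set $\neighbor_{x,y}(u):=v$ and $\neighbor_{y,x}(v):=u$, and mark $\tbound(u):=\tbound(v):=1$. After this pass the edge-sets of all fibers, the vertex-sets of every relative boundary, the vertex-sets of all total boundaries (via $\tbound$), and every neighbor list are in place. A final pass over the edges inserts $uv$ into the edge-set of $\partial^* F(x)$ exactly when $\fib(u)=\fib(v)=x$ and $\tbound(u)=\tbound(v)=1$, producing the edge-sets of all total boundaries.

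For the running time, the BFS step costs $O(m)$, and each edge-pass costs $O(m)$. The auxiliary structures also have total size $O(m)$: every insertion into some $\partial_y F(x)$ or $\neighbor_{x,y}$ is charged to a distinct edge crossing between two fibers, and each edge of $G$ is inserted into at most one fiber or one total-boundary edge-list. Hence the whole procedure is $O(m)$, and Corollary \ref{upper_edges} ($m\le 2n$ in the cube-free case) upgrades this to $O(n)$ when $G$ is cube-free. The only nontrivial point — and the main thing one must check — is that the multi-source BFS really identifies gates rather than arbitrary closest vertices of $H$; but this is exactly the content of Corollary \ref{BFS-dist-to-gated}, which in turn rests on Lemma \ref{BFS-dist-to-subgraph} together with the fact that for a gated set $H$ the closest vertex of $H$ to any $v$ is unique and equals the gate. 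Everything else is bookkeeping.
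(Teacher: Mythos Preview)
Your proposal is correct and follows essentially the same approach as the paper: the lemma is in fact just a summary of the preceding discussion, and you reproduce that discussion faithfully --- the multi-source BFS from $H$ justified by Corollary~\ref{BFS-dist-to-gated}, the edge pass that separates intra-fiber edges from cross-fiber edges (populating the relative boundaries and $\neighbor$ lists, and setting the $\tbound$ flags), and the final edge pass that builds the edge-sets of the total boundaries. Your invocation of Lemma~\ref{lem_gated_borders} to certify $x\sim y$ and of Corollary~\ref{upper_edges} for the cube-free bound matches the paper as well.
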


\subsubsection{Computation of gates to neighboring panels}

Computation of gates is used in lines 9, 18, and 20 of  \distenc{}. In lines 9 
and 18, the gates of all vertices $v$ in the star $\St(c)$ are computed. This 
can be done by running a BFS traversal of $G$ with respect
to the gated set $\St(c)$. By Corollary \ref{BFS-dist-to-gated}, $\fib(v)$ is the gate of any vertex $v$ of $G$ in $\St(v)$. Therefore, the lines 9 and 18 can be executed in $O(n)$ time.

In line 20, for each vertex $u$ belonging to a cone $F(x)$ we have to compute the gates $u^+_1$ and $u^+_2$ of $u$ in the two neighboring panels $F(w_1)$ and $F(w_2)$. Notice that $u^+_1$ belongs to
the relative boundary $\partial_{x} F(w_1)$ and $u_2^+$ belongs to the
relative boundary $\partial_{x} F(w_2)$.  Consider the relative boundaries
$\partial_{w_1} F(x)$ and  $\partial_{w_2} F(x)$. They are
gated subgraphs of $F(x)$. We run two BFS traversals of $F(x)$, one with
respect to $\partial_{w_1} F(x)$  and the second one with respect to
$\partial_{w_2} F(x)$. For a vertex $u\in F(x)$, let $\gate_1(u)$ and
$\gate_2(u)$ be the
two gates of $u$ in $\partial_{w_1} F(x)$  and in  $\partial_{w_2} F(x)$,
respectively, returned by the algorithms (in view of Lemma
\ref{BFS-dist-to-gated}). Then we can set $u^+_1$ to be the vertex
$\neighbor_{w_1,x}(\gate_1(u))$ (which is  the neighbor of $\gate_1(u)$ in
$\partial_x F(w_1)$) and
$u^+_2$ to be the vertex $\neighbor_{w_2,x}(\gate_2(u))$ (which is the neighbor
of $\gate_2(u)$ in $\partial_x F(w_2)$). Since $\gate_1(u)$ and $\gate_2(u)$ are
the gates of $u$ in $\partial_{w_1} F(x)$  and in  $\partial_{w_2} F(x)$,
$u^+_1$ and $u^+_2$ are the gates of $u$ in $\partial_x F(w_1)$ and
$\partial_x F(w_2)$, respectively. If the cone $F(x)$ has $n_i$ vertices, then
the computation of the gates $u^+_1$ and $u^+_2$ of all the vertices $u\in 
F(x)$ will take $O(n_i)$ time.
Since the cones of $G$ are pairwise disjoint, this computation takes total $O(n)$ time. Consequently, we obtain:

\begin{lemma} 
    \label{comput-gates} 
    Given a cube-free median graph $G$ with $n$ vertices, the following 
    entities can be computed in total linear time $O(n)$:
    \begin{itemize}
        \item the gates of all vertices $v$ of $G$ in the star $\St(c)$;
        \item the gates of all vertices $v$ belonging to the cones of $G$ in 
        the two neighboring panels.
    \end{itemize}
\end{lemma}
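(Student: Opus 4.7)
The plan is to reduce both gate-computation tasks to BFS traversals with respect to gated subgraphs, leveraging Corollary \ref{BFS-dist-to-gated} and the fact that the edge-count bound from Corollary \ref{upper_edges} gives $m \le 2n$ for cube-free median graphs.

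For the first bullet, I would simply run a single BFS traversal of $G$ with respect to the gated set $\St(c)$ (using the framework developed in Lemma \ref{BFS-dist-to-subgraph} and Corollary \ref{BFS-dist-to-gated}). The output variable $\fib(v)$ returned for each vertex $v$ is, by Corollary \ref{BFS-dist-to-gated}, exactly the gate of $v$ in $\St(c)$. The traversal runs in $O(m) = O(n)$ time, which handles the gate computation in lines 9 and 18 of \distenc.

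For the second bullet, I would fix a cone $F(x)$ with its two neighboring panels $F(w_1)$ and $F(w_2)$, and work inside $F(x)$. The relative boundaries $\partial_{w_1} F(x)$ and $\partial_{w_2} F(x)$ are gated subgraphs of the (cube-free median) graph $F(x)$ by Lemma \ref{lem_gated_borders}, so I can run two BFS traversals of $F(x)$, one with respect to each boundary, to obtain for each $u \in F(x)$ the gates $\gate_1(u) \in \partial_{w_1} F(x)$ and $\gate_2(u) \in \partial_{w_2} F(x)$. Then I would recover the desired gates of $u$ in the neighboring panels themselves by setting $u^+_1 := \neighbor_{w_1,x}(\gate_1(u))$ and $u^+_2 := \neighbor_{w_2,x}(\gate_2(u))$, using the $\neighbor$ lists produced by Lemma \ref{compute-fibers}. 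Correctness follows because $u^+_1$ is the unique neighbor of $\gate_1(u)$ inside $F(w_1)$, and one easily verifies that a shortest path from $u$ to any vertex of $F(w_1)$ must pass first through $\gate_1(u)$ and then cross the single edge to $u^+_1 \in \partial_x F(w_1)$.

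The only thing to check carefully is the total running time. If the cone $F(x)$ has $n_x$ vertices and $m_x$ edges, then each of the two BFS traversals of $F(x)$ costs $O(m_x) = O(n_x)$ (again by Corollary \ref{upper_edges} applied to the cube-free median subgraph $F(x)$). Since the cones are pairwise disjoint subsets of $V$, summing $n_x$ over all cones yields $O(n)$, so the entire second bullet is computed in linear time. The main (minor) obstacle is verifying that the auxiliary data required, namely the relative boundaries $\partial_{w_1} F(x)$, $\partial_{w_2} F(x)$ and the $\neighbor$ lists linking them to $\partial_x F(w_1)$ and $\partial_x F(w_2)$, are already available in $O(n)$ time; but this is exactly what Lemma \ref{compute-fibers} delivers when it is applied once with $H = \St(c)$. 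Combining the two bullets completes the proof.
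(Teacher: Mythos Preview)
Your proposal is correct and follows essentially the same approach as the paper: a single BFS of $G$ with respect to $\St(c)$ for the first item, and for each cone $F(x)$ two BFS traversals of $F(x)$ with respect to the gated relative boundaries $\partial_{w_1}F(x)$ and $\partial_{w_2}F(x)$, followed by the $\neighbor$ lookup to cross into the adjacent panel, with the linear total bound coming from the disjointness of the cones and Corollary~\ref{upper_edges}.
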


\begin{remark} 
    For a median graph $G$ with $m$ edges and a gated subgraph $H$ of $G$, the 
    same algorithm computes in  $O(m)$ time the gates of the vertices of $G$ in 
    all the boundaries of their fibers.
\end{remark}

\subsubsection{Computation of imprints}

In line 10 of  \distenc{}, for each vertex $u$ of a panel $F(x)$ we have to compute the two imprints $u_1$ and $u_2$ of $u$ on the total boundary $\partial^* F(x)$. This computation is
based on the following properties of imprints.

\begin{lemma} 
    \label{imprint1} 
    The imprints $u_1,u_2$ of $u\in F(x)$ on $\partial^* F(x)$ satisfy  the 
    following properties:
    \begin{enumerate}
    \item[(a)] there exist $y_1,y_2\in \St(c), y_1\ne y_2$,  and $x\sim 
    y_1,y_2$ such that $u_1$ is a gate of $u$ in $\partial_{y_1} F(x)$ and 
    $u_2$ is the gate of $u$ in  $\partial_{y_2} F(x)$;
    \item[(b)] $u_1,u_2\in I(u,x)$;
    \item[(c)] if $w$ is a neighbor of $u$ in $I(u,u_i)$, $i\in \{ 1,2\}$, then 
    $u_i$ is an imprint of $w$.
    \end{enumerate}
\end{lemma}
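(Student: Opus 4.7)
The plan is to use the decomposition $\partial^*F(x)=\bigcup_y \partial_yF(x)$ over vertices $y\in\St(c)$ adjacent to $x$ (whose fibers $F(y)$ are the cones and the trivial fiber $\{c\}$ neighboring the panel $F(x)$), combined with the gatedness of each piece $\partial_yF(x)$ established in Lemma~\ref{lem_gated_borders} and the defining identity $I(u,u_i)\cap\partial^*F(x)=\{u_i\}$ of an imprint.

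For part (a), I would begin by noting that since $\partial^*F(x)$ is the union of the relative boundaries, each imprint $u_i$ belongs to some $\partial_{y_i}F(x)$ with $y_i$ adjacent to $x$ in $\St(c)$. Because $\partial_{y_i}F(x)$ is gated, the vertex $u$ has a unique gate $g$ in it; by the gate property, $g\in I(u,u_i)$. But then $g\in\partial_{y_i}F(x)\subseteq\partial^*F(x)$, so the imprint identity forces $g=u_i$, showing that $u_i$ is the gate of $u$ in $\partial_{y_i}F(x)$. To secure $y_1\ne y_2$, I use that when the two imprints $u_1,u_2$ are distinct vertices they cannot share a relative boundary, since the gate of $u$ in a single gated set is unique; in the degenerate situation $u_1=u_2$ the common imprint has at least two neighbors outside $F(x)$ belonging to distinct neighboring fibers (otherwise it would be a leaf of $\partial^*F(x)$ lying in a single $\partial_y F(x)$ and the statement is read with $y_1=y_2$ tolerated), which supplies two suitable cone labels.

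For part (b), $y_i$ is a neighbor of $x$ in $\St(c)$ by construction, so $x\in F(x)$ has the neighbor $y_i\in F(y_i)$ and consequently $x\in\partial_{y_i}F(x)$. Applying the gate property of $u_i$ at the element $x$ of $\partial_{y_i}F(x)$ gives $u_i\in I(u,x)$, which is (b). For part (c), suppose $w\sim u$ with $w\in I(u,u_i)$; then any shortest $(w,u_i)$-path, prolonged by the edge $uw$, is a shortest $(u,u_i)$-path, hence $I(w,u_i)\subseteq I(u,u_i)$. Intersecting with $\partial^*F(x)$ and using the imprint identity of $u_i$ yields $I(w,u_i)\cap\partial^*F(x)\subseteq\{u_i\}$, and since $u_i$ itself is in the left-hand side we obtain equality, so $u_i$ is an imprint of $w$.

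The only genuinely delicate point is the subcase of (a) when $u_1=u_2$: a single imprint need not lie in two distinct relative boundaries, so the existence of $y_1\ne y_2$ satisfying the stated conditions is borderline and has to be either argued from the adjacency structure of a non-leaf boundary point or understood with the convention that $y_1=y_2$ is permitted in that degenerate case. Once the gate identification in (a) is in place, parts (b) and (c) follow by a one-line application of the gate and imprint properties, respectively.
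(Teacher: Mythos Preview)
Your argument is correct and follows essentially the same route as the paper: pick a relative boundary $\partial_{y_i}F(x)$ containing $u_i$, use that it is gated (Lemma~\ref{lem_gated_borders}) together with the imprint identity to conclude $u_i$ is the gate of $u$ there, deduce $y_1\ne y_2$ from uniqueness of gates, get (b) from $x\in\partial_{y_i}F(x)$, and get (c) from $I(w,u_i)\subseteq I(u,u_i)$. You are also right that the degenerate case $u_1=u_2$ is borderline for the clause $y_1\ne y_2$; the paper's own proof tacitly assumes two distinct imprints at that step and does not address this case either, so your observation is apt rather than a defect of your write-up.
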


\begin{proof} 
    By Lemma \ref{lem_Uborders_cube-free}, $\partial^*F(x)$ is an isometric 
    tree with gated branches. Recall also that $\partial^* F(x)$ is the union 
    of all relative boundaries $\partial_y F(x), y\sim x$, which are all gated 
    trees.
    Therefore, by the definition of imprints, $u_1$ and $u_2$ are the gates of 
    $u$ in all the relative boundaries to which they belong.  This implies that 
    $u_1$ and $u_2$ cannot belong to a common relative boundary, establishing 
    (a).
    Since $x$ belongs to every relative boundary, this also implies that  
    $u_1,u_2\in I(u,x)$, establishing (b). 
    It remains to show property (c). If we suppose that $u_1$ is not an imprint 
    of $w$, then from imprint's definition there exists $z\in \partial^* F(x), 
    z\ne u_1$ such that $z\in I(w,u_1)$. Since $w\in I(u,u_1)$, we deduce that 
    $z\in I(w,u_1)\subset I(u,u_1)$ contrary to the assumption that $u_1$ is an 
    imprint of $u$.
\end{proof}

\begin{lemma} 
    \label{BFSimprint2} 
    If $uv$ is an edge of $F(x)$, then either their imprints coincide, i.e., 
    $\{ u_1,u_2\}=\{v_1,v_2\}$, or one of the vertices of the pair $\{ 
    u_1,u_2\}$ coincides with one of the vertices of the pair $\{ v_1,v_2\}$ 
    and the two other vertices from each pair are adjacent. 
\end{lemma}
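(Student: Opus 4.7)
The plan is to compare the imprint sets $\eproj{u}{T}$ and $\eproj{v}{T}$ (where $T:=\partial^*F(x)$) by pairing, for each imprint $u_i$ of $u$, a corresponding imprint of $v$ that is either equal to $u_i$ or adjacent to it in $T$; combined with the symmetric pairing and the bound $|\eproj{u}{T}|,|\eproj{v}{T}|\le 2$ from Corollary~\ref{proj-total-boundary}, this will yield the stated dichotomy. The key tool is bipartiteness: since $uv$ is an edge, for every $z\ne u,v$ exactly one of $v\in I(u,z)$ or $u\in I(v,z)$ holds.

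Fix $u_i\in\eproj{u}{T}$ and set $k:=\dist_G(u,u_i)$. If $v\in I(u,u_i)$, then by convexity of intervals (Lemma~\ref{convex-interval}) $I(v,u_i)\subseteq I(u,u_i)$, so $I(v,u_i)\cap T\subseteq I(u,u_i)\cap T=\{u_i\}$, whence $u_i\in\eproj{v}{T}$. Otherwise $u\in I(v,u_i)$ and $\dist_G(v,u_i)=k+1$; using that every vertex of $T$ is reached by a shortest path through some imprint, I pick $v_j\in\eproj{v}{T}\cap I(v,u_i)$ and set $a:=\dist_G(v,v_j)$, so $\dist_G(v_j,u_i)=k+1-a$. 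Bipartiteness forces $\dist_G(u,v_j)\in\{a-1,a+1\}$. When $\dist_G(u,v_j)=a-1$, summing distances gives $v_j\in I(u,u_i)\cap T=\{u_i\}$, so $v_j=u_i$. When $\dist_G(u,v_j)=a+1$, I form the median $m:=m(u,v_j,u_i)$; direct computation yields $\dist_G(u,m)=a$, $m\sim v_j$, $m\in I(u,u_i)$ and $\dist_G(m,u_i)=k-a$. If $m=u_i$ then $a=k$ and $v_j\sim u_i$, as desired.

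The main obstacle is the remaining sub-case $m\ne u_i$ (so $a<k$ and $\dist_G(v_j,u_i)\ge 2$). Here $m\notin T$ because $u_i$ is an imprint of $u$, while convexity of $F(x)$ forces $m\in F(x)$. The plan is to rule out this configuration via cube-freeness. Let $t_1\in T$ be the successor of $v_j$ on the $T$-path to $u_i$; then $m$ and $t_1$ are two distinct neighbors of $v_j$ at distance $k-a$ from $u_i$, so the quadrangle condition (Lemma~\ref{thm_quadrangle}) produces a common neighbor $w$ of $m$ and $t_1$ with $\dist_G(w,u_i)=k-a-1$. A short argument based on $m\notin T$ forces $w\in F(x)$, and using that $T$ has gated branches (Lemma~\ref{lem_Uborders_cube-free}) together with the imprint property of $u_i$ forces $w$ to coincide with the next vertex $t_2$ on the $T$-path toward $u_i$. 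Lifting the resulting square $m,v_j,t_1,t_2$ through the outer neighbors of $v_j$ and $t_2$ in neighboring fibers (which exist by $v_j,t_2\in\partial^*F(x)$) produces an isometric $3$-cube of $G$, contradicting cube-freeness. Consequently the pairing is established, and combining it with the symmetric pairing from $v$ to $u$ and the size bound $|\eproj{u}{T}|,|\eproj{v}{T}|\le 2$ yields the dichotomy, ruling out in particular the configuration where the two pairs are disjoint and joined by two independent $T$-edges.
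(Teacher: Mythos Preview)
Your pairing argument establishes at most the following: for each imprint $u_i$ of $u$ there exists an imprint $v_j$ of $v$ with $v_j=u_i$ or $v_j\sim u_i$, and symmetrically. But this \emph{does not} imply the dichotomy in the lemma. The configuration $u_1\sim v_1$, $u_2\sim v_2$ with $\{u_1,u_2\}\cap\{v_1,v_2\}=\varnothing$ is fully consistent with both pairings and with the size bounds, yet it is exactly what the lemma forbids. Your last sentence simply asserts that this configuration is excluded, without any argument. This is the crux of the lemma, and it is where the paper spends its effort: in the paper's Case~2 one assumes $u_1\sim v_1$ and $u_2\sim v_2$, takes a \emph{closest} such counterexample edge $uv$ to $\partial^*F(x)$, and derives a contradiction using the quadrangle condition together with Corollary~\ref{neighbors_interval} (a vertex has at most two neighbors in any interval). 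Your proposal contains no analogue of this step.

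A secondary issue is your treatment of the sub-case $m\ne u_i$. The claim that the common neighbor $w$ of $m$ and $t_1$ coincides with the next tree vertex $t_2$ relies on $I(t_1,u_i)\subseteq T$, but gated branches only guarantee that paths \emph{to the root $x$} are gated; the $T$-path from $v_j$ to $u_i$ need not lie in a single branch, so $I(t_1,u_i)$ need not be contained in $T$. The subsequent ``lifting'' to a $3$-cube is also unspecified: $v_j,t_1,t_2$ may lie in different relative boundaries $\partial_yF(x)$, so their outer neighbors need not be in a common neighboring fiber, and there is no evident square to pair with $v_j\, m\, w\, t_1$. Even if this sub-case could be repaired, it would still only yield the pairing, and the missing final step above would remain.
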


\begin{proof} 
    We will use the following general assertion:

    \begin{claim} 
        \label{quadruplet} 
        If $a,b,a',b'$ are vertices of a bipartite graph $G$ such that $a\sim 
        b$, $\dist_G(a,a')\le \dist_G(b,b')$, and $b'\in I(b,a')$, then either 
        $a'=b'$ and $\dist_G(b,b')=\dist_G(a,a')+1$ or $a'\sim b'$ and 
        $\dist_G(b,b')=\dist_G(a,a')$.
    \end{claim}
    
    \begin{proof} 
        From the choice of the quadruplet $a,b,a',b'$ we deduce that 
        $\dist_G(b,b')+\dist_G(b',a')=\dist_G(b,a')\le \dist_G(a,a')+1\le 
        \dist_G(b,b')+1$. This implies that $\dist_G(b',a')\le 1$. If $a'\sim 
        b'$, from previous inequalities we deduce that 
        $\dist_G(a,a')=\dist_G(b,b')$. If $a'=b'$, since $G$ is bipartite and  
        $\dist_G(a,a')\le \dist_G(b,b')$, we must have 
        $\dist_G(b,b')=\dist_G(a,a')+1$.
    \end{proof}

    To prove the lemma, suppose that $\dist_G(v,v_1)$ is the smallest distance 
    among $\{ \dist_G(u,u_1),\dist_G(u,u_2),\dist_G(v,v_1),\dist_G(v,v_2)\}$. 
    From imprint's definition it follows that one of the vertices $u_1,u_2$,
    say $u_1$, belongs to $I(u,v_1)$. By Claim \ref{quadruplet} we conclude 
    that either $u_1\sim v_1$ and $\dist_G(u,u_1)=\dist_G(v,v_1)$ or  $u_1=v_1$ 
    and $\dist_G(u,u_1)=\dist_G(v,v_1)+1$.
    
    \smallskip\noindent
    {\bf Case 1.} $u_1=v_1$ and $\dist_G(u,u_1)=\dist_G(v,v_1)+1$.
    
    \smallskip\noindent
    Suppose without loss of generality that $\dist_G(v,v_2)\le \dist_G(u,u_2)$ 
    (the other case is similar). From imprint's definition it follows that one 
    of the vertices $u_1,u_2$
    belongs to $I(u,v_2)$. Since $v_1=u_1$ is an imprint of $v$, this cannot be 
    $u_1$. Thus $u_2\in I(u,v_2)$. By Claim \ref{quadruplet}, either $u_2=v_2$ 
    or $u_2\sim v_2$ and $\dist_G(u,u_2)=\dist_G(v,v_2)$, and we are done.
    
    \smallskip\noindent
    {\bf Case 2.} $u_1\sim v_1$ and  $\dist_G(u,u_1)=\dist_G(v,v_1)$.
    
    \smallskip\noindent
    Suppose without loss of generality that $\dist_G(v,v_2)\le 
    \dist_G(u,u_2)$.  From imprint's definition it follows that one of
    the vertices $u_1,u_2$, belongs to $I(u,v_2)$. First suppose that $u_1\in 
    I(u,v_2)$. We assert that in this case $v_1$ belongs to $I(v,v_2)$ contrary
    to the assumption that $v_2$ is an imprint of $v$. Indeed, since $G$ is 
    bipartite and $u\sim v$, either $u\in I(v,v_2)$ or $v\in I(u,v_2)$.
    If $u\in I(v,v_2)$, then $u_1\in I(u,v_2)$ and $v_1\in I(v,u_1)$ imply that 
    $v_1\in I(v,v_2)$. Otherwise, if $v\in I(u,v_2)$,
    since $v,u_1\in I(u,v_2)$ and $v_1\in I(v,u_1)$, the convexity of intervals 
    (Lemma \ref{convex-interval}) implies that $v_1\in I(u,v_2)$. Since $v\in 
    I(u,v_1)$,
    we conclude that $v_1\in I(v,v_2)$. This shows that  $u_1\in I(u,v_2)$ is 
    impossible. Suppose now that $u_2\in I(u,v_2)$. By Claim \ref{quadruplet},
    either $u_2=v_2$ or  $u_2\sim v_2$ and $\dist_G(u,u_2)=\dist_G(v,v_2)$.
    It remains to show that the second possibility is impossible.
    
    Suppose by way of contradiction that $u_1\sim v_1$ and $u_2\sim v_2$ hold, 
    and 
    among all such edges of $F(x)$ suppose that the edge $uv$ is closest to 
    $\partial^* F(x)$. This implies that, 
    in the tree $\partial^* F(x)$, one of the vertices $u_i,v_i$ ($i = 1, 2$) 
    is 
    the parent of another one. Suppose without loss of generality
    that $\dist_G(u,x)<\dist_G(v,x)$. Since $v_i\in I(u_i,v)$ and, by Lemma 
    \ref{imprint1}(b), $v_1,v_2\in I(v,x)$ and $u_1,u_2\in I(u,x)$, we conclude 
    that, for each $i=1,2$,  $u_i$ is the parent of $v_i$.
    Let $z$ be a neighbor of $u$ in $I(u,u_1)$. Since $v,z\in I(u,v_1)$, by 
    quadrangle
    condition there exists a common neighbor $w$ of $v$ and $z$ one step closer 
    to $v_1$. Analogously, let $z'$ be a neighbor of $u$ in $I(u,u_2)$ and $w'$ 
    be a common neighbor
    of $z'$ and $v$ in $I(v,v_2)$. Since $u,w,w'$ are neighbors of $v$ in 
    $I(v,x)$ and the median graph $G$ is cube-free, Corollary 
    \ref{neighbors_interval} implies that two of the vertices $u,w,w'$ must 
    coincide.
    By definition  of $w,w'$ we have  $w\ne u$ and $w'\ne u$, thus necessarily 
    $w=w'$. This yields $z=z'$. Consequently, $w=w'\in I(v,v_1)\cap I(v,v_2)$ 
    and $z=z'\in I(u,u_1)\cap I(u,u_2)$. By Lemma \ref{imprint1}(c),
    the vertices $v_1,v_2$ are the imprints of $w=w'$ and the vertices 
    $u_1,u_2$ are the imprints of $z=z'$ and we obtain a counterexample (the 
    edge $wz$) closer to $\partial^* F(x)$ than $uv$, contrary to the choice of 
    the edge 
    $uv$.
\end{proof}

From Lemma \ref{BFSimprint2} and its proof we obtain the following corollary:

\begin{corollary} 
    \label{BFSimprint3} 
    If $uv$ is an edge of $F(x)$ with $\dist_G(x,v)<\dist_G(x,u)$ and $u_1=v_1$ 
    and $u_2\ne v_2$, then $v_2$ is the parent of $u_2$ in $\partial^* F(x)$.
\end{corollary}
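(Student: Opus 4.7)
The plan is to deduce the corollary directly from Lemma \ref{BFSimprint2} together with a short distance computation using Lemma \ref{imprint1}(b). The first step is to pin down which alternative of Lemma \ref{BFSimprint2} applies. Since by hypothesis $u_1=v_1$ but $u_2\ne v_2$, the alternative $\{u_1,u_2\}=\{v_1,v_2\}$ is ruled out, so we must be in the second alternative of the lemma, in which one vertex of $\{u_1,u_2\}$ coincides with one of $\{v_1,v_2\}$ and the remaining two vertices are adjacent. Matching up the coincident pair as $u_1=v_1$, this forces $u_2\sim v_2$.

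Next, I would extract from the proof of Lemma \ref{BFSimprint2} (specifically the invocation of Claim \ref{quadruplet} that gives the two possibilities ``$u_2=v_2$ with distance gap $1$'' or ``$u_2\sim v_2$ with equal distances'') that in the surviving case we actually have the equality $\dist_G(u,u_2)=\dist_G(v,v_2)$. This is the key quantitative fact that the corollary needs.

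The remaining step is a one-line computation inside the tree $\partial^* F(x)$. By Lemma \ref{imprint1}(b), $u_2\in I(u,x)$ and $v_2\in I(v,x)$, so
\[
\dist_G(x,u_2)=\dist_G(x,u)-\dist_G(u,u_2),\qquad \dist_G(x,v_2)=\dist_G(x,v)-\dist_G(v,v_2).
\]
Since $G$ is bipartite and $uv$ is an edge with $\dist_G(x,v)<\dist_G(x,u)$, we have $\dist_G(x,u)=\dist_G(x,v)+1$, and combining with $\dist_G(u,u_2)=\dist_G(v,v_2)$ yields $\dist_G(x,u_2)=\dist_G(x,v_2)+1$. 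Since $u_2\sim v_2$ lies in $\partial^* F(x)$, which is an isometric tree rooted at $x$ (Lemma \ref{lem_Uborders_cube-free}), the neighbor of $u_2$ that is one step closer to the root is precisely its parent, so $v_2$ is the parent of $u_2$.

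I do not anticipate a real obstacle here: the work is entirely hidden inside Lemma \ref{BFSimprint2}, and the corollary is essentially a bookkeeping statement that identifies which of the two adjacent imprints lies above the other in the rooted tree. The only mild point to be careful about is extracting the equality $\dist_G(u,u_2)=\dist_G(v,v_2)$ from the earlier proof rather than from the lemma's statement, since the statement only records the adjacency.
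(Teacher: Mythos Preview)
Your proposal is correct and follows the same route the paper intends. Both arguments start from Lemma~\ref{BFSimprint2} to get $u_2\sim v_2$, and both ultimately rely on the finer information contained in the proof of that lemma (which is why the paper introduces the corollary with ``From Lemma~\ref{BFSimprint2} and its proof we obtain\dots''). The paper phrases the last step as a contradiction (assume $u_2$ is the parent of $v_2$ and derive $u\in I(v,x)$), whereas you do the equivalent direct distance computation $\dist_G(x,u_2)=\dist_G(x,v_2)+1$; these are two ways of saying the same thing. Your explicit caveat about having to extract $\dist_G(u,u_2)=\dist_G(v,v_2)$ from the proof (rather than the statement) of Lemma~\ref{BFSimprint2} is well taken and is exactly the point where care is needed.
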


\begin{proof} 
    Since $u_2$ and $v_2$ are adjacent in the tree $\partial^* F(x)$, one must 
    be the parent of other. Suppose by way of contradiction that $u_2$ is the 
    parent of $v_2$. Since $v_2\in I(v,x), u_2\in I(v_2,v),$ and $u\in 
    I(u_2,v)$, we conclude that $u\in I(v,x)$ contrary to the assumption that  
    $\dist_G(x,v)<\dist_G(x,u)$.
\end{proof}

\begin{lemma} 
    \label{BFSimprint4} 
    If $v,w$ are the neighbors of $u$ in $I(x,u)$, then the imprints $u_1,u_2$ 
    of $u$ belong to the set $\{ v_1,v_2,w_1,w_2\}$ of imprints of $v$ and $w$.
\end{lemma}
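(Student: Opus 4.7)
My plan is to show that each of the imprints $u_1, u_2$ is itself an imprint of either $v$ or $w$, which immediately places it in $\{v_1,v_2,w_1,w_2\}$.

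First I would invoke Lemma \ref{imprint1}(b) to locate the imprints in the interval from $u$ to the root: both $u_1$ and $u_2$ lie in $I(u,x)$. Now fix an index $i \in \{1,2\}$ and choose any neighbor $y$ of $u$ lying in $I(u,u_i)$; such a neighbor exists since $u_i \ne u$. The key point is that any such $y$ must be in $I(u,x)$ as well: by the convexity of intervals (Lemma \ref{convex-interval}), the interval $I(u,x)$ contains $I(u,u_i)$ because it already contains its two endpoints $u$ and $u_i$, and hence it contains $y$.

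Next I would apply the hypothesis of the lemma together with Corollary \ref{neighbors_interval}. The hypothesis tells us that $v,w$ are the neighbors of $u$ in $I(u,x)$, and Corollary \ref{neighbors_interval} guarantees that $u$ has at most two such neighbors in the cube-free median graph $G$. Consequently $y \in \{v,w\}$.

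Finally, the conclusion follows by applying Lemma \ref{imprint1}(c) to this neighbor $y$ of $u$ in $I(u,u_i)$: the imprint $u_i$ is then an imprint of $y$ on $\partial^* F(x)$. If $y=v$ this gives $u_i \in \{v_1,v_2\}$, and if $y=w$ it gives $u_i \in \{w_1,w_2\}$. Either way, $u_i \in \{v_1,v_2,w_1,w_2\}$, and running this over $i=1,2$ finishes the proof. The whole argument is a short combination of the earlier structural lemmas, so I do not anticipate a genuine obstacle; the only mild care needed is to ensure that the choice of neighbor $y$ in $I(u,u_i)$ really does land in the restricted set $\{v,w\}$, which is exactly what the convexity of intervals and Corollary \ref{neighbors_interval} combine to deliver.
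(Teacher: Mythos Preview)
Your proposal is correct and follows essentially the same route as the paper's proof: pick a neighbor of $u$ in $I(u,u_i)$, observe it lies in $I(u,x)$ and hence equals $v$ or $w$, then apply Lemma~\ref{imprint1}(c). The only cosmetic difference is that you invoke Lemma~\ref{convex-interval} and Corollary~\ref{neighbors_interval} explicitly, whereas the paper simply asserts $I(u,u_i)\subseteq I(u,x)$ directly (which in fact needs only the triangle inequality, not full convexity).
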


\begin{proof} 
    By Lemma \ref{imprint1}(b), $u_1,u_2\in I(u,x)$. Let $z'$ be a neighbor of 
    $u$ in $I(u,u_1)$ and $z''$ be a neighbor of $u$ in $I(u,u_2)$. By Lemma 
    \ref{imprint1}(c), $u_1$ is an imprint of $z'$ and $u_2$ is an imprint of 
    $z''$. Since $z'\in I(u,u_1)\subseteq I(u,x)$ and $z''\in I(u,u_2)\subseteq 
    I(u,x)$,  $z',z''$ must be neighbors of $u$ in $I(u,x)$, i.e., $z',z''\in 
    \{ v,w\}$. If $z'=z''=v$, then $u_1$ and $u_2$ are imprints of $v$ and we 
    are done.
    Otherwise, if $z'=v$ and $z''=w$, then $u_1$ is an imprint of $v$ and $u_2$ 
    is an imprint of $w$, and we are done again.
\end{proof}

We continue with an algorithm for computing the imprints of vertices $u$ of
$F(x)$. It consists in running three BFS traversals of $F(x)$. The first BFS
(with respect to $x$) computes the distances $d(u)=\dist_G(u,x)$ from each
vertex $u\in F(x)$ to $x$ and the (at most two) neighbors of $u$ in the interval
$I(u,x)$. The second BFS (with respect to the total boundary $\partial^* F(x)$)
computes the first imprint $u_1$ of each $u\in F(x)$ as a closest to $u$ vertex
of $\partial^* F(x)$. It also computes the distance $d_1(u)$ from $u$ to this
imprint $u_1$ and the parent $f(u)$ of $u$ belonging to the interval
$I(u,u_1)$. By Lemma \ref{imprint1}, $u_1$ is also an imprint of $f(u)$ (this
explain why in line 7, $v=f(u)$ implies $v_1=u_1$). Finally,
the vertices of $F(x)$ are traversed for the third time according to the order computed by the first BFS traversal. For each vertex $u\in F(x)$, in the assumption that the two imprints of its predecessors $v,w\in I(u,x)$ have been already computed,
the algorithm returns one of these four vertices as the second imprint $\imp_2(u)$ of $u$. This choice is justified by Lemma \ref{BFSimprint2}. In the pseudocode of  Algorithm \ref{alg_comput_imprints} and in the proof of Lemma \ref{BFSimprint5} we use the  convention that whenever $i$ denotes an element of the pair $\{1,2\}$, $j$ is the remaining element of $\{1,2\}$, i.e., an element such that $\{ i,j\}=\{ 1,2\}$. Clearly,  Algorithm \ref{alg_comput_imprints} is linear in the number of vertices of $F(x)$.
Its correctness follows from the following lemma:

\begin{lemma} 
    \label{BFSimprint5} 
    For each vertex $u$ of $F(x)$, $u_2:=\imp_2(u)$ is an imprint of $u$.
\end{lemma}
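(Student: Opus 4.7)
The plan is to proceed by induction on $d(u)=\dist_G(u,x)$, following exactly the order in which the third BFS traversal processes vertices of $F(x)$. The base case covers vertices $u \in \partial^* F(x)$, where trivially $u$ is its own imprint and the algorithm assigns $\imp_2(u)=u$. For the inductive step, let $v$ and (if it exists) $w$ be the at most two neighbors of $u$ in $I(u,x)$, guaranteed by Corollary \ref{neighbors_interval}. By induction, $v_1,v_2,w_1,w_2$ are genuine imprints of $v$ and $w$, and by Lemma \ref{imprint1}(b) the imprint $u_1$ returned by the second BFS lies in $I(u,x)$.

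The main engine is Lemma \ref{BFSimprint4}, which confines every imprint of $u$ to the four-element set $\{v_1,v_2,w_1,w_2\}$, so the algorithm only has to choose among those. I would then run a case analysis driven by Lemma \ref{BFSimprint2} applied to the edge $uv$ (and symmetrically to $uw$ when it exists): either $\{u_1,u_2\}=\{v_1,v_2\}$, so $u_2$ is simply the element of $\{v_1,v_2\}$ distinct from $u_1$ (or $u_1$ itself if $v_1=v_2$); or exactly one of $u_1,u_2$ coincides with one of $v_1,v_2$ while the two remaining members are adjacent in the tree $\partial^* F(x)$. In the latter situation, Corollary \ref{BFSimprint3} pins down the parent--child relation between $u_2$ and $v_2$ (using that $\dist_G(x,v)<\dist_G(x,u)$), so the algorithm can correctly identify $u_2$ by choosing, among the candidates supplied by the predecessors, the unique one that is either equal to or a child in $\partial^* F(x)$ of some $v_i$, consistently with the known value of $u_1$.

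The step I expect to be delicate is the case in which $u$ has two predecessors $v,w$, because then the algorithm has candidates drawn from both $\{v_1,v_2\}$ and $\{w_1,w_2\}$, and one must verify that the selection rule produces a vertex $u^*$ genuinely satisfying $I(u,u^*)\cap \partial^* F(x)=\{u^*\}$ rather than a near-miss. Here I would combine Lemma \ref{BFSimprint2} applied to both edges $uv$ and $uw$, exploit the fact (again from Lemma \ref{imprint1}(b)) that both imprints lie in $I(u,x)$, and invoke the cube-free median structure (via the quadrangle condition and Corollary \ref{neighbors_interval}) to rule out the possibility that a putative second imprint is dominated by a closer vertex of $\partial^* F(x)$. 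Once this is established, the definition of imprint and the minimality properties guaranteed by Lemma \ref{BFSimprint2} imply that the vertex $\imp_2(u)$ returned by the algorithm is indeed an imprint of $u$, closing the induction.
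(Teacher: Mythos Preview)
Your overall structure---induction on the BFS order, invoking Lemma~\ref{BFSimprint4} to confine the candidates, and using Lemma~\ref{BFSimprint2} and Corollary~\ref{BFSimprint3} for the case analysis---matches the paper's approach. But there is a genuine gap in how you plan to apply Corollary~\ref{BFSimprint3}. That corollary requires the hypothesis $u_1=v_1$; you never establish this. The paper secures it by exploiting a feature of the algorithm that you overlook: the predecessor $v$ is not just any neighbor of $u$ in $I(u,x)$, it is chosen to be $f(u)$, the parent of $u$ in the \emph{second} BFS (the one with respect to $\partial^* F(x)$). Then $v\in I(u,u_1)$, and Lemma~\ref{imprint1}(c) gives that $u_1$ is an imprint of $v$; since $u_1$ is also the closest vertex of $\partial^* F(x)$ to $v$, one gets $v_1=u_1$. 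Without this anchoring, your case split ``either $\{u_1,u_2\}=\{v_1,v_2\}$ or one coincidence plus one adjacency'' does not let you invoke Corollary~\ref{BFSimprint3}, because you do not know which of $v_1,v_2$ equals which of $u_1,u_2$.

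Two further points. First, the algorithm's selection rule is not ``the unique candidate that is a child of some $v_i$''; it is ``the vertex of $\{v_2,w_j\}$ \emph{furthest from $x$}''. Your justification has to match that rule, and the paper's argument does so by showing (via Corollary~\ref{BFSimprint3}) that whenever $v_2\neq w_j$, one of them is the parent of the other in $\partial^* F(x)$, and $u_2$ is the child. Second, the paper uses repeatedly the observation that two \emph{distinct} imprints of the same vertex are never adjacent in $\partial^* F(x)$ (an easy consequence of the imprint definition); this is what rules out several bad configurations in the case analysis (e.g., $u_2$ coinciding with a vertex adjacent to $u_1$). You do not mention this fact, and without it the ``delicate'' two-predecessor case that you flag will not close.
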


\begin{proof} 
    Since we assume that $v$ is the parent $f(u)$ of $u$ in the BFS with 
    respect to $\partial^* F(x)$, we have $v_1=u_1=:z$. In the proof we will 
    use Lemma \ref{BFSimprint2} and the fact  that if the two imprints of a 
    vertex of $F(x)$ are different, then they cannot be adjacent (this easily 
    follows from imprint's definition).

    \smallskip\noindent
    {\bf Case 1.} There exists $i\in \{ 1,2\}$ such that $w_i=z$.
    
    \smallskip\noindent
    By Lemma \ref{BFSimprint2},
    the second imprint $u_2$ of $u$ coincides with one of the imprints $v_2,w_j$
    and coincides or is adjacent with the second imprint. This implies that if 
    one of the vertices $v_2$ or $w_j$ coincides with $z$,
    then the second vertex also coincides with $z$. Indeed, suppose by way of 
    contradiction that $v_2=z$ but $w_j\ne z$ (the case  $w_j=z$ and $v_2\ne z$ 
    is similar). If $u_2=z$, then $w_j$ must be adjacent to $u_2=z=w_i$, which 
    is impossible.
    Similarly, if $u_2=w_j$ then by Lemma \ref{BFSimprint2}, $u_2$ must 
    coincide or be adjacent to $v_2=z=w_i$, a contradiction. This concludes the 
    case when one of the vertices $v_2$ or $w_j$ coincides with $z$. Now 
    suppose that
    both $v_2$ or $w_j$  are different from $z$. In this case they are both not
    adjacent to $z=v_1=w_i$. If $v_2=w_j$, then clearly $u_2$ coincides with 
    this vertex. Otherwise, since $v,w\in I(u,x)$ and $u_2$ is one of 
    $v_2,w_j$, by Lemma \ref{BFSimprint2}
    and Corollary \ref{BFSimprint3} we conclude that another vertex from this 
    pair must be the parent of $u_2$ in $\partial^* F(x)$. Since in line 10, 
    the algorithm selects as $\imp_2(v)$ the vertex of the pair $v_2,w_j$
    furthest from $x$ (in $\partial^* F(x)$ or in $G$), this shows that 
    $\imp_2(u)$ is indeed the second imprint $u_2$ of $u$.
    
    \smallskip\noindent
    {\bf Case 2.} Both vertices $w_1,w_2$ are distinct from $z$.
    
    \smallskip\noindent
    Since $z=u_1$ and $w\in I(u,x)$, by Corollary \ref{BFSimprint3} one of the 
    imprints $w_1,w_2$, say $w_i$ must be adjacent to $z$ and 
    $\dist_G(u,z)=\dist_G(w,w_i)$. Since $u_1=z$, $u_2$ is different from $w_i$.
    Note also that since $w_j$ is not adjacent to $w_i$, we have $w_j\ne z$. By 
    Lemma \ref{BFSimprint2}, $u_2$ is adjacent or coincides with $w_j$. Since 
    this is impossible if $u_2=z$, we conclude that $u_2\ne z$. Consequently,
    $u_2$ is one of the vertices $v_2$ or $w_j$.  If $v_2=w_j$, then clearly, 
    $u_2$ is that vertex. Otherwise, since $v,w\in I(u,v)$, by Corollary 
    \ref{BFSimprint3}, the  vertex of the pair $v_2,w_j$ which is different 
    from $u_2$ is the parent of $u_2$ in $\partial^* F(x)$. Since in line 10, 
    the algorithm selects as $\imp_2(v)$ the vertex of the pair $v_2,w_j$
    furthest from $x$ (in $\partial^* F(x)$ or in $G$), this shows that 
    $\imp_2(u)$ is indeed the second imprint $u_2$ of $u$. Consequently, in all 
    cases we have $u_2=\imp_2(u)$, concluding the proof.
\end{proof}

\begin{algorithm}[H]
    \caption{\label{alg_comput_imprints}ComputeImprints($F(x)$, $\partial^*
    F(x)$)}
    \Input{A panel $F(x)$ and its total boundary $\partial^* F(x)$}
    \Output{For each vertex $u\in F(x)$, its two imprints $\imp_1(u)$ and
    $\imp_2(u)$}

    \BlankLine

    % Step 1
    Run a first BFS on $F(x)$ with respect to $x$ in order to compute, for each 
    vertex $u$, its distance $d(u)$ to $x$ and its predecessors $\pred_1(u)$ and
    $\pred_2(u)$ in $I(u,x)$ \;

    \BlankLine

    % Step 2
    Run a second BFS on $F(x)$ with respect to $\partial^* F(x)$ in order to 
    compute, for each vertex $u$, a closest to $u$ vertex $\imp_1(u)$ in 
    $\partial^* F(x)$, and the parent $f(u)$ of $u$ \;

    \BlankLine

    % Step 3
    Assume the vertices of $F(x)$ ordered in the BFS-order computed by the
    first BFS traversal \;
    \ForEach{$u \in F(x)$}{
        $u_1 \leftarrow \imp_1(u)$ \;

        $(v, w) \leftarrow (\pred_1(u), \pred_2(u))$
        \tcp*{$\pred_1(u) = f(u)$}

        $(v_1, v_2) \leftarrow (\imp_1(v), \imp_2(v))$
        \tcp*{$v = f(u)$ implies that $u_1 = v_1$}

        $(w_1, w_2) \leftarrow (\imp_1(w), \imp_2(w))$ \;

        \If{$\exists i \in \{1,2\}$ s.t. $w_i = u_1 = v_1$}{
            Set as $\imp_2(u)$  the vertex of $\{v_2, w_j\}$ furthest from
            $x$ \;
        }
        \Else{
            Let $i \in \{1,2\}$ be such that $w_i$ is the parent of $u_1 =
            v_1$ in $\partial^* F(x)$ \;
            \lIf{$w_j = v_2$}{
                $\imp_2(u) \leftarrow v_2$
            }
            \hspace{4.9em}%
            \lElse{
                \hspace{1ex}%
                Set  as $\imp_2(u)$ the vertex of $\{v_2, w_j\}$ furthest
                from $x$
            }
        }
    }
\end{algorithm}

\subsubsection{The size of labels} 

Consider now the length of labels given by \distenc{}.
To analyze the distance decoder, we consider a \emph{RAM model} in which 
standard arithmetical operations on words of size $O(\log n)$ (additions, 
comparisons, etc.) are supposed to take constant time.

\begin{lemma}
    \label{lem_distenc_length}
    \distenc{} gives to every vertex of an $n$-vertex cube-free median graph $G = (V,E)$ a label of length $O(\log^3n)$ bits.
\end{lemma}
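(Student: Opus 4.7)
The plan is to bound separately (i) the number of bits appended to the label of a vertex at a single recursive step of \distenc{} and (ii) the depth of the recursion, and then multiply.

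For (i), suppose we are in a call of \distenc{} on a median subgraph with $n' \le n$ vertices and inspect the triple $(L_\st, L_\lleft, L_\rright)$ appended to $\LD(u)$. The centroid identifier $\id(c)$ and the distance $\dist_G(u,c)$ each fit in $O(\log n)$ bits. The remaining component of $L_\st$ is the star label $\Lb{u^\downarrow}{\St(c)}$ produced by \encStar{}; since $G$ is cube-free we have $\dim \St(c) \le 2$, so this label encodes a subset of at most two neighbors of $c$, each represented by an integer of $O(\log n)$ bits, giving $|L_\st| = O(\log n)$. Each of $L_\lleft$ and $L_\rright$ is a pair consisting of a distance of $O(\log n)$ bits and a Peleg tree label for a total boundary. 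By Lemma \ref{lem_Uborders_cube-free}, each total boundary $\partial^* F(\cdot)$ is a tree on at most $n'$ vertices, so the tree distance scheme recalled in Section \ref{sect_labeling_trees_stars} produces labels of $O(\log^2 n') = O(\log^2 n)$ bits. Summing the three components shows that every recursive step appends $O(\log^2 n)$ bits to $\LD(u)$.

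For (ii), I would apply Lemma \ref{lem_small_halfspaces}: since $c$ is a centroid of the current median subgraph, every fiber $F(x)$ of $\St(c)$ contains at most half the vertices of the current subgraph. By Lemma \ref{fiber-gated1} each fiber is gated, hence an induced median subgraph of $G$, and in particular remains cube-free, so the recursive hypothesis applies unchanged. Halving the vertex count at each level bounds the recursion depth by $\lceil \log_2 n \rceil$.

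Combining (i) and (ii) yields the claimed total label length of $O(\log n) \cdot O(\log^2 n) = O(\log^3 n)$ bits. The work is mostly bookkeeping rather than a hard mathematical obstacle: one needs to keep track of the fact that every identifier, distance, star label, and tree label produced during the recursion lives inside $G$ and so is measured against $n$ rather than against the smaller $n'$. The only non-trivial inputs are (a) the cube-freeness bound $\dim \St(c) \le 2$ that keeps star labels at $O(\log n)$ bits, and (b) the centroid halving property from Lemma \ref{lem_small_halfspaces} that keeps the recursion shallow; both are already established earlier in the paper.
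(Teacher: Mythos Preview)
Your argument is correct and matches the paper's proof essentially line for line: both bound the per-step contribution by $O(\log^2 n)$ (with $L_\st = O(\log n)$ from the bounded-dimension star labeling and $L_\lleft, L_\rright = O(\log^2 n)$ from the tree distance scheme), invoke Lemma~\ref{lem_small_halfspaces} to cap the recursion depth at $\lceil \log_2 n\rceil$, and multiply. If anything, you are slightly more explicit than the paper in justifying why the star label is $O(\log n)$ (via $\dim \St(c)\le 2$) and why the recursion hypothesis persists (fibers are gated, hence cube-free median).
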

\begin{proof}
    Since at each division step we select a centroid, by Lemma
    \ref{lem_small_halfspaces}  every vertex $v \in V$ will appear in at most
    $\ceil{\log_2 |V|}$ different
    \fibers. For each of these \fibers, $\LD(v)$ will receive $O(\log^2n)$ new bits.
    Indeed, the information stored correspond to Lines
    \ref{alg-line_dist_enc_median1}, \ref{alg-line_dist_enc_left1} and
    \ref{alg-line_dist_enc_right1} (or \ref{alg-line_dist_enc_median2},
    \ref{alg-line_dist_enc_left2} and \ref{alg-line_dist_enc_right2}) of
    Algorithm \ref{alg_distenc}.
    $L_\st$ clearly has size $O(\log n)$ because so does
    $\Lb{v^\downarrow}{\St(c)}$ as seen in Section \ref{sect_labeling_trees_stars} for stars, and $L_\lleft$ and $L_\rright$ both have size
    $O(\log^2n)$ because the tree labeling they contain has size $O(\log^2n)$
    as seen in Section \ref{sect_labeling_trees_stars} for trees.
\end{proof}

\subsubsection{Complexity and correctness}

Suppose that the partition of $G$ into fibers contains $k$ fibers 
$F_1,\ldots,F_k$ with $n_1,\ldots,n_k$ vertices, respectively, where 
$\sum_{i=1}^k n_i=n$. Each of the fibers $F_i$ is a gated subgraph of $G$ and 
thus is a cube-free median graph. Therefore, in $F_i$ we can compute a centroid 
$c_i$  by the algorithm of \cite{BeChChVa} and partition $F_i$ into fibers with 
respect to $\St(c_i)$,  and compute their boundaries and the imprints. All this 
can be done in $O(n_i)$ time, leading to a total of $\sum_{i=1}^k O(n_i) = 
O(n)$. Continuing this way, we conclude that at the partition iteration $j$, 
the total time to compute the centroid vertices, to partition the current 
fibers into smaller fibers, to compute their total boundaries and the imprints 
on them will take $O(n)$ time. Since we have $\log_2n$ partition steps, we 
conclude that the total complexity of the partition algorithm is $O(n\log n)$.

The correctness of the algorithm \distenc{} results from Lemmas \ref{compute-fibers},\ref{comput-gates},\ref{BFSimprint5} and the following
properties of cube-free median graphs: stars and fibers are gated (Lemmas
\ref{prop_St(m)_convex} and \ref{fiber-gated1}); total boundaries of fibers are
quasigated (Corollary \ref{proj-total-boundary}) isometric trees with gated
branches (Lemma \ref{lem_Uborders_cube-free}); and from the formulae for
computing the distance between \separated, \neighboring, and \aNeighboring
vertices (Lemmas \ref{lem_shortest_path_separated},
\ref{lem_shortest_path_neighboring}, and \ref{lem_shortest_path_aNeighboring}).

Given two labels $\LD(u)$ and $\LD(v)$, \distdec{} can find the last common
centroid of $u$ and $v$ by reading their label once. This can be done in time 
$O(\log^2 n)$ assuming the \emph{word-RAM model}. This complexity can be 
improved to constant time by adding the following appropriate $O(\log^2 n)$ 
bits information concatenated to $\LD(u)$ and $\LD(v)$. For that,
consider the tree $T$ (of recursive calls) in which vertices at depth $i$ are the centroids chosen at step $i$ and in which the children
of a vertex $x$ are the centroids  chosen at step $i + 1$ in the fibers 
generated by $x$ at step $i$. We can observe that every vertex of $G$ appears 
in this tree, that the last common centroid $c$ of any two vertices $u$ and $v$ 
of $G$ is their nearest common ancestor in the tree $T$, and that its depth $j$ 
in this tree corresponds to its position in $\LD(u)$ and $\LD(v)$, i.e.,
$\LD_{j}^{\St[\med]}(u) = \LD_{j}^{\St[\med]}(v) = \id(c)$.
As noticed in \cite{Peleg05}, any distance labeling for trees $T$ can be
modified to support \emph{nearest common ancestor's depth (NCAD)} queries by
adding the depth $\depth(u)$ of $u$ in $T$ to the label $L(u)$ given to
each vertex $u \in V(T)$ by the distance labeling. Given two vertices $u$
and $v$ of $T$, the NCAD decoder then returns $\frac{1}{2}(\depth(u) +
\depth(v) - \dist_T(u,v))$.
So, during the execution of \distenc, we can also construct the tree $T$ of
recursive calls and then give an NCAD label $L'(u)$ in $T$ to every vertex of
$G$. Now, the first step of \distdec{} will consist in decoding $L'(u)$ and
$L'(v)$ in order to find the last common median of $u$ and $v$.
Once this step is done, \distdec{} has to call $\distDecStar$
on labels of size $O(\log n)$ which requires a constant number of steps.
After that, either the information necessary to compute $\dist_G(u,v)$ is
directly encoded in $\LD(u)$ or $\LD(v)$, or \distdec{} needs to
to decode distance labels for trees, which can be done in constant time \cite{FrGaNiWe_trees}.
Consequently \distdec{} has a constant time complexity.
The fact that \distdec($\LD(u)$, $\LD(v)$) returns $\dist_G(u,v)$ follows from
Lemmas \ref{lem_shortest_path_separated}, \ref{lem_shortest_path_neighboring}
and \ref{lem_shortest_path_aNeighboring}. This completes the proof of Theorem 
\ref{thm_dist_labeling}.

\section{Routing labeling schemes for cube-free median graphs}
\label{sect_rout_labeling}

In this section we briefly describe the routing scheme; since there exists an
important resemblance with the distance labeling scheme, a formal
description of the routing scheme is given in the appendix.
The idea of encoding is the same as the one for the distance labeling schemes:
the graph $G$ is partitioned recursively into
fibers with respect to centroids. At every step, the labels of the
vertices are given a vector of three parts, named ``\st'', ``\lleft'', and
``\rright'' as before. However, the information stored in these parts is not
completely the same as for distances. This is due to the fact that we
need to keep the information specific for routing and also because, at the
difference of distance queries, the routing queries are not commutative.
For instance, for computing the distance between \neighboring vertices $u$ and
$v$, we assumed that $u$ belongs to a \panel and $v$ to a \cone. The case when
$u$ belongs to a \cone and $v$ belongs to a \panel is reduced to the first case
by calling the same corresponding function but commuting the arguments.
This is no longer possible in the routing queries: routing from a \panel to a
\cone is different from the routing from a \cone to a \panel.

As for distances, the routing decision is taken the first time
$u$ and $v$ belong to different fibers of the current partition. Let $c$ be a
centroid vertex of the current graph under partition and let $F(x)$ and $F(y)$ be
the two fibers containing $u$ and $v$, respectively. If $u$ and $v$ are
\separated, then $d_G(u,v)=d_G(u,c)+d_G(c,v)$, thus routing from $u$ to $v$ can
be done by routing from $u$ to $c$ (unless $u=c$). Therefore, the encoding
scheme must keep in the label of $u$ the identifier of some neighbor of $u$ in
$I(u,c)$. If $u=c$, then it suffices to route from $u=c$ to the gate $y$ of $v$
in $\St(c)$. This is done by using the routing scheme for stars.

If $u$ and $v$ are \aNeighboring, then $F(x)$ and $F(y)$ are \cones with a
common neighboring \panel $F(w)$. Similarly to distance scheme, the algorithm
finds $F(w)$. Since the gates $u^+$ of $u$ and $v^+$ of $v$ in $F(w)$
belong to a common shortest $(u,v)$-path, it suffices to route the message from
$u$ to $u^+$. Therefore the encoding must keep in the label of $u$ the
identifier of a neighbor of $u$ in $I(u,u^+)$.
The same information is required when $u$ and $v$ are \neighboring and $F(x)$
is a \cone and $F(y)$ is a \panel. Indeed, in this case there is a shortest
$(u,v)$-path passing via the gate $u^+$ of $u$ in $F(y)$ and one of the
imprints of $v$ in $\partial^* F(y)$.
Therefore, to route from $u$ to $v$ it suffices to route from $u$ to $u^+$.

Finally, let $u$ and $v$ be \neighboring, however now $F(x)$ is a
\panel and $F(y)$ is a \cone. Recall that in this case there exists a shortest
$(u,v)$-path passing via one of the imprints $u_1$ or $u_2$ of $u$ on
$\partial^* F(x)$ and the gate $v^+$ of $v$ in $F(x)$. Therefore, if $u$ is
different from $v^+$ then it suffices to route the message from $u$ to a
neighbor of $u$ in $I(u,u_1)$ or $I(u,u_2)$ (depending of the position of $v$).
Therefore, in the label of $u$ we have to keep the identifiers of those two
neighbors of $u$. To decide to which of them we have to route the message from
$u$, we need to compare $d_G(u,u_1)+d_G(u_1,v^+)$ and $d_G(u,u_1)+d_G(u_1,v^+)$.
Therefore, at the difference of the routing scheme in trees, our routing scheme
for cube-free median graphs must incorporate the distance scheme. On the other
hand, if $u$ coincides with $v^+$, then necessarily we have to route the
message to a neighbor of $u$ in $I(u,v)$, which necessarily belong to the \cone
$F(y)$ and not to $F(x)$ (because $v^+$ is the gate of $v$ in $F(x)$). There
exists a unique vertex $\twin(v^+)$ of $F(y)$ adjacent to $v^+$. We cannot keep
the identifier of $\twin(v^+)$ in the label of $u=v^+$ because a vertex in a
\panel may have arbitrarily many neighbors in the neighboring \cones. Instead,
we can keep the identifier  of $\twin(v^+)$ in the label of $v$ (recall that a
\cone has only two neighboring \panels). Consequently, we obtain a routing
scheme for cube-free median graphs with labels of vertices of size $O(\log^3
n)$.

\section{Conclusion}

In this paper we presented distance and routing labeling schemes for cube-free
median graphs $G$ with labels of size $O(\log^3 n)$.
For that, we partitioned  $G$ into fibers (of size $\leq
n/2$) of the star $\St(c)$ of a centroid $c$ of $G$.
Each fiber is further recursively partitioned using the same algorithm.
We classified the fibers into panels and cones and the pairs of vertices $u, v$
of $G$ into \roommates,  \separated, \neighboring, and \aNeighboring pairs. If
$u$ and $v$ are \roommates, then $d_G(u,v)$ is computed  at a later step of the
recursion. Otherwise, we showed how to retrieve $\dist_G(u,v)$ by keeping in the labels of
$u$  and $v$ some distances from those two vertices to their gates/imprints in a constant number of fibers.
Our main technical ingredient is the fact that the total boundaries of fibers of
cube-free median graphs are isometric quasigated trees.

This last property of total boundaries  is a major obstacle in generalizing our approach to all
median graphs, or even to median graphs of dimension $3$. The main problem is
that in this case the total boundaries are no longer median graphs.

\begin{example} 
    In Fig. \ref{non-median_boundary} we present a median graph $G$ of 
    dimension 3, in which the total boundary of a fiber of a vertex $c$  is not 
    a median graph. The graph $G$ is just the cubic grid $3\times 3\times 3$ 
    and the vertex $c$ is one of the corners  of this grid. The star $\St(c)$ 
    of $c$ is a single 3-cube $Q$. Let $x$ be the vertex of $Q$ opposite to 
    $c$, and $y,z,w$ are the three vertices of $Q$ at distance 2 from $c$. Then 
    the fiber $F(x)$ is neighboring to $F(y), F(z),$ and $F(w)$. The total 
    boundary $\partial^* F(x)$ of $F(x)$ consists of three squares $\partial_y 
    F(x), \partial_z F(x),$ and $\partial_w F(x)$, pairwise intersecting in 
    edges incident to $x$ and all three intersecting in $x$.
    Consequently, $\partial^* F(x)$ is not a median graph.
    
    In this example, $c$ is not a centroid of $G$. To repair this, we can 
    consider the $5\times 5\times 5$ cubic grid $G'$ in which $G$ is embedded 
    in such a way that $c$ becomes the unique centroid of $G'$. The star of $c$ 
    in $G'$ is a $3\times 3\times 3$ grid. The vertex $x$ belongs to this 
    extended star, but the boundary of $F(x)$ will still consists of the same 
    three squares, thus $\partial^* F(x)$ is not median.
\end{example}

\begin{figure}[ht]
    \centering
    \includegraphics[width=0.4\textwidth]{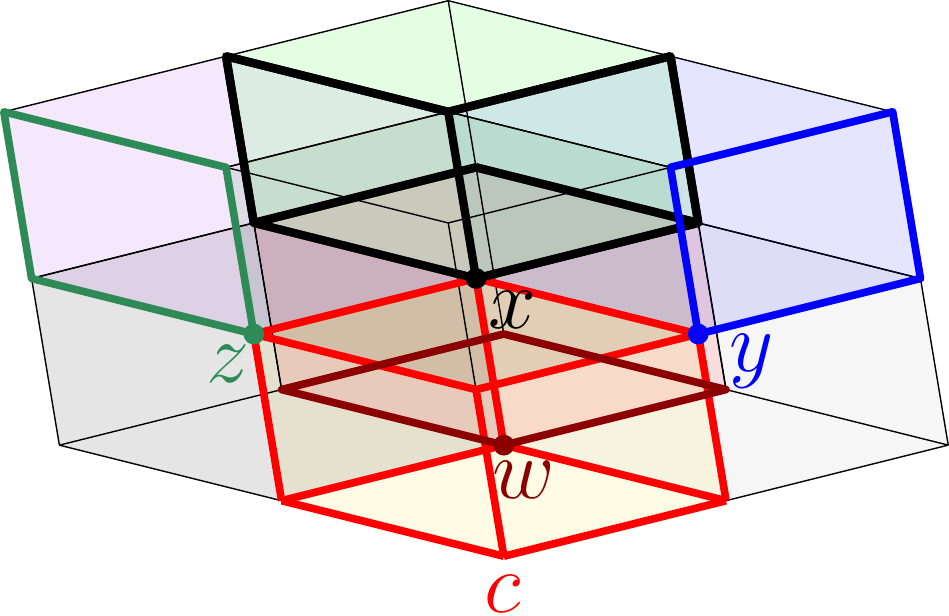}
    \caption{\label{non-median_boundary}
       A median graph of dimension 3 in which the total boundary of a fiber is not median.
    }
\end{figure}

Therefore, we can no longer recursively  apply to total boundaries the
distance and routing labeling
schemes for median graphs of smaller dimension (as we did in case of cube-free median graphs by applying such schemes for trees).
Nevertheless, a more brute-force approach works for arbitrary median graphs $G$
of constant maximum degree $\Delta$. In this case, all cubes of $G$ have
constant size. Thus, the star $\St(c)$ cannot have more than $O(2^\Delta)$
vertices, i.e., $\St(c)$ has a constant number of fibers.  Since every fiber is
gated, at every step of the encoding algorithm, every vertex $v$ can store in
its label the distance from $v$ to its gates in all fibers  of $\St(c)$.
Consequently,  this leads to simple distance and routing labeling  schemes with 
labels of (polylogarithmic) length for all median graph with maximum degree at 
most $\Delta$:

\begin{proposition} 
    Any median graph $G$ with maximum degree $\Delta$ admits distance and 
    routing labeling  schemes  with labels of length $O(2^{\Delta}\log^3 n)$ 
    bits.
\end{proposition}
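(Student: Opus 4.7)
The strategy is to adapt the recursive framework of Sections \ref{sect_dist_labeling}--\ref{sect_rout_labeling}, using a brute-force $2^\Delta$-factor look-up in place of the cube-free-specific tree analysis. At every step I would pick a centroid $c$ of the current gated median subgraph, form its star $\St(c)$ (gated by Lemma \ref{prop_St(m)_convex}), partition the vertex set into the fibers $\{F(x) : x \in \St(c)\}$ (gated by Lemma \ref{fiber-gated1} and of size $\le n/2$ by Lemma \ref{lem_small_halfspaces}), and recurse into each fiber. The recursion tree has depth $O(\log n)$. The new ingredient is that $\St(c)$ is itself a median subgraph of dimension $\le \degree(c) \le \Delta$, hence isometrically embeds into $Q_\Delta$ and has at most $2^\Delta$ vertices; thus the partition produces at most $2^\Delta$ parts.

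At each level, the label of every vertex $v$ is augmented with the identifier $\id(c)$, the distance $d_G(v,c)$, the star-label $\Lb{\gate(v,\St(c))}{\St(c)}$ produced by \encStar, and, for every one of the $\le 2^\Delta$ fibers $F(x')$ of $\St(c)$, both the distance $d_G(v,\gate(v,F(x')))$ and the identifier that $\gate(v,F(x'))$ uses in the recursive labeling inside $F(x')$. Each such identifier has size $O(\log^2 n)$ (playing the role of the tree distance labels in the cube-free scheme), so each level contributes $O(2^\Delta \log^2 n)$ bits and the final label has length $O(2^\Delta \log^3 n)$.

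\textbf{Decoding and the obstacle.} Given $L(u)$ and $L(v)$, find the largest level $i$ at which $u,v$ are \roommates; at level $i{+}1$ they lie in distinct fibers $F(x)\ne F(y)$ of some $\St(c)$. If $I(x,c)\cap I(y,c)=\{c\}$ --- a condition decidable from the star-labels --- then the equivalence (ii)$\Leftrightarrow$(iii) of Lemma \ref{lem_shortest_path_separated}, valid in every median graph by the Remark following the lemma, gives $d_G(u,v)=d_G(u,c)+d_G(c,v)$. Otherwise, gatedness of $F(y)$ yields $d_G(u,v)=d_G(u,u^y)+d_G(u^y,v)$ with $u^y:=\gate(u,F(y))$: the first summand is stored in $u$'s level-$i$ label, and the second is answered by a recursive invocation of the decoder on $(u^y,v)$ inside $F(y)$, using the cached identifier of $u^y$ together with $v$'s own level-$(i{+}1)$ label. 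Routing follows the same template with ports in place of distances and a $\toTwin$-pointer stored in each \cone-vertex as in Section \ref{sect_rout_labeling}. The principal obstacle is identifier consistency --- the gate identifier stored by $v$ for $\gate(v,F(x'))$ must equal the identifier used by that vertex in the recursive scheme of $F(x')$ --- which I would enforce by running the encoding top-down, fixing identifiers globally before any vertex's label is finalized.
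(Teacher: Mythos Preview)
Your overall framework---recursive centroid decomposition, the bound $|\St(c)|\le 2^\Delta$, storage of the $\le 2^\Delta$ gate distances $d_G(v,\gate(v,F(x')))$ at every level---is exactly what the paper's one-paragraph sketch proposes. The difficulty is that the paper does not spell out the decoding, and your attempt to do so has a genuine gap.

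You write that in the non-separated case $d_G(u,v)=d_G(u,u^y)+d_G(u^y,v)$, where the second summand ``is answered by a recursive invocation of the decoder on $(u^y,v)$ inside $F(y)$, using the cached identifier of $u^y$.'' But to run the decoder on the pair $(u^y,v)$ you need a \emph{full} label for $u^y$ in $F(y)$'s recursive scheme: at the level inside $F(y)$ where $u^y$ and $v$ first separate you must again know which sub-fiber $u^y$ lies in and its distances to all $\le 2^\Delta$ sibling gates, and if that pair is again non-separated you need the same information one level deeper, and so on. You claim the cached identifier has size $O(\log^2 n)$ ``playing the role of the tree distance labels in the cube-free scheme,'' yet you never say what it is. The natural candidate---the full recursive label of $u^y$ in the scheme for $F(y)$---has size $L(n/2)$, and storing one such label per sibling at every level yields the recurrence
\[
L(n)\;\le\;O(2^\Delta\log n)\;+\;(2^\Delta+1)\,L(n/2),
\]
which solves to $L(n)=n^{\Theta(\Delta)}$, not $O(2^\Delta\log^3 n)$. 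Storing only $u^y$'s fiber-index sequence (an $O(\log^2 n)$-bit object) does not help either: after one gating step the decoder must continue with $\gate(u^y,F')$ for a sub-fiber $F'$ lying on $v$'s path but \emph{off} $u^y$'s path, and you have stored nothing about that vertex. Your ``identifier consistency'' remark addresses naming, not this informational deficit.

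In short, the encoding you describe matches the paper, but the decoding step needs an additional idea to avoid the recursive blow-up---either a direct (non-recursive) distance formula from the stored gate distances, or an induction on dimension via the gated $(d{-}1)$-dimensional relative boundaries $\partial_y F(x)$ of Lemma~\ref{lem_gated_borders}. As written, the proposal does not establish the $O(2^\Delta\log^3 n)$ bound.
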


Nevertheless, the maximum degree seems to be not the right complexity parameter of general median graphs. Similarly to high-dimensional computational geometry (where the dimension of the space
is often constant), the largest dimension $d$ of a cube of a median graph $G$ might be considered as such a parameter. Consequently, we would like to formulate the following open question
(which seems interesting and nontrivial already in dimension 3 and if we allow a constant stretch factor):

\begin{question}
    \label{question1}
    Does there exist a polylogarithmic distance labeling scheme (exact or approximate) for general
    median graphs or for median graphs of constant dimension?
\end{question}

In Section \ref{sect_median_graphs} we presented several problems from 
different research areas, which have a negative answer for all median graphs 
(viewed as CAT(0) cube complexes or domains of event structures)
but can be positively solved for cube-free median graphs (alias 2-dimensional CAT(0) cube complexes). Our Question \ref{question1} can be cast in this type of problems (even if it is formulated for finite
median graphs and is an algorithmic problem). On the other hand, the paper \cite{CC-JCSS} establishes that a conjecture from concurrency theory is  false already for cube-free median graphs but (using a
breakthrough result by Agol \cite{Agol}) is true for  hyperbolic median graphs. Gromov \emph{hyperbolicity} is a parameter of a median graph, stronger than  (cube-)dimension and in a sense is similar to
the treewidth: the hyperbolicity of a median graph $G$ is the size $h$ of the largest $h\times h$ square grid isometrically embedded in $G$. Constant hyperbolicity implies constant dimension but not the vice-versa
(already in case of cube--free median graphs).
Gavoille and Ly \cite{GaLy_hyperbolic} established that general graphs of
bounded hyperbolicity do not admit poly-logarithmic distance labeling schemes
unless we allow a multiplicative error of order $\Omega(\log \log n)$, at least.
We would like to finish this paper with a seemingly easier (but  still open for us)
version of Question \ref{question1}:

\begin{question}
    \label{question2}
    Does there exist a polylogarithmic distance labeling scheme (exact or 
    approximate) for median graphs of constant hyperbolicity?
\end{question}

\section{Glossary}
\label{sect_glossary}

{
    \renewcommand{\arraystretch}{1.2}
    \begin{tabular}{p{0.33\linewidth}p{0.67\linewidth}}
        \hline
        \textbf{Notions and notations}
            & \textbf{Definitions}
            \\
            \hline
        Boundary $\partial_{y} F(x)$
            & $\{ u \in F(x) : \exists v \in F(y), uv \in E(G) \}$.
            \\
        Centroid $c$
            & Vertex minimizing $u \mapsto \sum_{v \in V(G)} \dist_G(u,v)$.
            \\
        Cone $F(x)$ w.r.t. to $\St(z)$
            & Fiber $F(x)$ w.r.t. $\St(z)$ with $\dist_G(x,z) = 2$.
            \\
        Convex subgraph $H \subseteq G$
            & $\forall u, v \in V(H)$, $I(u,v) \subseteq V(H)$.
            \\
        Distance $\dist_G(u,v)$
            & Number of edges on a shortest $(u,v)$-path of $G$.
            \\
        Fiber $F(x)$ w.r.t. $H \subseteq G$
            &$\{u \in V(G) : x \text{ is the gate of $u$ in } H \}$.
            \\
        Gate of $u$ in $H \subseteq G$
            & $u' \in V(H)$ s.t. $\forall v \in V(H)$, $\dist_G(u,v) =
            \dist_G(u,u') + \dist_G(u',v)$.
            \\
        Gated subgraph $H \subseteq G$
            & Every vertex $u \in V(G)$ admits a gate in $H$.
            \\
        Halfspace $W(u,v)$
            &$\{w \in V(G) : \dist_G(u,w) < \dist_G(v,w) \}$.
            \\
        Imprints set $\eproj{u}{H}$
            & $\{ w \in V(H) : I(u,w) \cap V(H) = \{w\} \}$.
            \\
        Interval $I(u,v)$
            & $\{ w \in V(G) : \dist_G(u,v) = \dist_G(u,w) + \dist_G(w,v) \}$.
            \\
        Isometric subgraph $H \subseteq G$
            & $\forall u, v \in V(H), \dist_H(u,v) = \dist_G(u,v)$.
            \\
        Locally convex subgraph $H \subseteq G$
            & $\forall u, v \in V(H)$, with $\dist_G(u,v) \le 2$, $I(u,v)
            \subseteq V(H)$.
            \\
        Panel $F(x)$ w.r.t. to $\St(z)$
            & Fiber $F(x)$ w.r.t. $\St(z)$ with $\dist_G(x,z) = 1$.
            \\
        Quasigated $H \subseteq G$
            & $\forall u \in V(G), |\eproj{u}{H}| \le 2$.
            \\
        Star $\St(z)$
            & Union of the all hypercubes of $G$ containing $z$.
            \\
        Total boundary $\partial^* F(x)$
            & $\bigcup_{y \sim x} \partial_{y} F(x)$.
            \\
        \hline
    \end{tabular}
}

\subsection*{Acknowledgements} 

We would like to acknowledge the referees of this paper for careful reading
of the manuscript and numerous suggestions that helped us to improve the readability of the paper. The work on this paper  was supported by ANR project DISTANCIA (ANR-17-CE40-0015).

\newpage

\bibliographystyle{plainurl}
\bibliography{labeling_median_graphs}

\newpage
\section*{Appendices}
\addcontentsline{toc}{section}{Appendices}
\setcounter{tocdepth}{0}
\appendix

\section{Fibers in median graphs}

In this section we give the proofs of the well-known properties of median graphs stated in Section~\ref{sect_fibers}.

\begin{proof}[Proof of Lemma~\ref{thm_quadrangle}]
    Let $x$ be the median of the triplet $u,v,w$. Then $x$ must be
    adjacent to $v$ and $w$. Since $x\in I(u,v)\cap I(u,w)$, necessarily
    $d_G(u,x)=k-1$. Since any vertex $x'$ adjacent to $v,w$ and having distance 
    $k-1$ to $u$ is a median of $u,v,w$, we conclude that $x'=x$, concluding 
    the proof.
\end{proof}

\begin{proof}[Proof of Lemma~\ref{thm_convex=gated}] 
    Obviously, any gated set is convex and any convex set is connected and 
    locally-convex.  Assume that $A$ is convex but not gated. Then there exists 
    a vertex $u \in V \setminus A$ which does not have a gate in $A$. Let $x$ 
    be a closest to $u$ vertex of $A$. Since $x$ is not the  gate of
    $u$, there exists a vertex $y\in A$ such that $x\notin I(u,y)$. Let $m$ be 
    the median of the triplet $u,x,y$.
    Since $x\notin I(u,y)$, $m\ne x$. Since $m\in I(x,y)$ and $H$ is convex,
    $m$ belongs to $A$.
    Since $m\in I(x,u)$ and $m\ne x$, $d_G(u,m)<d_G(u,x)$, contrary to the
    choice of $x$.

    Finally suppose that $A$ is connected and locally-convex and we will show that $A$ is convex.
    Let $u$ and $v$ be any two vertices of $A$. We show
    that $I(u,v)\subseteq A$
    by induction on the distance $d_H(u,v)$ between $u$ and $v$ in $H$. If
    $d_H(u,v)=2$, then the property holds by
    local convexity of $A$. Let $d_H(u,v)=k\ge 3$ and suppose that
    $I(u',v')\subseteq A$ for any two
    vertices $u',v'\in A$ such that $d_H(u',v')\le k-1$. Pick any vertex $x\in
    I(u,v)$.
    Let $u'$ be the neighbor of $u$ on a shortest $(u,v)$-path of $G$ passing
    via $x$. Let also $u''$ be the neighbor of $u$ on a
    shortest $(u,v)$-path of $H$. Since $d_H(u'',v)=k-1$, by induction
    hypothesis, $I(u'',v)\subset A$. Since $G$ is bipartite and $u\sim u''$,
    $|d_G(u,v)-d_G(u'',v)|=1$. If $d_G(u'',v)>d_G(u,v)$, then $x\in
    I(u,v)\subset I(u'',v)\subset A$ and we are done.
    Now, let $d_G(u'',v)=d_G(u,v)-1=d_G(u',v)$ and $d_H(u,v)=d_G(u,v)$. By
    quadrangle condition there exists a vertex $z\sim u',u''$
    at distance $k-2$ from $v$. Since $z\in I(u'',v)\subset A$ and $u'\sim
    u,z$, by local convexity of $A$ we deduce that $u'$ belongs to $A$. Since
    $d_H(u',v)=d_G(u'',v)=k-1$, by induction hypothesis, $I(u',v)\subset A$.
    Since $x\in I(u',v)$, $x$ belongs to $A$ and we are done.
\end{proof}

\begin{proof}[Proof of Lemma~\ref{convex-interval}] 
    In view of Lemma \ref{thm_convex=gated} it suffices to show that  each 
    interval $I(u,v)$ of $G$ is locally-convex. Let $x,y\in I(u,v)$ with 
    $\dist_G(x,y)=2$ and let $z$ be a common neighbor of $x$ and $y$. Suppose 
    without loss of generality that $\dist_G(u,x)\le \dist_G(u,v)$. 
    Since $G$ is bipartite, either $\dist_G(u,x)=\dist_G(u,y)$ or 
    $\dist_G(u,x)=\dist_G(u,y)+2$. In the second case, obviously $z\in 
    I(x,y)\subseteq I(u,v)$.  In the first case, let $z',z''$ be the medians of 
    the triplets $u,x,y$ and $v,x,y$, respectively. 
    Then $z'\in I(u,x)\subseteq I(u,v)$ and $z''\in I(x,v)\subseteq I(u,v)$. 
    Therefore, if  $z$ coincide with one of the vertices $z',z''$, then we are 
    done. 
    Otherwise, the vertices $x,y,z,z',z''$ induce a forbidden $K_{2,3}$. 
\end{proof}

\begin{proof}[Proof of Lemma~\ref{prop_St(m)_convex}]  
    We will only sketch the proof (for a complete proof, see Theorem 6.17 of 
    \cite{ChChHiOs} and its proof for a more general class of graphs). By
    Lemma \ref{thm_convex=gated} it suffices to show that $\St(z)$ is locally 
    convex. Let $x,y\in \St(z)$ be two vertices at distance two and let $v\sim 
    x,y$.
    Then $Q_x=I(x,z)$ and $Q_y=I(y,z)$ are two cubes of $\St(z)$.
    We can suppose without loss of generality that $v\notin Q_x\cup Q_y$. This
    implies that
    $x,y\in I(z,v)$, i.e., we can suppose that $d_G(x,z)=d_G(y,z)=k$ and
    $d_G(z,v)=k+1$. By quadrangle condition,
    there exists $u$ such that $d_G(u,z)=k-1$ and $u\sim x,y$. Necessarily
    $I(u,z)$ is a $(k-1)$-cube $Q_u$ included in the $k$-cubes $Q_x$ and $Q_y$.
    Therefore $z$ has a neighbor
    $x'$  such that $I(x,x')$ is a $(k-1)$-cube disjoint from $Q_u$ and which
    together with $Q_u$ gives $Q_x$.
    Analogously, $z$ has a neighbor $y'$ such that $I(y,y')$ is a $(k-1)$-cube
    disjoint from $Q_u$
    and which together with $Q_u$ gives $Q_y$. By quadrangle condition there
    exists $v'\sim x',y'$ at
    distance $k-1$ to $v$. Then one can show that $I(v,v')$ induces a
    $(k-1)$-cube, which together with the
    $k$-cubes $Q_x$ and $Q_y$ define the $(k+1)$-cube $Q_v=I(v,z)$. This
    establishes that $v$ belongs to $\St(z)$.
\end{proof}

\begin{proof}[Proof of Lemma~\ref{fiber-gated1}] 
    Each fiber $F(x)$ induces a connected subgraph of $G$, thus it suffices to 
    show that $F(x)$ is locally convex. Pick $u,v\in F(x)$ with $d_G(u,v)=2$ 
    and let $z$ be any common neighbor of $u$ and $v$. Suppose by way of 
    contradiction that $z\in F(y)$ for $y\in V(H), y\ne x$.
    Then $x\in I(u,y)\cap I(v,y)$ and $y\in I(z,x)$. This implies in particular
    that $x\sim y$, $u,v\in I(z,x)$,
    and $z\in I(u,y)\cap I(v,y)$. By quadrangle condition, there exists $x'\sim
    u,v$, one step closer to $x$
    than $u$ and $v$. Then $z,x'\in I(u,y)$ and by quadrangle condition there
    exists a vertex $y'\sim x',z$
    one step closer to $y$ than $x'$ and $z$. But then the vertices
    $u,v,z,x',y'$ induce a $K_{2,3}$, which
    is a forbidden subgraph of median graphs.
\end{proof}

\begin{proof}[Proof of Lemma~\ref{lem_gated_borders}] 
    If $x\sim y$, then $F(x)\sim F(y)$. Conversely, let $F(x)\sim F(y)$, i.e., 
    there exists an edge $x'y'$ of $G$ such that $x'\in F(x)$ and $y'\in F(y)$. 
    Since $F(x)$ and $F(y)$ are convex and $G$ is bipartite, necessarily $x'\in 
    I(y',x)$ and $y'\in I(x',y)$.
    Since $x'\in F(x), y'\in F(y)$ and $H$ is gated, we deduce that $x\in
    I(x',y)$ and $y\in I(y',x)$. From this we conclude that 
    $d_G(x',x)=d_G(y',y)$
    and that $d_G(x,y)=d_G(x',y')=1$. This establishes the first assertion.

    To prove the second assertion, let $F(x)\sim F(y)$ and we have to prove
    that $\partial_yF(x)$ is gated. By induction on $k=d_G(x,x')$, we can
    show that $I(x',x)\subseteq \partial_yF(x)$ for any vertex $x'$ of
    $\partial_yF(x)$. For we show that any neighbor $x''$ of
    $x'$ in $I(x',x)$ belongs to $\partial_yF(x)$.
    Let $y'$ be the neighbor of $x'$ in $\partial_xF(y)$. Then $x'',y'\in
    I(x',y)$, $x'',y'\sim x'$, and $d_G(x',y)=k+1$, thus by quadrangle
    condition there
    exists a vertex $y''\sim y',x''$ at distance $k-1$ from $y$. Since $y''\in
    I(y',y)\subset F(y)$, we conclude that $x''\in \partial_yF(x)$. Thus
    $I(x',x)\subseteq \partial_y F(x)$, yielding that the subgraph induced by
    $\partial_y F(x)$ is connected.

    By Lemma \ref{thm_convex=gated} it remains to show that $\partial_y F(x)$
    is locally-convex. Pick $x',x''\in \partial_y F(x)$ at distance two and let
    $u\sim x',x''$.
    Since $F(x)$ is convex, $u\in F(x)$. Let $y'$ and $y''$ be the neighbors of
    $x'$ and $x''$, respectively, in $F(y)$. Let $v$ be the gate of $u$ in
    $F(y)$.
    Since $d_G(u,y')=d_G(u,y'')=2$ (because $G$ is bipartite) and $v\in
    I(u,y')\cap
    I(u,y'')$, we conclude that $v$ is adjacent to $u,y',$ and $y''$.
    Hence $v\in I(y',y'')\subset F(y)$, yielding $u\in \partial_y F(x)$. This
    finishes the proof that $\partial_y F(x)$ is gated.

    If $\dim(\partial_y F(x))=\dim(G)=d$, then $\partial_y F(x)$ contains a 
    $d$-dimensional cube $Q'$. Any vertex $x'$ of $\partial_y F(x)$ is adjacent 
    to a vertex $y'$ of $F(y)$. Clearly, $y'$ must belong to $\partial_x F(y)$.
    Since $\partial_x F(y)$ is gated, $y'$ is the unique neighbor of $x'$ in $\partial_x F(y)$. Let $Q''$ denote the subgraph of $\partial_x F(y)$ induced by the neighbors $y'$
    of vertices  $x'$ of $Q'$. We assert that $Q''$ is a $d$-cube. Indeed, if $x'x''$ is an edge of $Q'$, then since $\partial_x F(y)$ is gated we conclude that the neighbors
    $y'$ and $y''$ of $x'$ and $x''$ must be adjacent, i.e., $Q''$ is a $d$-cube.  Then  $Q'$ and  $Q''$
    induce a $(d+1)$-cube of $G$.
    Thus $\dim(\partial_y F(x))\le d-1$.
\end{proof}

\section{Detailed description of the routing scheme}
\label{sect_rout_encoding}

We present a detailed description of the labeling routing scheme.
Let $G=(V,E)$ be a cube-free median graph and let $u$ be any vertex of $G$. Let $i$
be any step of the algorithm \routenc{} applied to $G$ and let $c$ be a centroid
of the current median subgraph containing $u$ at step $i$. The ``\st'' part $\LR_i^{\st}(u)$ 
of the label of $u$ is composed of the identifier of $c$, a port from $u$ to $c$,
a port from $c$ to $u$, and the identifier of gate $x:=u^\downarrow$ of $u$ to $\St(c)$ (i.e., of the fiber containing $u$).
Note that $c$ cannot store the ports to other vertices in order to answer
routing queries from $c$. This is why the label of $u$
contains the port $\LR_i^{\st[\fromMed]}(u)$ from $c$ to $u$. Here are the components of $\LR_i^{\st}(u)$:
\begin{enumerate}[{\hskip1em(}1{)}]
    \item $\LR_i^{\st[\med]}(u) := \id(c)$ is the unique identifier of $c$;
    \item $\LR_i^{\st[\toMed]}(u)$ consists of a port to take from $u$ in
    order to reach $c$;
    \item $\LR_i^{\st[\fromMed]}(u)$ consists of a port to take from
    $c$ in order to reach $u$;
    \item $\LR_i^{\st[\rootB]}(u)$ contains the identifier of the \fiber
    containing $u$ (i.e., the star labeling of $u^\downarrow$).
\end{enumerate}

The \lleft and \rright parts of the label of $u$ contain similar information but
they depend of whether $u$ belongs to a panel or to a cone.  If $u$ belongs to a \panel $F(x)$ (recall that $x=u^\downarrow$),
then $\LR_i^{\lleft}(u)$ is composed of the following four components:
\begin{enumerate}[{\hskip1em(}1{)}]
    \item $\LR_i^{\lleft[\imprintD]}(u)$ is the tree distance labeling of the first imprint $u_1$ of $u$ on  $\partial^* F(x)$;
    \item $\LR_i^{\lleft[\imprintR]}(u)$ is the tree routing labeling of $u_1$ in the tree
    $\partial^* F(x)$;
    \item $\LR_i^{\lleft[\toImprint]}(u)$ is $\port(u,u_1)$;
    \item $\LR_i^{\lleft[\distance]}(u)$ is the distance $\dist_G(u,u_1)$.
\end{enumerate}
The \rright part $\LR_i^{\rright}(u)$ of the label of $u$ is defined in a
similar way with respect to the second imprint $u_2$ of $u$ on $\partial^*
F(x)$.

If $u$ belongs to a \cone $F(x)$, then $F(x)$ has two neighboring panels
$F(w_1)$ and $F(w_2)$. 
The components $\LR_i^{\lleft}(u)$ and $\LR_i^{\rright}(u)$ of the \lleft and 
\rright parts of the label of $u$, each consists of four components. For 
example, $\LR_i^{\lleft}(u)$ is composed of the following data:
\begin{enumerate}[{\hskip1em(}1{)}]
    \item $\LR_i^{\lleft[\gateD]}(u)$ consists of a tree distance labeling of
    the gate $u^+_1$ of $u$ in the panel $F(w_1)$; 
    \item $\LR_i^{\lleft[\gateR]}(u)$ is a tree routing labeling of $u^+_1$ in the tree $\partial^* F(w_1)$;
    \item $\LR_i^{\lleft[\toGate]}(u)$ contains the $\port(u,u^+_1)$;
    \item $\LR_i^{\lleft[\fromGate]}(u)$ is the port
    $\port(u^+_1,\twin(u^+_1))$
    from $u^+_1$ to $\twin(u^+_1)$.
\end{enumerate}
The \rright part $\LR_i^{\rright}(u)$ of the label of $u$ is defined in a similar way with respect to the gate  $u^+_2$ of $u$ in the panel $F(w_2)$.
We assume that no port is given the number $0$. If \routdec[] returns $0$ or if
a label stores a port equal to $0$, it means that there is no need to move. 
Here is the encoding algorithm:

\medskip
\scalebox{0.91}{\begin{algorithm}[H]
        \caption{\label{alg_routenc}\routenc[]($G$, $\LR(V)$)}
        \Input{
            A cube-free median graph $G=(V,E)$ and a labeling $\LR(V)$
            initially consisting on a unique identifier $\id(v)$ for every $v
            \in V$
        }

        \lIf{$V = \{v\}$}{ \Stop }

        \BlankLine
        Find a centroid $c$ of $G$ \;
        $\Lb{\St(c)}{\St(c)}$ $\leftarrow$ \encStar($\St(c)$) \;
        ${\mathcal F}_c \leftarrow \{ F(x): x \in \St(c) \}$\;

        \ForEach{\panel $F(x) \in {\mathcal F}_c$}{
            $\LDt{\partial^* F(x)}{\partial^* F(x)}$ $\leftarrow$ \distEncTree($\partial^* F(x)$) \;
            $\LRt{\partial^* F(x)}{\partial^* F(x)}$ $\leftarrow$ \routEncTree($\partial^* F(x)$) \;
            \ForEach{$u \in F(x)$}{
                Find the gate $u^\downarrow$ of $u$	in $\St(c)$ \;
                Find the two imprints  $u_1$ and $u_2$ of $u$ on $\partial^* F(x)$ \;
                $(d_1, ~d_2) \leftarrow (\dist_G(u,u_1), ~\dist_G(u,u_2))$ \;

                $L_\st \leftarrow (\id(c), \port(u,c), \port(c,u),
                \Lb{u^\downarrow}{\St(c)})$ \;
                $L_{\lleft} \leftarrow (\LDt{u}{\partial^* F(x)}, \LRt{u}{\partial^* F(x)},
                \port(u,u_1),
                \dist_G(u,u_1))$ \;
                $L_{\rright} \leftarrow (\LDt{u}{\partial^* F(x)}, \LRt{u}{\partial^* F(x)},
                \port(u,u_2),
                \dist_G(u,u_2))$ \;

                $\LR(u) \leftarrow \LR(u) \conc
                (L_\st,L_{\lleft},L_{\rright})$ \;
            }
            \routenc[]($F(x)$, $\LR(V)$) \;
        }
        \ForEach{\cone $F(x) \in {\mathcal F}_c$}{
            Let $F(w_1)$ be the \lleft \panel neighboring  $F(x)$
            \;
            Let $F(w_2)$ be the \rright \panel neighboring $F(x)$ \;

            $\LDt{\partial^* F(w_1)}{\partial^* F(w_1)}, \LDt{\partial^* F(w_2)}{\partial^* F(w_2)}$ $\leftarrow$
            \distEncTree($\partial^* F(w_1)$), \distEncTree($\partial^*
            F(w_2)$)
            \;
            $\LRt{\partial^* F(w_1)}{\partial^* F(w_1)}, \LRt{\partial^* F(w_2)}{\partial^* F(w_2)}$ $\leftarrow$
            \routEncTree($\partial^* F(w_1)$), \routEncTree($\partial^* F(w_2)$) \;
            \ForEach{$u \in F(x)$}{
                Find the gate $u^\downarrow$ of $u$	in $\St(c)$ \;
                Find the gate $u^+_1$ of $u$ in $F(w_1)$ and let $\twin (u^+_1)$ be the twin
                of $u_1^+$ in $F(x)$\;
                Find the gate $u^+_2$ of $u$ in $F(w_2)$ and let $\twin(u^+_2)$ be the twin
                of $u^+_2$ in $F(x)$ \;

                $L_\st \leftarrow (\id(c), \port(u,c), \port(c,u),
                \Lb{u^\downarrow}{\St(c)})$ \;
                $L_{\lleft} \leftarrow (
                    \LDt{u}{\partial^* F(w_1)},
                    \LRt{u}{\partial^* F(w_1)},
                    \port(u,u^+_1),
                    \port(u^+_1,\twin(u^+_1)))$ \;
                $L_{\rright} \leftarrow (
                    \LDt{u}{\partial^* F(w_2)},
                    \LRt{u}{\partial^* F(w_2)},
                    \port(u,u^+_2),
                    \port(u^+_2,\twin(u^+_2)))$ \;
                 $\LR(u) \leftarrow \LR(u) \conc
                (L_\st,L_{\lleft},L_{\rright})$ \;
            }
            \routenc[]($F(x)$, $\LR(V)$) \;
        }
\end{algorithm}}

\subsection{Routing queries}
\label{sect_rout_queries}

Let $u$ and $v$ be two arbitrary vertices of a cube-free median graph $G$ and let $\LR(u)$ and $\LR(v)$ be their labels returned by the encoding
algorithm \routenc{}. We describe how  the routing algorithm \routdec{} can decide by which port to send the message from $u$ to $v$
to a neighbor of $u$ closer to $v$ than $u$.

\subsubsection{The algorithm}
\label{sect_rout_algo}

We continue with the formal description of the routing algorithm \routdec{}. 
The specific functions ensuring routing from panel to cone, from cone to panel, 
from cone to cone, or between \separated vertices
will be described in the next subsection.
$~$

\medskip
\scalebox{0.91}{\begin{algorithm}[H]
        \caption{\label{alg_routdec}\routdec[]($\LR(u)$, $\LR(v)$)}
        \Input{
            The labels $\LR(u)$ and $\LR(v)$ of two vertices $u$ and $v$ of $G$,
            where $u$ is the source and $v$ the target}
        \Output{$\port(u,v)$}

        \BlankLine
        \lIf{$\LR_0(u) = \LR_0(v)$ \tcc*[h]{$u = v$}}{
            \Return $0$
        }

        \BlankLine
        Let $i$ be the highest integer such that $\LR_i^{\st[\med]}(u) = 
        \LR_i^{\st[\med]}(v)$\;

        \BlankLine
        $d \leftarrow \distDecStar(\LR_i^{\st[\rootB]}(u),
        \LR_i^{\st[\rootB]}(v))$
        \tcp*{$\dist_G(u^\downarrow,v^\downarrow)$}
        $d_u \leftarrow \distDecStar(\LR_i^{\st[\rootB]}(u), 0)$
        \tcp*{$\dist_G(u^\downarrow,c)$}
        $d_v \leftarrow \distDecStar(\LR_i^{\st[\rootB]}(v), 0)$
        \tcp*{$\dist_G(v^\downarrow,c)$}

        \BlankLine
        \lIf{$d = 1$ and $d_u = 1$\hskip2em$~$}{
            \Return \RoutingNeighboringPtoC($\LR_i(u)$, $\LR_i(v)$)
        }
        \lIf{$d = 1$ and $d_v = 1$\hskip2em$~$}{
            \Return \RoutingNeighboringCtoP($\LR_i(u)$, $\LR_i(v)$)
        }
        \lIf{$d = 2$ and $d_u = d_v = 2$}{
            \Return \RoutingConetoCone($\LR_i(u)$, $\LR_i(v)$)
        }

        \Return \RoutingSeparated($\LR_i(u)$, $\LR_i(v)$, $\LR_0(u)$). \
\end{algorithm}}

\subsubsection{Description and functions}
\label{sect_rout_functions}

As for distance queries, the first thing to do in order to answer a routing
query from $u$ to $v$ is to detect if $u$ and $v$ are
\neighboring, \aNeighboring or \separated, and the type (\cone or \panel) of
the fibers containing them. This is done in the same way as explained in Section
\ref{sect_dist_queries}. Again, we assume that $i$ is the first step such that
$u$ and $v$ are no longer  \roommates. Denote by $c$ the median
vertex used at this step. We denote by $F(x)$ the \fiber containing $u$ and by
$F(y)$ the fiber  containing $v$ (recall that $x$ is the gate of $u$ in $\St(c)$ and $y$ is the gate of $v$ in $\St(c)$).

\medskip\noindent
If $u$ and $v$ are \neighboring, the answer is computed differently when the
source $u$ is in a \cone and when $u$ is in a \panel. If $u$ is in a \cone $F(x)$ (and thus $v$ is in a panel $F(y)$),
we use the function \RoutingNeighboringCtoP. This function determines which part ($\LR_i^\lleft(v)$ or
$\LR_i^\rright(v)$) of $\LR_i(v)$ contains the information about the gate $u^+$
of $u$ on $F(y)$. Then the function returns the
port $\port(u,u^+)$ to the gate $u^+$ of $u$ in $F(y)$, stored as
$\LR_i^{\lleft[\toGate]}(u)$ or $\LR_i^{\rright[\toGate]}(u)$.

\medskip
\scalebox{0.91}{\begin{myFunction}
        \newcommand{\dir}{\text{dir}}
        \Fn{\RoutingNeighboringCtoP{$\LR_i(u)$, $\LR_i(v)$}}{

            $\dir \leftarrow \rright$ \;
            \If{$\LR_i^{\st[\rootB]}(v) = \min\{i : i \in
                \LR_i^{\st[\rootB]}(u)\}$}{
                $\dir \leftarrow \lleft$ \;	
            }
            \Return $\LR_i^{\dir[\toGate]}(u)$. \
        }
\end{myFunction}}

\medskip\noindent
If $F(x)$ is a panel and $F(y)$ is a cone, then $u$ stored the distances to its two imprints $u_1$ and $u_2$ on the total boundary
$\partial^* F(x)$ and $v$ stored the distance to its gate $v^+$ in $F(x)$ ($v^+$ also belongs to $\partial^* F(x)$) and its twin $\twin(v^+)$ in $F(y)$.
When $u$ is different from $v^+$, the function \RoutingNeighboringPtoC finds the tree distance labeling
of $v^+$, computes $\min\{\dist_G(u,u_1) + \dist_T(u_1,v^+), \dist_G(u,u_2) +
\dist_T(u_2,v^+)\}$, and returns the port to the imprint  of $u$ minimizing the two distance sums.
If $u$ belong to the total boundary $\partial^* F(x)$, then we distinguish two cases. If $u=v^+$, then
using the label $\LR_i(v)$ of $v$ the algorithm returns the port from 
$\twin(v^+)$ to $v^+=u$.  If $u$ belongs to $\partial^* F(x)$ but $u\ne v^+$,
since $\LR_i(u)$ and $\LR_i(v)$ contain a labeling for routing in trees of $u$ 
and $v^+$, \RoutingNeighboringPtoC computes $\port(u,v^+)$ using the
routing decoder for trees and returns it.

\medskip
\scalebox{0.91}{\begin{myFunction}
        \newcommand{\dir}{\text{dir}}
        \Fn{\RoutingNeighboringPtoC{$\LR_i(u)$, $\LR_i(v)$}}{

            $\dir_v \leftarrow \rright$ \;
            \If{$\LR_i^{\st[\rootB]}(u) = \min\{i : i \in
                \LR_i^{\st[\rootB]}(v)\}$}{
                $\dir_v \leftarrow \lleft$ \;	
            }

            \BlankLine
            \If{$\LR_i^{\lleft[\toGate]}(u) = 0$ or
                $\LR_i^{\rright[\toGate]}(u) =
                0$ \tcc*[h]{$u$ is on the
                    border}}{
                Let $\dir_u \in \{\lleft,\rright\}$ be such that
                $\LR_i^{\dir_u[\toGate]}(u) = 0$ \;
                \If{$\LR_i^{\dir_u[\gateD]}(u) =
                    \LR_i^{\dir_v[\gateD]}(v)$}{
                    \Return $\LR_i^{\dir_v[\fromGate]}(v)$ \tcp*[l]{$u$ is the
                    gate of $v$ on the \panel $F(x)$}
                }
                \Return \routDecTree($\LR_i^{\dir_u[\gateR]}(u)$,
                $\LR_i^{\dir_v[\gateR]}(v)$) \;
            }

            \BlankLine
            $d_\lleft \leftarrow \LR_i^{\lleft[\distance]}(u) +
            \distDecTree(\LR_i^{\lleft[\gateD]}(u),
            \LR_i^{\dir_v[\gateD]}(v))$ \;
            $d_\rright \leftarrow \LR_i^{\rright[\distance]}(u) +
            \distDecTree(\LR_i^{\rright[\gateD]}(u),
            \LR_i^{\dir_v[\gateD]}(v))$ \;

            \BlankLine
            $\dir_u \leftarrow \lleft$ \;
            \If{$d_\rright = \min\{d_\lleft, d_\rright\}$}{
                $\dir_u \leftarrow \rright$ \tcp*[l]{$u_\rright$ is the gate of
                    $u$
                    on a shortest path to $v$}
            }
            \Return $\LR_i^{\dir_u[\toGate]}(u)$. \
        }
\end{myFunction}}

\medskip\noindent
If $u$ and $v$ are \aNeighboring, then $F(x)$ and $F(y)$ are cones and the
function  \RoutingConetoCone is similar to the function
\RoutingNeighboringCtoP. The common \panel $F(w)$ neighboring $F(x)$ and $F(y)$
can be found by inspecting
$\LR_i^{\st[\rootB]}(u)$ and $\LR_i^{\st[\rootB]}(v)$. As in the case of \RoutingNeighboringCtoP, the function \RoutingANeighboring returns the port
$\port(u,u^+)$ from $u$ to its gate $u^+$ in $F(w)$. 

\medskip
\scalebox{0.91}{\begin{myFunction}
        \newcommand{\dir}{\text{dir}}
        \Fn{\RoutingConetoCone{$\LR_i(u)$, $\LR_i(v)$}}{
            $\dir \leftarrow \rright$ \;
            \If{$\LR_i^{\st[\rootB]}(u) \cap
                \LR_i^{\st[\rootB]}(v) =
                \min\{i : i \in
                \LR_i^{\st[\rootB]}(u)\}$}{
                $\dir \leftarrow \lleft$ \;
            }

            \Return $\LR_i^{\dir[\toGate]}(u)$. \
        }
\end{myFunction}}

\medskip\noindent
Finally, if $u$ and $v$ are \separated, two cases have to be considered
depending of whether $u$ is the centroid $c$ or not. If $u$ is not the centroid, then $u$
stored $\port(u,c)$. Since a shortest path from $u$ to $v$ passes via $c$,
\RoutingSeparated returns $\port(u,c)$. If $u$ coincides with $c$, the port
$\port(c,v)$ is not stored in $\LR_i(u)$ but in
$\LR_i^{\st[\fromMed]}(v)$, and \RoutingSeparated returns it.

\medskip
\scalebox{0.91}{\begin{myFunction}
        \Fn{\RoutingSeparated{$\LR_i(u)$, $\LR_i(v)$, $\id(u)$}}{
            \If{$\LR_i^{\st[\med]}(v) = \id(u)$}{
                \Return $\LR_i^{\st[\fromMed]}(v)$ \;
            }
            \Return $\LR_i^{\st[\toMed]}(u)$. \
        }
\end{myFunction}}

\end{document}